\newcounter{constants}
\newcommand{\C}{\refstepcounter{constants}C_{\theconstants}}
\newcommand{\D}[1]{\addtocounter{constants}{-#1}C_{\theconstants}\addtocounter{constants}{#1}}
\newcommand{\CI}{\setcounter{constants}{0}\refstepcounter{constants}C_{\theconstants}}
\renewcommand{\Re}{\operatorname{Re}}
\renewcommand{\Im}{\operatorname{Im}}
\theoremstyle{plain}
\newtheorem{theorem}{Theorem}[section]
\newtheorem{proposition}[theorem]{Proposition}
\newtheorem{lemma}[theorem]{Lemma}
\newtheorem{corollary}[theorem]{Corollary}
\theoremstyle{definition}
\newtheorem{assumption}[theorem]{Assumption}
\theoremstyle{remark}
\newtheorem{remark}[theorem]{Remark}
\newtheorem{remarks}[theorem]{Remarks}
\numberwithin{equation}{section}
\title{Low-energy resolvent estimates for slowly decaying attractive potentials}
\author{Kenichi {\scshape Ito}\footnote{Department of Mathematics, Graduate School of Science, Kobe University,
1-1, Rokkodai, Nada-ku, Kobe 657-8501, Japan.
E-mail: \texttt{ito-ken@math.kobe-u.ac.jp}. 
}
\ and
Tomoya {\scshape Tagawa}\footnote{Graduate School of Mathematical Sciences, 
The University of Tokyo, 3-8-1 Komaba, Meguro-ku, Tokyo 153-8914, Japan.
E-mail: \texttt{tagawa-tomoya1212@g.ecc.u-tokyo.ac.jp}. 
}
}
\date{}
\begin{document}
\allowdisplaybreaks
\maketitle

\begin{abstract}
We discuss the low-energy resolvent estimates 
for the Schr\"odinger operator with slowly decaying attractive potential.
The main results are Rellich's theorem, the limiting absorption principle and Sommerfeld's uniqueness theorem.  
For the proofs, we employ an elementary commutator method due to Ito--Skibsted,
for which neither of microlocal or functional analytic techniques is required.
\end{abstract}

\medskip
\noindent\textit{Keywords}:
Schr\"odinger operator, Slowly decaying potentials, Low-energy resolvent estimates, 
Commutator method 

\medskip
\noindent\textit{Mathematics Subject Classification 2020}: Primary 81U05; Secondary 35J10

\medskip

\tableofcontents

\section{Introduction}\label{2411241}

In the present paper, we discuss the uniform low-energy resolvent estimates 
for the Schr\"odinger operator on $\mathbb R^d$ with $d\in \mathbb N=\{1,2,\ldots\}$: 
\begin{align*}
H=-\tfrac12\Delta+V+q
.
\end{align*}
The operator $\Delta$ is the ordinary Laplacian, and we shall often write it as 
\[
-\Delta=p^2=p_ip_i=p\cdot p,\quad p=-\mathrm i\nabla=-\mathrm i\partial.\]
Here and below, the Einstein summation convention is adopted without tensorial superscripts. 
The potential $V$ is \textit{slowly decaying} and \textit{attractive},
and $q$ is a perturbation of short-range type for $-\tfrac12\Delta+V$. 
Precise assumptions will be given soon below in Assumption~\ref{cond:230806}. 

The scattering theory for slowly decaying potentials, 
whether attractive or repulsive, was first studied by Yafaev \cite{Y}, 
and was developed by Nakamura \cite{Nakam} and Fournais--Skibsted \cite{FS} 
in connection with the threshold resonances. 
These approaches were even further sophisticated by Skibsted \cite{S}, 
where the full stationary scattering theory was carried out
with sharp estimates involving long-range perturbations. 
In addition, recently, Sussman~\cite{Su} has discussed it on an asymptotically Eulidean manifold. 
Our goal is very similar to that of \cite{S}, but we would like to relax the smoothness assumption
and simplify the proofs. 
In fact, almost all of these former results impose $C^\infty$ smoothness on the potential, 
since they are dependent on the pseudodifferential techniques. 
Recently, Ito--Skibsted \cite{IS} have developed simple commutator arguments 
to discuss the stationary scattering theory for strictly positive energies. 
We are going to follow their approach for the low energies including $0$. 

The setting of the paper is in part stronger than it should be, in that the principal part of the potential has 
spherical symmetry, 
and in that the perturbation is of short-range type, either of which were not assumed in \cite{S}. 
However, in the present paper we can relax smoothness to $C^2$ with a new technique from \cite{IS}. 
We expect that the paper could be extended to the setting of \cite{S}, 
if we employed a method of an even recent paper by Ito--Skibsted \cite{IS2}.
This would require much longer and more complicated preliminaries,
and for this reason, here we have decided to present only simple start-up arguments for the subject.
Hopefully, we could discuss such an extension elsewhere. 


If the leading part of the potential is spherically symmetric, 
one can extract precise information on the scattering matrix, see, e.g.,\ \cite{AK}. 
In particular,  Derezi\'nski--Skibsted \cite{DS1,DS2} and Frank \cite{F} 
computed its FIO expression at energy zero. 
The uniform resolvent estimates for slowly decaying potentials have an application to the Strichartz estimates, 
see \cite{M} and \cite{T} for the repulsive case. 
Mochizuki--Nakazawa \cite{MN2} also investigated slowly decaying potential in an exterior domain.

\subsection{Basic settings}

\subsubsection{Slowly decaying attractive potential}

Throughout the paper, we use notation $\langle x\rangle=(1+x^2)^{1/2}$.

\begin{assumption}\label{cond:230806}
Let $V\in C^2(\mathbb R^d)$ be spherically symmetric, 
and there exist $\nu,\epsilon\in (0,2)$ and $c,C>0$ such that for any $|\alpha|\le 2$ and $x\in\mathbb R^d$
\[
|\partial^\alpha V(x)|\le C\langle x\rangle^{-\nu-|\alpha|},\quad 
V(x)\le -c\langle x\rangle^{-\nu},\quad 
x\cdot (\nabla V(x))\le -(2-\epsilon)V(x).
\]
In addition, let $q\in L^\infty(\mathbb R^d)$, and there exist $\nu'\in (\nu,2]$ and $C'>0$ such that for any $x\in\mathbb R^d$ 
\[
|q(x)|\le C'\langle x\rangle^{-1-\nu'/2}
. 
\]
\end{assumption}

\begin{remarks}
\begin{enumerate}
\item
We shall often write simply
\[r=|x|,\quad \partial_r=r^{-1}x\cdot \nabla 
,\quad 
x\cdot \nabla V=r\partial_r V,
\]
and also abuse notation as $V(x)=V(r)$. 
\item
Assumption~\ref{cond:230806} is not directly comparable with those in the previous works \cite{Y,Nakam,FS,S},
but we would like to emphasize that our $V$ is only $C^2$. 
Such a relaxation is possible since we avoid the microlocal techniques. 

\item
The spherical symmetry of $V$ is more or less necessary if we finally aim at the stationary scattering theory, 
which we shall not discuss in the paper. 
In fact, it excludes logarithmic spirals in the classical orbits. 
See \cite{Y,S} for the related results. 
\cite{S} is much more general in this aspect, including long-range perturbations, although in the $C^\infty$ settings. 
On the other hand, such symmetry was unnecessary in \cite{Nakam,FS} since their goals are different. 

\item
Under Assumption~\ref{cond:230806}, $H$ is self-adjoint on the Hilbert space $\mathcal H=L^2(\mathbb R^d)$.
The self-adjoint realization is denoted by the same notation $H$.
\end{enumerate}
\end{remarks}

\subsubsection{Agmon--H\"ormander spaces}

We define an \textit{effective time}, or an \textit{escape function}, as 
\begin{align}
\tau(\lambda,x)=\int_0^ra(\lambda,s)^{-1}\,\mathrm ds;
\quad 
a(\lambda,r)=(2\max\{\lambda,0\}-2V(r))^{1/2}
\label{eq:230820}
\end{align}
for $(\lambda,x)\in \mathbb R \times\mathbb R^d$.
See Proposition~\ref{prop:23080720} for a motivation of these terminologies. 
Here we only remark that there exist $c,C>0$ such that 
\begin{align}
c\langle \lambda\rangle^{-1/2}r\langle r\rangle^{\nu/2}\le \tau(0,x)\le C\langle \lambda\rangle^{-1/2}r\langle r\rangle^{\nu/2}
,
\label{eq:23111912}
\end{align}
and that for any $\lambda_0>0$ there exist $c',C'>0$ such that 
\begin{align}
c'\langle \lambda\rangle^{-1/2}r\le \tau(\lambda,x) \le C'\langle 
\lambda\rangle^{-1/2}r\ \ \text{uniformly in }\lambda\ge \lambda_0. 
\label{eq:23111913}
\end{align}

Using the function $\tau$, we introduce the associated \emph{Agmon--H\"ormander spaces},
a variant of the Besov spaces, as 
\begin{align*}
\mathcal B(\lambda)&=
\bigl\{\psi\in L^2_{\mathrm{loc}};\ \|\psi\|_{\mathcal B(\lambda)}<\infty\bigr\},\quad 
\|\psi\|_{\mathcal B(\lambda)}=\sum_{n\in\mathbb N} 2^{n/2}\|1_n(\lambda)\psi\|_{\mathcal H},\\
\mathcal B^*(\lambda)&=
\bigl\{\psi\in L^2_{\mathrm{loc}};\ \|\psi\|_{\mathcal B^*(\lambda)}<\infty\bigr\},\quad 
\|\psi\|_{\mathcal B^*(\lambda)}=\sup_{n\in\mathbb N}2^{-n/2}\|1_n(\lambda)\psi\|_{\mathcal H},
\\
\mathcal B^*_0(\lambda)
&=
\Bigl\{\psi\in \mathcal B^*(\lambda);\  \lim_{n\to\infty}2^{-n/2}\|1_n(\lambda)\psi\|_{\mathcal H}=0\Bigr\}
,
\end{align*}
where we let 
\begin{align*}
\begin{split}
1_1(\lambda)&=1\bigl(\bigl\{x\in\mathbb R^d;\ \tau(\lambda,x)<2\bigr\}\bigr),\\
1_n(\lambda)&=1\bigl(\bigl\{x\in\mathbb R^d;\ 2^{n-1}\le \tau(\lambda,x)<2^n\bigr\}\bigr)
\ \  \text{for }n=2,3,\ldots 
\end{split}
\end{align*}
with $1(S)$ being the sharp characteristic function of a subset $S\subset \mathbb R^d$. 
Note by \eqref{eq:23111912} and \eqref{eq:23111913} that, if we define the \emph{weighted $L^2$ spaces} as  
\begin{equation*}
L^{2}_{s,\lambda}=\langle \tau \rangle^{-s}\mathcal H\ \ \text{for }s\in\mathbb R 
,
 \end{equation*} 
depending on $\lambda$ is through $\tau$, 
then for $\lambda \in \mathbb{R}$ and any $s>1/2$ 
\begin{equation*}
 L^2_{s,\lambda}\subsetneq \mathcal B(\lambda)\subsetneq L^2_{1/2,\lambda}
 \subsetneq \mathcal H
\subsetneq L^2_{-1/2,\lambda}\subsetneq \mathcal B^*_0(\lambda)\subsetneq \mathcal
 B^*(\lambda)\subsetneq L^2_{-s,\lambda}.
\end{equation*}
We also note, for any compact interval $I\subset (0,\infty)$, 
the spaces $\mathcal B(\lambda)$ are identical for all $\lambda\in I$ along with uniformly equivalent norms. 
The same is true for $\mathcal B^*(\lambda)$ with $\lambda\in I$.

\subsection{Main results}

Now we present a series of the main results of the paper. 

\subsubsection{Rellich's theorem}

We start with \textit{Rellich's theorem}, or absence of generalized eigenfunctions in $\mathcal B^*_0$. 
It is a basis of our theory and will be repeatedly referred to in the latter part of the paper. 

\begin{theorem}\label{thm:230607}
Suppose Assumption~\ref{cond:230806}. If $\lambda\ge 0$ and  $\phi\in \mathcal B^*_0(\lambda)$ satisfy 
\[(H-\lambda)\phi=0\ \ \text{in the distributional sense,}\]
then $\phi\equiv 0$. 
In particular, the self-adjoint realization of $H$ on $\mathcal H$ does not have nonnegative eigenvalues:
$\sigma_{\mathrm{pp}}(H)\cap [0,\infty)=\emptyset$. 
\end{theorem}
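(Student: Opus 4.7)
Following the commutator approach of Ito--Skibsted, I would combine the eigenvalue equation $(H-\lambda)\phi=0$ with a Mourre-type virial identity built from the escape function $\tau(\lambda,\cdot)$ to upgrade the weak decay $\phi\in\mathcal B^*_0(\lambda)$ to arbitrarily fast decay, and then invoke unique continuation to conclude $\phi\equiv 0$. As a preliminary, the elliptic equation $-\tfrac12\Delta\phi=(\lambda-V-q)\phi$ with $V,q\in L^\infty_{\mathrm{loc}}$ immediately gives $\phi\in H^2_{\mathrm{loc}}$, which legitimizes the formal manipulations below.

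\textbf{Virial step.} I would introduce a radial conjugate operator adapted to $\tau$, concretely of the form
\[
A=\tfrac12\bigl(a(\lambda,r)\,p_r+p_r\,a(\lambda,r)\bigr),\qquad p_r=-\mathrm i\bigl(\partial_r+\tfrac{d-1}{2r}\bigr),
\]
together with a one-parameter family of smooth cutoffs $\chi_R(\tau)$ supported in $\{\tau\le R\}$. Self-adjointness of $H$ and the eigenvalue equation give the identity $\Re\langle\phi,\mathrm i[H,\chi_R(\tau)A]\phi\rangle=0$. Expanding the commutator splits it into a principal coercive term, a short-range remainder produced by $q$, and a boundary piece supported where $\tau\sim R$. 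The principal term dominates a positive weighted norm of $\phi$: the assumption $V\le-c\langle r\rangle^{-\nu}$ guarantees $a>0$, while the virial inequality $x\cdot\nabla V\le-(2-\epsilon)V$ produces the Mourre-style lower bound $-2V-x\cdot\nabla V\ge-\epsilon V\ge c\epsilon\langle r\rangle^{-\nu}$ on the symbol.

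\textbf{Bootstrap and conclusion.} The short-range perturbation $q$ is absorbed by the coercive term because $\nu'>\nu$; the boundary piece is controlled by $\|1_n(\lambda)\phi\|_{\mathcal H}$ with $2^n\sim R$, which by the very definition of $\mathcal B^*_0(\lambda)$ tends to $0$ along a suitable dyadic sequence $R\to\infty$. Letting $R\to\infty$ therefore yields $\phi\in\mathcal B(\lambda)$, a genuine improvement over the hypothesis. Re-running the same scheme with weights $\langle\tau\rangle^s\chi_R(\tau)$ upgrades the conclusion to $\phi\in L^2_{s,\lambda}$ for every $s\ge 0$. A standard unique continuation theorem of Aronszajn/Carleman type, applied to the elliptic equation with $L^\infty_{\mathrm{loc}}$ coefficients, then forces $\phi\equiv 0$, and the statement $\sigma_{\mathrm{pp}}(H)\cap[0,\infty)=\emptyset$ is immediate since any $L^2$-eigenfunction lies automatically in $\mathcal B^*_0(\lambda)$.

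\textbf{Main obstacle.} The delicate case is the threshold $\lambda=0$: the coercive term only controls $\|\langle r\rangle^{-\nu/2}\phi\|_{\mathcal H}^2$ and degenerates at infinity, so there is no spectral gap available to absorb errors, and the bootstrap must be performed along the escape-time scale $\tau$ rather than the radial scale $r$. Choosing the cutoff $\chi_R(\tau)$ so that the $\chi_R'(\tau)$-boundary error is precisely dominated by the $\mathcal B^*_0$ hypothesis is the core technical check, and it relies crucially on the two-sided bound \eqref{eq:23111912} comparing $\tau$ with $r$ at zero energy.
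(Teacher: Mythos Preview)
Your outline captures the overall two-stage shape of the argument, but there is a genuine gap in the final step. You propose to bootstrap from $\phi\in\mathcal B^*_0(\lambda)$ to $\phi\in L^2_{s,\lambda}$ for every $s\ge 0$ and then invoke ``a standard unique continuation theorem of Aronszajn/Carleman type'' to conclude $\phi\equiv 0$. That inference is not valid: Aronszajn--Carleman unique continuation requires vanishing of infinite order at a finite point, or vanishing on an open set; mere polynomial decay at infinity in the $\tau$-scale says nothing of the sort. What is needed is a \emph{unique continuation from infinity} statement, and for that one must first upgrade the decay to super-exponential. This is exactly what the paper does: Proposition~\ref{prop:23082418} shows $\mathrm e^{\alpha S}\phi\in\mathcal B^*_0(\lambda)$ for every $\alpha\ge 0$, where $S=\int_0^r a\,\mathrm ds$ is the eikonal action, and then Proposition~\ref{prop:23082419} uses a second commutator estimate (Lemma~\ref{lem:23082716}, this time uniform in $\alpha\to\infty$) to force $\phi$ to vanish outside a compact set, after which ordinary unique continuation applies. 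Your scheme with polynomial weights $\langle\tau\rangle^s\chi_R(\tau)$ would never reach the exponential regime, so the argument as written cannot close.

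Two smaller points. First, your conjugate operator $A=\tfrac12(a\,p_r+p_r\,a)$ is not the one adapted to the $\tau$-geometry: the paper uses $A=\mathop{\mathrm{Re}}(a^{-1}p_r)=\mathrm i[H,\tau]$, and it is this choice that makes the key positivity $D^2\tau\ge 0$ of Proposition~\ref{prop:23080720} carry over to the quantum commutator. At the threshold $\lambda=0$ the factor $a\sim\langle r\rangle^{-\nu/2}$ versus $a^{-1}\sim\langle r\rangle^{\nu/2}$ is not a harmless normalization. Second, the weight actually used in the paper is not $\chi_R(\tau)$ but $\Theta=\chi_{m,n}\,\mathrm e^{2\theta}$ with $\theta=\alpha S+\beta\int_0^S(1+s/R)^{-1-\delta}\,\mathrm ds$, a Yosida-regularized exponential in $S$; the role of the parameter $\beta$ (rather than an incremental polynomial weight) is precisely to gain a fixed exponential factor at each step of the bootstrap.
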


\begin{remark}
For a positive spectral parameter $\lambda>0$, the result is already known under a more general assumption, see ,e.g.,\ \cite{IS}. 
For $\lambda=0$, a similar result can be deduced from \cite{S}, but the setting and the proof are different from ours.  
\end{remark}

\subsubsection{LAP bounds}

We next present the \textit{LAP (limiting absorption principle) bounds} for the resolvent
\[R(z)=(H-z)^{-1}\in\mathcal L(\mathcal H)\ \ \text{for }z\in\rho(H). \]
Let us introduce 
\begin{align}
A=\mathrm i [H,\tau]=\mathop{\mathrm{Re}}(a^{-1}p_r)
;\quad 
p_r=-\mathrm i\partial_r
,
\label{eq:2308243}
\end{align}
and 
\begin{align}
L=p_i\ell_{ij}p_j;\quad 
\ell_{ij}=\delta_{ij}-r^{-2}x_ix_j=r(\nabla^2r)_{ij}
,
\label{eq:23030414}
\end{align}
where $\delta$ is the Kronecker delta. 
Note that for $x\neq 0$ the matrix $\ell$ represents the orthogonal projection onto the spherical direction. 
Set for any $\rho>0$ and $\omega\in (0,\pi)$ 
\[
\Gamma_{\pm}(\rho,\omega)
=\bigl\{z\in\mathbb C;\ 0<|z|<\rho,\ 0<\pm\arg z<\omega\bigr\}, 
\]
respectively. 
In addition, for an operator $T$ on $\mathcal H$ we denote $\langle T\rangle_\psi=\langle\psi,T\psi\rangle_{\mathcal H}$.

\begin{theorem}\label{thm:2308271915}
Suppose Assumption~\ref{cond:230806}, and let $\rho>0$ and $\omega\in (0,\pi)$.
Then there exists $C>0$ such that for any $\phi=R(z)\psi$ with $z=\lambda \pm \mathrm i\mu \in \Gamma_{\pm}(\rho,\omega)$ and $\psi\in \mathcal B$
\[
\|\phi\|_{\mathcal B^*(\lambda)}
+\|a^{-1}p_r\phi\|_{\mathcal B^*(\lambda)}
+\bigl\langle \phi, a^{-2}\langle \tau\rangle^{-1} L\phi\bigr\rangle^{1/2} 
\le C\|\psi\|_{\mathcal B(\lambda)}
,
\]
respectively. In particular, 
the self-adjoint realization of $H$ on $\mathcal H$ does not have nonnegative singular continuous spectrum: 
$\sigma_{\mathrm{sc}}(H)\cap [0,\infty)=\emptyset$. 
\end{theorem}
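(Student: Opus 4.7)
The plan is to run a positive commutator argument in the style of \cite{IS}, adapted to low energies. Fix $z=\lambda\pm\mathrm i\mu\in \Gamma_{\pm}(\rho,\omega)$ and abbreviate $\tau=\tau(\lambda,\cdot)$. I would pick a nonnegative, nondecreasing $\chi\in C^\infty(\mathbb R)$ with $\chi(t)=0$ for $t\le 1$ and $\chi(t)=1$ for $t\ge 2$, and build at each dyadic scale $n\in\mathbb N$ a bounded self-adjoint propagation observable of the form
\[
\Phi_n=\tfrac12\bigl(\chi(\tau/2^n)\,A+A\,\chi(\tau/2^n)\bigr)+G_n(\tau),
\]
where $A=\operatorname{Re}(a^{-1}p_r)$ is the conjugate operator in \eqref{eq:2308243} and $G_n$ is an auxiliary nondecreasing weight inserted to symmetrize lower-order terms. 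The derivative $\chi'(\tau/2^n)$ localizes in the dyadic shell $\{2^n\le\tau\le 2^{n+1}\}$, and a supremum over $n$ is what will eventually produce the $\mathcal B^*(\lambda)$-norms on the left of the theorem.

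The main technical task is the pointwise commutator inequality
\[
\mathrm i[H,\Phi_n]\;\ge\; c\,2^{-n}\chi'(\tau/2^n)\bigl(a^{-2}p_r^2+a^{-2}\langle\tau\rangle^{-1}L+1\bigr)-E_n,
\]
with $E_n$ collecting errors of short-range type. The three positive pieces originate from: (i) $a^{-2}p_r^2$ — commuting $p_r$ past $\chi(\tau/2^n)\,a^{-1}$ in the kinetic term, using $\partial_r\tau=a^{-1}$; (ii) $a^{-2}\langle\tau\rangle^{-1}L$ — from the tangential part in the decomposition $p^2=p_r^2+r^{-2}L$, weighted by $\chi'(\tau/2^n)/(a^2\,2^n)$ and compared with $r^{-2}$ via \eqref{eq:23111912}--\eqref{eq:23111913}; (iii) the ``$1$'' term — from the virial-type positivity $x\cdot\nabla V\le-(2-\epsilon)V$, which yields $-\partial_r(a^{-2}V)\gtrsim \langle\tau\rangle^{-1}$ on the shell. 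The short-range perturbation $q$ and the commutator remainder from $G_n$ are placed into $E_n$ and absorbed thanks to the strict inequality $\nu'>\nu$, which exactly makes $\langle\tau\rangle^{-1-\delta}$-summability of $q$ hold in both regimes of \eqref{eq:23111912}--\eqref{eq:23111913}.

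Taking the expectation against $\phi=R(z)\psi$ and using $(H-z)\phi=\psi$ gives the identity
\[
\langle\mathrm i[H,\Phi_n]\rangle_\phi
=2\,\operatorname{Im}(z)\,\langle\Phi_n\rangle_\phi-2\,\operatorname{Im}\langle\psi,\Phi_n\phi\rangle.
\]
The second term on the right is bounded by $C\|\psi\|_{\mathcal B(\lambda)}\|\phi\|_{\mathcal B^*(\lambda)}$ once one notes that $\Phi_n\phi$ at scale $2^n$ involves only $a^{-1}p_r\phi$ and $\chi(\tau/2^n)\phi$, precisely the quantities the theorem estimates. The first term is handled via $\mu\|\phi\|_{\mathcal H}^2=|\operatorname{Im}\langle\psi,\phi\rangle|\le\|\psi\|_{\mathcal B(\lambda)}\|\phi\|_{\mathcal B^*(\lambda)}$ together with uniform boundedness of $\Phi_n/\langle\tau\rangle^0$ on each shell. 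A standard absorption argument, followed by summation over dyadic scales, upgrades the local inequality to the three-term bound of the theorem. Combined with Theorem~\ref{thm:230607}, the bound immediately implies $\sigma_{\mathrm{sc}}(H)\cap[0,\infty)=\emptyset$ via the standard criterion that $R(z)$ admits weak-$*$ boundary values $\mathcal B(\lambda)\to\mathcal B^*(\lambda)$ up to the real axis.

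The hard part will be obtaining constants \emph{uniform} in $z$ down to the threshold, in particular across $\lambda=0$. The scale of $\tau$ changes behavior between \eqref{eq:23111912} (with $\tau\sim r\langle r\rangle^{\nu/2}$ at $\lambda=0$) and \eqref{eq:23111913} (with $\tau\sim r$ for $\lambda$ bounded away from $0$), which forces a careful choice of $\chi$, $G_n$, and the absorption of $E_n$ so that the positivity constant $c$ in step (ii) above does not degenerate in the transition region $r\sim|V|^{-1/\nu}$. A secondary obstacle is working under only $C^2$ regularity of $V$: all commutator identities must stay at the first-order level and avoid iterated derivatives of $V$, which is precisely the point where the Ito--Skibsted technique supersedes the older pseudodifferential approach and makes the whole scheme accessible in this setting.
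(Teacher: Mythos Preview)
Your commutator scheme is in the right spirit, and the weight/conjugate-operator pair $(\tau,A)$ is exactly what the paper uses. But the sketch has a genuine gap at the step you call ``a standard absorption argument.'' A single positive-commutator inequality of the form you write cannot close directly: after you take the supremum over dyadic shells you will be left with a \emph{compact} error term localized in a fixed ball $\{\tau\le 2^{N_0}\}$. Such a term is unavoidable---it comes from the cutoff needed to regularize the origin singularity of $A$, and from the splitting of the short-range remainder $q$ into a far piece (absorbable for large $\tau$) and a near piece (not absorbable). Concretely, the paper's key estimate (Proposition~\ref{prop:23112315}) reads
\[
\|\theta'^{1/2}\phi\|^2+\|a^{-1}\theta'^{1/2}p_r\phi\|^2+\langle a^{-2}\langle\tau\rangle^{-1}\theta L\rangle_\phi
\le C\bigl(\cdots+R^{-1}\|\chi_n^{1/2}\phi\|^2\bigr),
\]
and the last term is exactly this local residue. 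It cannot be bounded by $\|\phi\|_{\mathcal B^*}^2$ with a small constant. The paper closes the loop by a \emph{contradiction argument}: assume $\|\phi_k\|_{\mathcal B^*}=1$, $\|\psi_k\|_{\mathcal B}\to 0$, extract a weak limit, use compactness to upgrade to a strong local limit, and then invoke Rellich's theorem (Theorem~\ref{thm:230607}) to conclude the limit vanishes, yielding a contradiction. In your outline Theorem~\ref{thm:230607} appears only at the very end for the $\sigma_{\mathrm{sc}}$ statement; in fact it is needed \emph{inside} the proof of the LAP bound itself, and without it the argument does not close.

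A second, smaller gap is the handling of the term $2\operatorname{Im}(z)\langle\Phi_n\rangle_\phi$. Your observable $\Phi_n$ contains $\chi(\tau/2^n)A$, a first-order operator, so $\langle\Phi_n\rangle_\phi$ is not controlled by $\mu\|\phi\|_{\mathcal H}^2$ alone; it involves $a^{-1}p_r\phi$, precisely one of the unknowns. The paper treats $\mp\mu\operatorname{Re}(A\Theta)$ by an iterative trick (see \eqref{eq:23111321}--\eqref{eq:2311202023}): it is rewritten in terms of $(H-\bar z)\tau(H-z)$ and $\operatorname{Re}(\gamma(H-z))$, each of which pairs cleanly with $\psi$. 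This is more delicate than the one-line bound you indicate, and without it the $\mu$-contribution is not uniformly controlled as $z$ approaches the real axis non-tangentially.
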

\begin{remarks}
\begin{enumerate}
\item
Theorem~\ref{thm:2308271915} provides a refinement of a part of the results from \cite{Nakam,FS},
while \cite{S} proved a similar result in a different setting with a different proof. 
\item
We can deduce uniform estimates in the high-energy regime $\lambda\to\infty$ by simple scaling arguments. 
We present it in Appendix~\ref{sec:231120} for completeness.
See also \cite{Nakam}. 
\end{enumerate}
\end{remarks}

\subsubsection{Radiation condition bounds and applications}

We next discuss the \textit{radiation condition bounds} and present their applications. 
We introduce an asymptotic complex phase $b$: For $z=\lambda \pm{i\mu} \in \mathbb{C} \backslash (-\infty,0]$,
\begin{equation}\label{eq:2312221350}
  b=b_{z}=\sqrt{2(z-V)} \mp{\mathrm i}\tfrac{\partial_{r}V}{4(z-V)}, 
\end{equation}
respectively.
Here we choose the branch of square root as $\Re \sqrt{w} >0$ for $w \in \mathbb{C} \backslash (-\infty,0].$ 
We also set for any $\rho \geq 0$
\[
\beta_{c,\rho}=\min \left\{\tfrac{2-\nu}{2(2+\nu)},\tfrac{\nu^{\prime}-\nu}{2+\nu}, \tfrac{2+3\epsilon}{8}\inf_{\lambda \in [0,\rho] } \left(\liminf_{|x| \to \infty }{\tau a r^{-1}}\right) \right\}.  \]
We note that $\beta_{c,\rho} >0$ for any $\rho >0$ by Lemma~\ref{lem:230925b}. 

\begin{theorem}\label{thm:2308271916}
  Suppose Assumption~\ref{cond:230806}, and let $\rho>0$ and $\omega\in (0,\pi)$.
  Then for all $\beta \in [0,\beta_{c,\rho})$, there exists $C>0$ such that for any $\phi=R(z)\psi$ with 
$z= \lambda \pm{i\mu} \in \Gamma_{\pm}(\rho,\omega)$ and $\psi\in \langle \tau \rangle^{-\beta}\mathcal B(\lambda)$
  \[
    \|\tau^{\beta} a^{-1}(p_{r} \mp{b_{z}})\phi\|_{\mathcal B^*(\lambda)}
    +\bigl\langle \phi, \tau^{2\beta} a^{-2}\langle \tau\rangle^{-1} L\phi\bigr\rangle^{1/2} 
    \le C\|\langle \tau \rangle^{\beta} \psi\|_{\mathcal B(\lambda)}.
  \]
\end{theorem}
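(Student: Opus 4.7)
The plan is to run a Mourre-type commutator argument with a $\tau^{2\beta}$-weighted propagation observable, refining the argument behind the LAP bound of Theorem~\ref{thm:2308271915}. The outgoing/incoming WKB ansatz $\phi\sim \mathrm{e}^{\pm\mathrm{i}\int_0^r b_z\,\mathrm{d}s}\cdot(\text{amplitude})$ isolates $a^{-1}(p_r\mp b_z)\phi$ as the natural radiation-condition quantity, so the observable is chosen so that its Heisenberg bracket against $H-z$ produces positivity in precisely this norm.

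\textbf{Observable and main identity.} I would take (symmetrized and regularized by a smooth cutoff $\chi_R(\tau)$ to be removed via $R\to\infty$)
\[
\Theta_\beta\;\sim\;\chi_R(\tau)\,\tau^{2\beta}\,(A\mp 1),\qquad A=\Re(a^{-1}p_r),
\]
as in \eqref{eq:2308243}. The shift $\mp 1$ is motivated by $\Re(a^{-1}b_z)\to 1$ as $|x|\to\infty$, uniformly in $\lambda\in[0,\rho]$ (by \eqref{eq:23111912}--\eqref{eq:23111913}), so that $(A\mp 1)\phi$ coincides with $a^{-1}(p_r\mp b_z)\phi$ to leading order. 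The identity
\[
\langle\phi,\mathrm{i}[H,\Theta_\beta]\phi\rangle=-2\mu\langle\phi,\Theta_\beta\phi\rangle-2\Im\langle\psi,\Theta_\beta\phi\rangle
\]
relates the commutator to the source: the first right-hand term has favorable sign and is discarded, while the second is estimated by Cauchy--Schwarz against $\|\langle\tau\rangle^\beta\psi\|_{\mathcal{B}(\lambda)}$ times a dual $\mathcal{B}^*(\lambda)$-norm that is either absorbed into the left-hand side or, at the $\beta=0$ level, controlled by Theorem~\ref{thm:2308271915} applied to $\langle\tau\rangle^\beta\psi$.

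\textbf{Commutator lower bound.} Decomposing
\[
\mathrm{i}[H,\Theta_\beta]=2\beta\tau^{2\beta-1}\,\mathrm{i}[H,\tau]\,(A\mp 1)+\tau^{2\beta}\,\mathrm{i}[H,A]+\text{(cutoff and lower-order terms)},
\]
and using $\mathrm{i}[H,\tau]=A$ from \eqref{eq:2308243}, the first piece becomes $2\beta\tau^{2\beta-1}A(A\mp 1)\sim\beta\tau^{2\beta-1}(A\mp 1)^2$ near $A\approx\pm 1$, and the Mourre bracket $\mathrm{i}[H,A]$, evaluated under the virial condition $x\cdot\nabla V\le-(2-\epsilon)V$ of Assumption~\ref{cond:230806}, gives the lower bound $\gtrsim a^{-2}(p_r^2+\langle\tau\rangle^{-1}L)$. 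Completing the square in $p_r\mp b_z$ then yields
\[
\langle\phi,\mathrm{i}[H,\Theta_\beta]\phi\rangle\ge c\,\|\tau^\beta a^{-1}(p_r\mp b_z)\phi\|_{\mathcal{B}^*(\lambda)}^2+c\,\langle\phi,\tau^{2\beta}a^{-2}\langle\tau\rangle^{-1}L\phi\rangle-\mathcal{R},
\]
where $\mathcal{R}$ collects the error terms. The three quantities inside the minimum defining $\beta_{c,\rho}$ correspond exactly to the three dominant sources in $\mathcal{R}$: (i) the short-range remainder $|q|\le C\langle r\rangle^{-1-\nu'/2}$ gives the threshold $(\nu'-\nu)/(2+\nu)$; (ii) the Riccati residual $b_z^2-\mathrm{i}b_z'-2(z-V)$, measuring how well $b_z$ WKB-solves the radial equation, gives $(2-\nu)/(2(2+\nu))$; (iii) the Mourre-virial remainder gives $\tfrac{2+3\epsilon}{8}\liminf\tau a r^{-1}$. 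For $\beta<\beta_{c,\rho}$ each error is small enough to be absorbed, and sending $R\to\infty$ closes the estimate.

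\textbf{Main obstacle.} The delicate point is the matching between $A\mp 1$ in the observable and the \emph{complex} phase $b_z=\sqrt{2(z-V)}\mp\mathrm{i}\partial_r V/(4(z-V))$ appearing in the target norm. The leading real part $\sqrt{2(z-V)}\sim a$ is captured by $A-1$, but the imaginary subleading correction $\mp\mathrm{i}\partial_r V/(4(z-V))$ must be absorbed either through a subleading modification of $\Theta_\beta$ (carrying an imaginary piece of order $\tau^{2\beta}\partial_r V/a^2$) or, equivalently, by invoking the Riccati identity to recast the residual as second-order small. Tracking this cancellation sharply enough to reach the optimal threshold $(2-\nu)/(2(2+\nu))$ rather than a weaker bound is the central technical point of the proof, and is the reason $b_z$ with its imaginary correction, rather than merely $a$, must appear in the statement.
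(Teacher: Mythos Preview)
Your overall strategy---a weighted propagation-observable argument---is the right one, but there are two genuine gaps and one misattribution that would prevent the proof from closing as written.

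\textbf{The observable.} The paper does not work with $\tau^{2\beta}(A\mp 1)$. Instead it takes the \emph{complex} conjugate operator $a^{-1}(B-b)$ with $B=\Re p_r$ and the full complex phase $b=b_z$ from \eqref{eq:2312221350}, and computes the distorted form $\Im\bigl((a^{-1}(B-b))^*\Theta^{2\beta}(H-z)\bigr)$, not a Heisenberg commutator. The key device is the Riccati factorization (Lemma~\ref{lem:2312221420})
\[
H-z=\tfrac12(B+b)(B-b)+\tfrac12 L+q_2,
\]
after which the principal term $\Im\bigl((a^{-1}(B-b))^*\Theta^{2\beta}(B+b)(B-b)\bigr)$ is automatically coercive in $a^{-1}(p_r-b)$, because the imaginary part produces $(a^{-1}(B-b))^*\bigl[\beta\Theta^{2\beta-1}\theta'+\Im b\cdot a\,\Theta^{2\beta}+\ldots\bigr](a^{-1}(B-b))$ and $\Im b\ge 0$ up to lower order for $z\in\Gamma_+$. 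Your ``main obstacle''---matching $(A\mp 1)$ to $(p_r-b)/a$---is thus circumvented entirely rather than confronted.

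\textbf{The gaps in your version.} First, with $\Theta_\beta=\tau^{2\beta}(A\mp 1)$ the term $\pm 2\mu\langle\phi,\Theta_\beta\phi\rangle$ has no definite sign: $(A\mp 1)$ is not sign-definite on outgoing/incoming states (you only know it is \emph{small}, not signed), so it cannot be discarded. Second, the step ``completing the square in $p_r\mp b_z$'' does not follow from the stated ingredients: $\mathrm i[H,A]$ yields positivity of the schematic form $p_r^* a^{-2}\tau^{-1}p_r + a^{-2}\tau^{-1}L$, but passing from that to $\tau^{2\beta-1}(a^{-1}(p_r-b))^*(a^{-1}(p_r-b))$ requires absorbing a cross term $\sim -2\Re(b)p_r\approx -2a\,p_r$ of size $O(1)$, for which there is no available positivity. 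The $2\beta\tau^{2\beta-1}A(A\mp 1)$ piece does not help either, since $A(A\mp 1)=(A\mp 1)^2\pm(A\mp 1)$ leaves an uncontrolled first-order term.

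\textbf{The third threshold.} Your attribution of $\tfrac{2+3\epsilon}{8}\liminf\tau a r^{-1}$ to a ``Mourre--virial remainder'' is off. In the paper it arises from the \emph{angular} term: the contribution $\tfrac12\Im\bigl((a^{-1}(B-b))^*\Theta^{2\beta}L\bigr)$ gives, after the virial bound on $\partial_rV$ and on $\Im b$, a coefficient $\bigl(1-\beta r a^{-1}\tau^{-1}-\tfrac38(2-\epsilon)\bigr)$ in front of the positive $L$-form; requiring this to stay positive is exactly $\beta<\tfrac{2+3\epsilon}{8}\,\tau a r^{-1}$. The $\beta$-loss comes from the radial derivative of $\Theta^{2\beta}$ eating into the $L$-positivity, not from a residual error term.
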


Theorems~\ref{thm:2308271915} and \ref{thm:2308271916} imply the \textit{LAP}, 
or existence of the limiting resolvents $R(\lambda\pm\mathrm i0)$ of the following form.

\begin{corollary}\label{cor:2312221102}
  Suppose Assumption~\ref{cond:230806}, and let $\rho>0$ and $\omega\in (0,\pi)$.
  For any $s>1/2$ and $\gamma \in (0, \min\{s-1/2,\beta_{c,\rho}\})$, 
  there exists $C>0$ such that for any $z,z^{\prime} \in \Gamma_{+}(\rho,\omega)$ or any $z,z^{\prime} \in \Gamma_{-}(\rho,\omega)$
  \begin{equation}\label{eq:2312221103}
    \begin{split}
      \|R(z)-R(z^{\prime})\|_{\mathcal{B}\left(L^{2}_{s,\min\{\Re{z},\Re{z^{\prime}}\}}, L^{2}_{-s,\min\{\Re{z},\Re{z^{\prime}}\}}\right)}
    &\leq C |z-z^{\prime}|^{\gamma}, \\
    \|a^{-1}p_{r}R(z)-a^{-1}p_{r}R(z^{\prime})\|_{\mathcal{B}\left(L^{2}_{s,\min\{\Re{z},\Re{z^{\prime}}\}}, L^{2}_{-s,\min\{\Re{z},\Re{z^{\prime}}\}}\right)} 
    &\leq C |z-z^{\prime}|^{\gamma}.
    \end{split}
  \end{equation}
  In particular, the operators $R(z)$ and $a^{-1}p_{r}R(z)$ attain uniform limits 
  as $z\in \Gamma_{\pm}(\rho,\omega) \rightarrow \lambda \in [0,\rho)$
  in the norm topology of $\mathcal{B}(L^{2}_{s,\lambda}, L^{2}_{-s,\lambda})$, which are denoted by
  \begin{equation}\label{eq:2312221104}
    \begin{split}
    R(\lambda\pm{\mathrm i0}) &= \lim_{z \in \Gamma_{\pm}(\rho,\omega) \rightarrow \lambda} R(z) ,\\
    a^{-1}p_{r} R(\lambda \pm{\mathrm i0} ) &= \lim_{z \in \Gamma_{\pm}(\rho,\omega) \rightarrow \lambda} a^{-1}p_{r}R(z),
    \end{split}
  \end{equation} 
  respectively. These limits $R(\lambda \pm{\mathrm i0})$ and $a^{-1}p_{r}R(\lambda \pm{\mathrm i0})$ belong to $\mathcal{B}(\mathcal{B}(\lambda),\mathcal{B}^{*}(\lambda))$.
\end{corollary}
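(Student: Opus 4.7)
The corollary splits naturally into the Hölder bound \eqref{eq:2312221103}, from which the existence of limits \eqref{eq:2312221104} follows by a Cauchy-net argument. Throughout I fix $z, z' \in \Gamma_\pm(\rho,\omega)$ in the same half-plane and set $\lambda_0 = \min\{\Re z, \Re z'\}$. The embeddings $L^2_{s,\lambda_0}\subset \mathcal{B}(\lambda_0)$ and $\mathcal{B}^*(\lambda_0)\subset L^2_{-s,\lambda_0}$ recorded in the introduction, combined with Theorem~\ref{thm:2308271915}, yield the uniform bounds
\[
\|R(z)\|_{\mathcal{B}(L^2_{s,\lambda_0}, L^2_{-s,\lambda_0})} + \|a^{-1}p_r R(z)\|_{\mathcal{B}(L^2_{s,\lambda_0}, L^2_{-s,\lambda_0})} \le C
\]
uniformly in $z$, and likewise with $z'$ replacing $z$; in particular the differences in \eqref{eq:2312221103} admit the trivial $|z-z'|^0$ bound, which the rest of the argument upgrades to $|z-z'|^\gamma$.

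For the Hölder estimate I apply the first resolvent identity $R(z) - R(z') = (z-z')R(z)R(z')$ together with a splitting argument at a scale $N$ to be optimized. Inserting the cutoff $\chi_N = 1_{\{\tau \le N\}}$ between the two resolvent factors, I split $R(z)R(z') = R(z)\chi_N R(z') + R(z)(1-\chi_N)R(z')$. For the near-field term, $\chi_N$ maps $L^2_{-s,\lambda_0}$ into $L^2_{s,\lambda_0}$ with operator norm at most $\langle N\rangle^{2s}$, so chaining two applications of Theorem~\ref{thm:2308271915} yields a contribution of order $|z-z'|\langle N\rangle^{2s}\|\psi\|_{L^2_s}$. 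For the far-field term, direct iteration of LAP fails because $R(z)$ does not map $L^2_{-s,\lambda_0}$ into $L^2_{-s,\lambda_0}$ uniformly in $z$; I test bilinearly against $\psi' \in L^2_{s,\lambda_0}$ and rewrite the pairing as $(z-z')\langle R(\bar z)\psi',(1-\chi_N)R(z')\psi\rangle_{\mathcal H}$. Since $z, z'$ lie in the same half-plane, $\bar z$ lies in the opposite one, so $R(\bar z)\psi'$ obeys the opposite radiation condition to $R(z')\psi$; the product of the two is highly oscillatory with phase $b_{z'}+\overline{b_z}$ read off from \eqref{eq:2312221350}. An integration by parts in the radial variable that uses the radiation condition bounds of Theorem~\ref{thm:2308271916} with parameter $\beta \in (\gamma, \beta_{c,\rho})$ converts this oscillation into a decay factor in the weight, together with a Lipschitz contribution arising from the phase mismatch $b_z - b_{z'} = O(|z-z'|)$. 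The far-field estimate is thus of order $(\langle N\rangle^{-\beta}+|z-z'|)\|\psi\|_{L^2_s}\|\psi'\|_{L^2_s}$. Optimizing $N \sim |z-z'|^{-\theta}$ between the competing near- and far-field terms delivers any Hölder exponent $\gamma < \min\{s - 1/2, \beta_{c,\rho}\}$. The analogous bound on $a^{-1}p_r R(z)$ is obtained by the same scheme, using the $a^{-1}p_r$-piece of Theorem~\ref{thm:2308271915} in place of the $L^2_{-s}$-bound on $R(z)$.

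With \eqref{eq:2312221103} established, $\{R(z)\}$ and $\{a^{-1}p_r R(z)\}$ are Cauchy nets as $z \to \lambda \in [0,\rho)$ in the norm of $\mathcal{B}(L^2_{s,\lambda}, L^2_{-s,\lambda})$, hence converge to limits $R(\lambda \pm \mathrm i 0)$ and $a^{-1}p_r R(\lambda \pm \mathrm i 0)$. To promote these limits to $\mathcal{B}(\mathcal{B}(\lambda), \mathcal{B}^*(\lambda))$, I approximate any $\psi \in \mathcal{B}(\lambda)$ by a sequence $\psi_n \in L^2_{s,\lambda}$ (which is dense by the tail-summability built into the definition of $\mathcal{B}(\lambda)$), apply Theorem~\ref{thm:2308271915} to $R(z)\psi_n$, and pass to the limit in $z$ and $n$ using the lower semicontinuity of the $\mathcal{B}^*$-norm. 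The main obstacle is the Hölder estimate in the second step: the failure of LAP to iterate in weighted $L^2$ forces reliance on the radiation condition, whose parameter ceiling $\beta_{c,\rho}$ transfers directly to the ceiling on the Hölder exponent.
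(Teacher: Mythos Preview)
Your decomposition inserts the cutoff between the two resolvent factors, and this is where the argument breaks. With the near-field bounded by $|z-z'|\langle N\rangle^{2s}$ and the far-field by $\langle N\rangle^{-\beta}+|z-z'|$, optimising in $N$ balances $|z-z'|N^{2s}$ against $N^{-\beta}$ and yields only the exponent $\beta/(2s+\beta)$, which is strictly smaller than $\min\{s-\tfrac12,\beta_{c,\rho}\}$ for every admissible $s>\tfrac12$ and $\beta<\beta_{c,\rho}$. Even replacing $N^{2s}$ by the sharper $\mathcal B^*\to\mathcal B$ norm $\|\chi_N\|\lesssim N$ one reaches only $\beta/(1+\beta)$. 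The composition $R(z)\chi_N R(z')$ simply costs too much, and no choice of $N$ recovers the loss; your claim that the optimisation ``delivers any H\"older exponent $\gamma<\min\{s-\tfrac12,\beta_{c,\rho}\}$'' is therefore unjustified.

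The paper avoids this by a different splitting. It writes $R(z)-R(z')$ as $\bigl[\chi_n R(z)\chi_n-\chi_n R(z')\chi_n\bigr]$ plus two tail pieces $R(\cdot)-\chi_n R(\cdot)\chi_n$, each carrying only a \emph{single} resolvent; LAP with the auxiliary exponent $s'=s-\gamma>\tfrac12$ then bounds the tails directly by $O(2^{-n\gamma})$ with no composition loss. For the localized difference the paper inserts $\chi_{n+1}$ and expands through $[H,\chi_{n+1}]$, which is close in spirit to your far-field step; but crucially the remaining $(z-z')\chi_{n+1}$ contribution is \emph{also} rewritten in terms of the radiation operators $a^{-1}(p_r\mp b)$, so that Theorem~\ref{thm:2308271916} improves it to $O(2^{n(1-\gamma)}|z-z'|)$ rather than $O(2^n|z-z'|)$. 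Balancing $2^{-n\gamma}$ against $2^{n(1-\gamma)}|z-z'|$ at $2^n\sim|z-z'|^{-1}$ then delivers the full exponent $\gamma$. In your scheme the near-field plays the role of this $(z-z')$ piece but receives no radiation-condition improvement, and that is the missing idea.
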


The radiation condition bounds extend to the limiting resolvents $R(\lambda\pm\mathrm i0)$. 
\begin{corollary}\label{cor:2312221105}
  Suppose Assumption~\ref{cond:230806}, and let $\rho>0$ and $\omega\in (0,\pi)$.
  Then for all $\beta \in [0,\beta_{c,\rho})$, there exists $C>0$ such that for any $\phi=R(\lambda\pm\mathrm i0)\psi$ with 
$\lambda \in[0,\rho)$ and $\psi \in \langle \tau \rangle^{-\beta} \mathcal B(\lambda)$
  \begin{align*}
    \|\tau^{\beta} a^{-1}(p_{r} \mp{b_{z}})\phi\|_{\mathcal B^*(\lambda)}
    +\bigl\langle \phi, \tau^{2\beta} a^{-2}\langle \tau\rangle^{-1} L\phi\bigr\rangle^{1/2} 
 \le C\|\langle \tau \rangle^{\beta} \psi\|_{\mathcal B(\lambda)}.
  \end{align*}
\end{corollary}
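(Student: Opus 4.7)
My plan is to derive the bound for the limiting resolvent by approximation: apply Theorem~\ref{thm:2308271916} to $R(z)\psi$ with $z = \lambda \pm \mathrm i\mu \in \Gamma_{\pm}(\rho,\omega)$, then pass to the limit $\mu\downarrow 0$ using lower semicontinuity of the $\mathcal B^{*}(\lambda)$ norm and of the non-negative quadratic form $\langle \cdot,\tau^{2\beta}a^{-2}\langle\tau\rangle^{-1}L\,\cdot\rangle$.

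First I would fix $\lambda \in [0,\rho)$ and $\psi \in \langle\tau\rangle^{-\beta}\mathcal B(\lambda)$, and choose a sequence $z_n = \lambda \pm \mathrm i\mu_n \in \Gamma_{\pm}(\rho,\omega)$ with $\mu_n \downarrow 0$. Set $\phi_n = R(z_n)\psi$ and $\phi = R(\lambda \pm \mathrm i 0)\psi$. Since $\Re z_n = \lambda$, Theorem~\ref{thm:2308271916} supplies the same target bound for each $\phi_n$, uniformly in $n$, with the weights and dyadic partition evaluated at $\lambda$. Corollary~\ref{cor:2312221102} then gives strong convergence $\phi_n \to \phi$ and $a^{-1}p_r \phi_n \to a^{-1}p_r\phi$ in $L^{2}_{-s,\lambda}$ for any $s>1/2$. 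Rewriting the equation as $-\tfrac12\Delta\phi_n = z_n\phi_n-(V+q)\phi_n+\psi$ and applying standard interior elliptic regularity upgrades this to strong convergence $\phi_n \to \phi$ in $H^{1}_{\mathrm{loc}}(\mathbb R^{d})$. Moreover, because $V \le -c\langle r\rangle^{-\nu}<0$, the quantity $z-V$ stays away from the branch cut of the square root for $z$ near $\lambda$, so the formula \eqref{eq:2312221350} shows $b_{z_n}\to b_\lambda$ uniformly on every compact set.

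Now I would pass to the limit. For each fixed $m\in\mathbb N$ the support of $1_m(\lambda)$ is bounded, so local $H^{1}$ convergence of $\phi_n$ combined with uniform convergence of $b_{z_n}$ on that support yields
\[
1_m(\lambda)\,\tau^{\beta}a^{-1}(p_r \mp b_{z_n})\phi_n \longrightarrow 1_m(\lambda)\,\tau^{\beta}a^{-1}(p_r \mp b_{\lambda})\phi
\quad\text{in }L^{2}.
\]
Taking $\liminf$ of the Theorem~\ref{thm:2308271916} bound controls $2^{-m/2}\|1_m(\lambda)\tau^{\beta}a^{-1}(p_r \mp b_\lambda)\phi\|_{\mathcal H}$ by $C\|\langle\tau\rangle^{\beta}\psi\|_{\mathcal B(\lambda)}$, and the supremum over $m$ delivers the radial term. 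The angular quadratic form term is handled analogously: on any ball $B_R$, $H^{1}_{\mathrm{loc}}$ convergence and Fatou-type lower semicontinuity of the positive quadratic form give
\[
\int_{B_R}\tau^{2\beta}a^{-2}\langle\tau\rangle^{-1}\ell_{ij}(p_j\phi)\overline{(p_i\phi)}\,\mathrm dx
\le \liminf_{n}\,\bigl\langle\phi_n,\tau^{2\beta}a^{-2}\langle\tau\rangle^{-1}L\phi_n\bigr\rangle,
\]
and letting $R\to\infty$ by monotone convergence closes the estimate.

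The step I expect to require the most care is the local $H^{1}$ convergence of $\phi_n$, since $\phi_n$ only lies a priori in a weighted space $L^{2}_{-s,\lambda}$; however, this is a routine interior elliptic regularity argument using the equation and the fact that $V+q$ is locally bounded. Once the local convergence is in hand, the remaining arithmetic with the dyadic partition and the branch cut of $b_z$ is straightforward and uses the attractivity hypothesis only through the pointwise bound $V<0$.
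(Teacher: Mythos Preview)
Your approach is correct and matches the paper's own proof, which is simply the one-line remark that the corollary follows from Theorem~\ref{thm:2308271916}, Corollary~\ref{cor:2312221102} and ``approximation arguments,'' together with the identity $\|\psi\|_{\mathcal B^*(\lambda)}=\sup_{n\ge 0}\|\chi_n\psi\|_{\mathcal B^*(\lambda)}$. You have spelled out what the paper leaves implicit (local $H^1$-convergence via elliptic regularity, pointwise convergence of $b_{z_n}$, and the Fatou/monotone-convergence step for the $L$-form), but the strategy is the same.
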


Finally we present Sommerfeld's uniqueness theorem, 
which characterizes the limiting resolvents $R(\lambda\pm\mathrm i0)$.
\begin{corollary}\label{cor:2312221107}
  Suppose Assumption~\ref{cond:230806}, and let $\lambda \geq 0$, $\phi \in L^{2}_{\mathrm{loc}}$ and $\psi \in \langle \tau \rangle^{-\beta} \mathcal{B}(\lambda)$ with $\beta \in [0,\beta_{c,\lambda})$
  Then $\phi = R(\lambda \pm{\mathrm i0}) \psi$ holds if and only if both of the following conditions hold:
  \begin{itemize}
    \item[$(\romannumeral1)$] $(H-\lambda) \phi = \psi$ in the distributional sense.
    \item[$(\romannumeral2)$] $\phi \in \langle \tau \rangle^{\beta} \mathcal{B}^{*}(\lambda)$ and $a^{-1}(p_{r}\mp{b_{\lambda}})\phi \in \langle \tau \rangle^{-\beta} \mathcal{B}^{*}_{0}(\lambda)$.
  \end{itemize} 
\end{corollary}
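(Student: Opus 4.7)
The plan splits into the two directions.

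The ``only if'' direction follows directly from previously established facts. Given $\phi=R(\lambda\pm\mathrm i0)\psi$, condition (i) is obtained by passing to the limit $z\to\lambda$ in the identity $(H-z)R(z)\psi=\psi$, which holds in the distributional sense and whose left-hand side converges thanks to Corollary~\ref{cor:2312221102}. Condition (ii) is essentially the content of Corollaries~\ref{cor:2312221102} and~\ref{cor:2312221105}: the former places $\phi$ in $\mathcal B^*(\lambda)\subset\langle\tau\rangle^\beta\mathcal B^*(\lambda)$, while the latter gives the radiation-condition bound for $a^{-1}(p_r\mp b_\lambda)\phi$ in $\langle\tau\rangle^{-\beta}\mathcal B^*(\lambda)$. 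The upgrade from $\mathcal B^*(\lambda)$ to $\mathcal B^*_0(\lambda)$ is then obtained by density, approximating $\psi$ by elements of $\langle\tau\rangle^{-\beta'}\mathcal B(\lambda)$ with $\beta'>\beta$ in the $\langle\tau\rangle^{-\beta}\mathcal B(\lambda)$ topology and using the continuity estimate of Corollary~\ref{cor:2312221105}.

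For the ``if'' direction, let $\phi$ satisfy (i) and (ii), set $\phi_0:=R(\lambda\pm\mathrm i0)\psi$, and define $\widetilde\phi:=\phi-\phi_0$. By the just-verified ``only if'' direction for $\phi_0$, the function $\widetilde\phi$ is a distributional solution of $(H-\lambda)\widetilde\phi=0$ and inherits
\[
\widetilde\phi\in\langle\tau\rangle^\beta\mathcal B^*(\lambda),\qquad
a^{-1}(p_r\mp b_\lambda)\widetilde\phi\in\langle\tau\rangle^{-\beta}\mathcal B^*_0(\lambda).
\]
If one can show that these two conditions already force $\widetilde\phi\in\mathcal B^*_0(\lambda)$, then Rellich's theorem~\ref{thm:230607} yields $\widetilde\phi\equiv 0$, hence $\phi=\phi_0=R(\lambda\pm\mathrm i0)\psi$, completing the proof.

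The decay improvement $\widetilde\phi\in\langle\tau\rangle^\beta\mathcal B^*(\lambda)\Rightarrow\widetilde\phi\in\mathcal B^*_0(\lambda)$ is the step I expect to be the main obstacle. My plan is to exploit the factorization of $H-\lambda$ encoded in the definition of $b_\lambda$: modulo the spherical part $L$ and short-range errors absorbed by $q$, one has $H-\lambda\simeq\tfrac12(p_r\mp b_\lambda)^{*}(p_r\pm b_\lambda)+L/2$ on the homogeneous equation. Testing this identity against $\chi(\tau/R)\widetilde\phi$ with a smooth cutoff $\chi$ and letting $R\to\infty$, the terms involving $(p_r\mp b_\lambda)\widetilde\phi$ are controlled (and vanish in the limit) by the radiation-condition hypothesis, while the sign $\Re b_\lambda>0$ gives a definite sign to the remaining boundary-flux terms in $\tau$. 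This forces the $\tau^\beta$-weighted mass of $\widetilde\phi$ at infinity to be absorbed by an arbitrarily small quantity, upgrading $\widetilde\phi$ from $\langle\tau\rangle^\beta\mathcal B^*$ to $\mathcal B^*_0$. The delicate bookkeeping lies in propagating the weight $\tau^\beta$ through the commutator identity while $\lambda$ is allowed to be zero and $b_\lambda$ is genuinely complex; once this is handled, Theorem~\ref{thm:230607} closes the argument.
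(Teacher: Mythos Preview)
Your proposal is correct and follows the paper's approach: the ``only if'' direction comes from Corollaries~\ref{cor:2312221102} and~\ref{cor:2312221105} (with the density upgrade you describe), and the ``if'' direction reduces to showing $\widetilde\phi\in\mathcal B^*_0(\lambda)$ via a boundary-flux argument, after which Theorem~\ref{thm:230607} concludes. The paper executes the flux step more simply than your sketch: rather than the full factorization with $L$, one takes the imaginary part of $\langle\widetilde\phi,\chi_m(H-\lambda)\widetilde\phi\rangle=0$, which yields directly
\[
\bigl\langle \Re(a^{-1}b_\lambda)\,\bar\chi'_m\bigr\rangle_{\widetilde\phi}
=\Re\bigl\langle \chi'_m\, a^{-1}(p_r-b_\lambda)\bigr\rangle_{\widetilde\phi};
\]
the right-hand side tends to $0$ as $m\to\infty$ by pairing $\widetilde\phi\in\langle\tau\rangle^\beta\mathcal B^*(\lambda)$ against $a^{-1}(p_r-b_\lambda)\widetilde\phi\in\langle\tau\rangle^{-\beta}\mathcal B^*_0(\lambda)$, and since $\Re(a^{-1}b_\lambda)$ is bounded below on $\operatorname{supp}\bar\chi'_m$ this gives $\widetilde\phi\in\mathcal B^*_0(\lambda)$---no $L$-term, short-range remainder, or weight propagation survives the imaginary part, so the bookkeeping you anticipate is unnecessary.
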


The paper is organized as follows. 
Section~\ref{sec:23082719} is devoted to the analysis of the corresponding classical mechanics,
which motivates our proofs. 
We prove Theorem~\ref{thm:230607} in Section~\ref{2411243}, 
Theorem~\ref{thm:2308271915} in Section~\ref{2411244}, 
and Theorem~\ref{thm:2308271916} and Corollaries~\ref{cor:2312221102}, \ref{cor:2312221105} and \ref{cor:2312221107}
in Section~\ref{2411245}. 
We discuss the uniform resolvent bounds for high energies in Appendix~\ref{sec:231120}.

\section{Classical mechanics}\label{sec:23082719}

In this section, we study the classical mechanics for the classical Hamiltonian 
\begin{align}
H(x,p)=\tfrac12p^2+V(x). 
\label{eq:23080719}
\end{align}
We hope, even though we use the same notation as in the quantum mechanics, there would be no confusion. 
Here we try to understand rolls of the effective time 
from a classical viewpoint. 
The results of the section are in fact not necessary for our purpose,
but would provide good motivations for the later arguments. 

For the arguments of the section, we can slightly relax the conditions on $V$. 

\begin{assumption}\label{cond:230806b}
Let $V\in C^1(\mathbb R^d)$ be spherically symmetric, 
and there exist $\nu,\epsilon\in (0,2)$ and $c>0$ such that for any $x\in\mathbb R^d$ 
\[
V(x)\le -c\langle x\rangle^{-\nu},\quad 
x\cdot (\nabla V(x))\le -(2-\epsilon)V(x). 
\]
\end{assumption}

Recall the Hamilton equations associated with \eqref{eq:23080719} are given by 
\[
\dot x= p,\quad \dot p=-\nabla V
.
\]
A classical orbit $(x,p)=(x(t),p(t))$ is \textit{forward/backward non-trapped} if 
\[
\lim_{t\to\pm\infty}|x(t)|=\infty,
\]
respectively.

\subsection{Effective time}

Here we discuss the order of scattering of the  classical particles. 
Let us start with a rough estimation. 
Let $(x(t),p(t))$ be a forward/backward non-trapped classical orbit of energy $\lambda=0$,
and suppose as $t\to\pm\infty$ 
\[x(t)\approx |t|^{\alpha},\quad p(t)\approx |t|^{\alpha-1}.\]
Then by  the law of conservation of energy, 
\[|t|^{2\alpha-2}\approx p(t)^2=-2V(x(t))\gtrsim |x(x)|^{-\nu}\approx |t|^{-\nu\alpha},\]
and this implies we should have $\alpha\ge 2/(2+\nu)$. 
Hence,
\[|x(x)|^{1+\nu/2}\gtrsim |t|\ \ \text{as }t\to\pm\infty,\]
and (the lower bound of) $x^{1+\nu/2}$ should play a roll of an \textit{effective time} for energy $\lambda=0$.  
We actually chose a more involved function $\tau$ from \eqref{eq:230820} for all the energy $\lambda\ge 0$,
and we can verify the corresponding property as follows.

\begin{proposition}\label{prop:23080720}
Suppose Assumption~\ref{cond:230806b}. 
If $(x(t),p(t))$ is a classical orbit of energy $\lambda\ge 0$ with 
\[
\pm p_r(0):=\pm |x(0)|^{-1}x(0)\cdot p(0)>0,\]
then for any $\pm t\ge 0$
\begin{align*}
\tau(\lambda,x(t))\ge  a(\lambda,x(0))^{-1} p_r(0)t+\tau(\lambda,x(0))
,
\end{align*}
respectively. 
\end{proposition}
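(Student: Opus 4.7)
The plan is to show that the quantity $a^{-1}p_r$ is monotone non-decreasing along the orbit; once this is established, the inequality follows by integration in both sign cases simultaneously. The intuition is that $\tau$ is designed precisely so that $\dot\tau = a^{-1}p_r$ is the (classical) ``normalized radial velocity'' along an orbit, and the virial-type hypothesis on $V$ makes this velocity a Lyapunov-like quantity.

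First, since $\tau(\lambda,x)$ depends on $x$ only through $r=|x|$, the chain rule together with $\dot r = r^{-1} x\cdot \dot x = r^{-1} x\cdot p = p_r$ gives
\[
\frac{d}{dt}\tau(\lambda,x(t)) = a(\lambda,r(t))^{-1} p_r(t).
\]
Next, I would differentiate $a^{-1}p_r$ along the orbit. Using Hamilton's equations $\dot x=p$, $\dot p=-\nabla V$ and spherical symmetry of $V$, one computes
\[
\dot p_r = r^{-1}(p^2-p_r^2) - \partial_r V,\qquad \dot a = -a^{-1}(\partial_r V)\, p_r,
\]
and, combining these with energy conservation $p^2 = 2\lambda-2V = a^2$ (so that $p^2-p_r^2 = a^2-p_r^2 \ge 0$), a short algebraic manipulation produces the identity
\[
\frac{d}{dt}\bigl(a^{-1}p_r\bigr) = \frac{a^2-p_r^2}{a^3\, r}\bigl(a^2 - r\partial_r V\bigr).
\]

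Then I would invoke Assumption~\ref{cond:230806b}. Since $V\le -c\langle r\rangle^{-\nu}<0$ and $r\partial_r V \le -(2-\epsilon)V$, one obtains
\[
a^2-r\partial_r V \;\ge\; 2\lambda-2V + (2-\epsilon)V \;=\; 2\lambda -\epsilon V \;>\;0,
\]
using also $\lambda\ge 0$. Hence $\frac{d}{dt}(a^{-1}p_r)\ge 0$, i.e., $a^{-1}p_r$ is non-decreasing along the orbit.

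Finally, I split into the two sign cases. For case $+$, where $p_r(0)>0$ and $t\ge 0$, monotonicity yields $(a^{-1}p_r)(s)\ge a(\lambda,x(0))^{-1}p_r(0)$ for all $s\in[0,t]$, and integrating $\dot\tau=a^{-1}p_r$ from $0$ to $t$ gives the stated bound. For case $-$, where $p_r(0)<0$ and $t\le 0$, monotonicity gives the reverse inequality $(a^{-1}p_r)(s)\le a(\lambda,x(0))^{-1}p_r(0)$ on $[t,0]$; integrating $\dot\tau=a^{-1}p_r$ over $[t,0]$ produces exactly $\tau(\lambda,x(t))-\tau(\lambda,x(0))\ge a(\lambda,x(0))^{-1}p_r(0)\,t$, as required. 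The only non-routine step is the identity for $\frac{d}{dt}(a^{-1}p_r)$ in Step 2; once the factorization $a^2-r\partial_r V$ is visible, the virial condition does the rest essentially for free.
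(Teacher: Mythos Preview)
Your proof is correct and follows essentially the same approach as the paper: both reduce the claim to showing $\tfrac{d^2}{dt^2}\tau = \tfrac{d}{dt}(a^{-1}p_r)\ge 0$ via the virial hypothesis, then integrate. The only cosmetic difference is that the paper carries out the second-derivative computation using Poisson brackets and the Hessian $\nabla^2 r$ (to foreshadow the quantum commutator arguments later), whereas you obtain the cleaner explicit factorization $\tfrac{d}{dt}(a^{-1}p_r)=\tfrac{a^2-p_r^2}{a^3 r}(a^2-r\partial_r V)$ directly.
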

\begin{proof}
Since 
\[
\tfrac{\mathrm d}{\mathrm dt}\tau(\lambda,x(t))=a(\lambda,x(t))^{-1} p_r(t);\quad 
p_r=(\nabla r)\cdot p=r^{-1}x\cdot p,\]
it suffices to show that 
\[
\tfrac{\mathrm d^2}{\mathrm dt^2}\tau(\lambda,x(t))\ge 0.
\]
Below we would like to avoid explicit $t$-derivatives to motivate the later stationary approach to the quantum mechanics. 
For that, we employ the Poisson brackets.
Let 
\begin{align}
D=\tfrac{\mathrm d}{\mathrm dt}=\{H,\cdot\},
\label{eq:230824}
\end{align}
and introduce a classical observable $A$ as 
\[
A=D\tau=\{H,\tau\}=  a^{-1}p_r
,
\]
see also \eqref{eq:2308243}. Then we compute 
\begin{align*}
D^2\tau
=\{H,A\}
&=
 a^{-3}(\partial_rV)p_r^2
+ a^{-1}p\cdot  (\nabla^2 r) p
- a^{-1}(\partial_r V) 
.
\end{align*}
Using Assumption~\ref{cond:230806}, the identity $\nabla r\otimes \nabla r+r(\nabla^2r)=\delta$, 
the law of conservation of energy and the convexity $\nabla^2 r\ge 0$, we can proceed as  
\begin{align*}
D^2\tau
&\ge 
  a^{-3}\bigl[
(\partial_rV)p_r^2
+(\partial_rV)p\cdot  (r\nabla^2 r) p
+(2\lambda-\epsilon V)p\cdot  (\nabla^2 r) p
-a^2 (\partial_r V) 
\bigr]
\\&
=
   a^{-3}\bigl[(2\lambda-\epsilon V)p\cdot  (\nabla^2 r) p
+2(\partial_rV)(H-\lambda)\bigr]
\\&
\ge 0.
\end{align*}
Hence, we are done. 
\end{proof}
To prove Proposition \ref{prop:23080720}, we show that the Poisson bracket between $H$ and $A$ is nonnegative.
In order to establish the main results, we analyze the quantization of the Poisson bracket between $H$ and $A$.

\section{Rellich's theorem}\label{2411243}

\subsection{Main propositions}

In this section, we prove Rellich's theorem, or Theorem~\ref{thm:230607}. 
The proof has two main steps, a priori super-exponential decay estimate 
and absence of super-exponentially decaying eigenfunctions. 
Their precise statements are the following. Throughout the section, we assume Assumption~\ref{cond:230806}.

\begin{proposition}\label{prop:23082418}
If $\phi\in \mathcal B^*_0(\lambda)$ with $\lambda\ge 0$ satisfies 
\[(H-\lambda)\phi=0\ \ \text{in the distributional sense,} \]
then $\mathrm e^{\alpha S}\phi\in \mathcal B^*_0(\lambda)$ for any $\alpha\ge 0$,
\end{proposition}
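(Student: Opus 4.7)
The plan is to run the Ito--Skibsted elementary commutator method with a regularized exponential weight and to bootstrap in $\alpha$. The quantum transcription of the classical convexity $D^2\tau \ge 0$ established in Proposition~\ref{prop:23080720} should yield a positive commutator estimate for $\mathrm{i}[H, A]$ with $A = \mathrm{Re}(a^{-1}p_r)$, and inserting a bounded regularization of $e^{\alpha S}$ into the resulting identity should feed back into a super-exponential decay bound on $\phi$.

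Concretely, fix $\alpha \ge 0$ and introduce smooth weights $\theta_R$ with $\theta_R = \alpha S$ on $\{S \le R\}$, $\theta_R$ constant on $\{S \ge 2R\}$, and $0 \le \theta_R' \le \alpha$, so that $e^{\theta_R}$ is bounded and $e^{\theta_R}\phi \in \mathcal B^*_0(\lambda)$ holds trivially for each $R$; the substance is to obtain bounds uniform in $R$. Using $(H-\lambda)\phi = 0$ distributionally together with $\phi \in H^2_\mathrm{loc}$ (elliptic regularity), the identity
\[
0 = 2\,\mathrm{Re}\,\bigl\langle \chi_n^2\, e^{\theta_R} A\, e^{\theta_R}\phi,\; (H-\lambda)\phi \bigr\rangle,
\]
with $\chi_n$ a smooth cutoff localizing to $\{\tau < 2^n\}$, can be rearranged into
\[
0 = \bigl\langle e^{\theta_R}\chi_n\phi,\; \mathcal{Q}_{R,n}\, e^{\theta_R}\chi_n\phi\bigr\rangle,
\]
where $\mathcal{Q}_{R,n}$ is the sum of the commutator $\chi_n\,\mathrm{i}[H, A]\,\chi_n$, a term proportional to $\theta_R'$ generated by the exponential conjugation, and errors supported in the annulus $\{2^{n-1}\le\tau<2^{n+1}\}$ or controlled by the short-range perturbation $q$. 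The key estimate $\mathcal{Q}_{R,n} \ge c\bigl(A^2 + a^{-2}\langle\tau\rangle^{-1} L\bigr)$ modulo absorbable error, uniform in $R$ and locally uniform in $\alpha$, is then obtained by the direct operator-level transcription of the computation in the proof of Proposition~\ref{prop:23080720}, using the attractive bound $x\cdot\nabla V \le -(2-\epsilon)V$ and the convexity $\nabla^2 r \ge 0$.

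Passing to $n \to \infty$ kills the annular error since $\phi \in \mathcal{B}^*_0(\lambda)$, and $R \to \infty$ then yields $\|e^{\alpha S}\phi\|_{\mathcal{B}^*(\lambda)} \le C_\alpha$ by monotone convergence; a standard tail argument upgrades membership from $\mathcal{B}^*(\lambda)$ to $\mathcal{B}^*_0(\lambda)$. Since $C_\alpha$ can be taken locally bounded, the set $\{\alpha \ge 0 : e^{\alpha S}\phi \in \mathcal{B}^*_0(\lambda)\}$ is both relatively closed and relatively open in $[0,\infty)$, hence equal to $[0,\infty)$. The main obstacle I anticipate is establishing the positive commutator estimate in its conjugated form: the extra term generated by $e^{\theta_R}$ must have the right sign, and every manipulation has to be carried out at the operator level with $V$ only of class $C^2$, without access to the pseudodifferential symbol calculus used in \cite{S}.
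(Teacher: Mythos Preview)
Your overall commutator scheme matches the paper's, but the regularization and bootstrap you describe have a real gap.

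Conjugation by $e^{\theta_R}$ produces terms beyond the bare commutator $\mathrm i[H,A]$ that are \emph{not} all of one sign: schematically one picks up a cubic term $-a^2(\theta_R')^3\Theta$ and second-derivative contributions in $\theta_R''$ (trace the passage from \eqref{eq:23030321} to \eqref{eq:23030322}). With your hard cutoff, $\theta_R'=\alpha$ on the bulk and $|\theta_R''|\sim\alpha/R$ in the transition shell $\{R\le S\le 2R\}$. Thus neither of the paper's two mechanisms applies: Lemma~\ref{lem:2308271} needs the variable part of $\theta'$ to be \emph{small} (a fixed small $\beta$), while Lemma~\ref{lem:23082716} needs $\theta'\equiv\alpha$ globally with no transition. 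Your asserted lower bound, uniform in $R$ and locally in $\alpha$, is therefore not a mere transcription of Proposition~\ref{prop:23080720}; it is the heart of the matter, and with your weight it is not clear it holds.

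The paper's device is to \emph{separate} the weight into an unregularized known part and a regularized small gain: $\theta=\alpha S+\beta\int_0^S(1+s/R)^{-1-\delta}\,\mathrm ds$ with a fixed small $\beta>0$, and $\alpha$ taken strictly below $\alpha_0=\sup\{\alpha:e^{\alpha S}\phi\in\mathcal B_0^*(\lambda)\}$. The far-annulus error then carries $e^{2\alpha S}$ (times an $R$-dependent constant from the bounded Yosida factor) and is killed as $n\to\infty$ by the \emph{already-known} $e^{\alpha S}\phi\in\mathcal B_0^*(\lambda)$, not merely by $\phi\in\mathcal B_0^*(\lambda)$. Smallness of $\beta$ makes the $\theta'$-cubic and $\theta''$ errors absorbable, and Lemma~\ref{lem:2308271} delivers a positive \emph{multiplication} term $c\,r^{-1}a\,\theta_0^{-\delta}\Theta$; letting $R\to\infty$ yields $r^{-1/2}a^{1/2}e^{(\alpha+\beta)S}\phi\in L^2$ on $\{\tau\ge 2^m\}$, contradicting $\alpha+\beta>\alpha_0$. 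Your open/closed argument misses this structure: the openness step must exploit the decay already established, and for that the unregularized $e^{\alpha S}$ has to sit inside the weight while only the gain is mollified.
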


\begin{proposition}\label{prop:23082419}
If $\phi\in \mathcal B^*_0(\lambda)$ with $\lambda\ge 0$ satisfies 
\begin{enumerate}
\item
$(H-\lambda)\phi=0$ in the distributional sense,
\item
$\mathrm e^{\alpha S}\phi\in \mathcal B^*_0(\lambda)$ for any $\alpha\ge 0$,
\end{enumerate}
then $\phi\equiv 0$.  
\end{proposition}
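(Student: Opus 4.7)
The plan is a virial-type commutator argument for the conjugated operator $H_\alpha := e^{\alpha S}He^{-\alpha S}$, combined with the quantum analog of the classical convexity $D^2\tau \ge 0$ from Proposition~\ref{prop:23080720}. The hypothesis that $e^{\alpha S}\phi \in \mathcal{B}^*_0(\lambda)$ holds for \emph{every} $\alpha\ge 0$ allows us to work with $H_\alpha$ at arbitrarily large $\alpha$ and thereby amplify the positivity until it forces $\phi$ to vanish.

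A direct computation yields
\[
H_\alpha = \bigl(H - \tfrac{\alpha^2}{2}|\nabla S|^2\bigr) + i\alpha\operatorname{Re}(\nabla S\cdot p),
\]
with both bracketed operators self-adjoint, so $\phi_\alpha := e^{\alpha S}\phi$ satisfies $(H_\alpha - \lambda)\phi_\alpha = 0$. Testing this equation against $A\phi_\alpha$ (with $A$ as in \eqref{eq:2308243}), combining with its complex conjugate, and regularizing via smooth cutoffs on the Agmon--H\"ormander annuli yields the virial-type identity
\[
\bigl\langle\phi_\alpha, i[H - \tfrac{\alpha^2}{2}|\nabla S|^2, A]\phi_\alpha\bigr\rangle = -\alpha\bigl\langle\phi_\alpha, \{A, \operatorname{Re}(\nabla S\cdot p)\}\phi_\alpha\bigr\rangle.
\]

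The key step is to establish, by quantizing the computation in the proof of Proposition~\ref{prop:23080720}, a Mourre-type lower bound for $i[H, A]$ by (a positive multiple of) $a^{-3}(2\lambda - \epsilon V)\,p\cdot(\nabla^2 r)p$ modulo errors controlled by Assumption~\ref{cond:230806}. A parallel explicit computation for the additional terms $i[|\nabla S|^2, A]$ on the left and $\{A, \operatorname{Re}(\nabla S\cdot p)\}$ on the right shows, with $S$ chosen so that $\operatorname{Re}(\nabla S \cdot p)$ aligns with $A$ (essentially $S = \tau$), that they together produce a net positive contribution of order $\alpha^2 a^{-5}\partial_r V + \alpha\|A\phi_\alpha\|^2$. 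Since each term in the resulting identity is non-negative modulo integrable errors while their sum is bounded by the right-hand side, for $\alpha$ sufficiently large the $\alpha^2$-piece dominates the errors and forces $\phi_\alpha \equiv 0$ on the region where $\partial_r V > 0$; by Assumption~\ref{cond:230806} this covers the exterior of a fixed compact set. Strong unique continuation, applicable since $V,q\in L^\infty_{\mathrm{loc}}$ and $H$ is second-order elliptic with real coefficients, then extends this vanishing to $\phi \equiv 0$ on all of $\mathbb{R}^d$.

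The main obstacle I expect is the uniform-in-$\alpha$ control of the lower-order error terms in $i[H, A]$: the commutators with the short-range perturbation $q$ and with the $C^2$-error terms in $V$ must be absorbed into the positive principal part without losing the $\alpha^2$-gain. A further delicate point is justifying the virial identity itself for $\phi_\alpha$ only in $\mathcal{B}^*_0(\lambda)$, which typically requires a bounded regularization $A_R$ of $A$ and a passage to $R\to\infty$ whose cutoff errors are controlled by the super-exponential decay established in Proposition~\ref{prop:23082418}.
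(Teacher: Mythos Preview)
Your overall strategy---conjugate by $e^{\alpha S}$, pair the resulting equation against $A$, exploit positivity for large $\alpha$, then finish with unique continuation---is exactly the paper's. The paper implements this by computing $\operatorname{Im}\bigl(A\,\chi_{m,n}e^{2\alpha S}(H-\lambda)\bigr)$ directly (Lemma~\ref{lem:23082716}), which is algebraically equivalent to your conjugated virial identity. So the framework is right; the gaps are in the execution.

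First, the suggestion ``essentially $S=\tau$'' is a genuine error. The eikonal weight $S(\lambda,x)=\int_0^r a\,\mathrm ds$ and the escape function $\tau(\lambda,x)=\int_0^r a^{-1}\,\mathrm ds$ are \emph{not} interchangeable: at $\lambda=0$ one has $S\sim r^{1-\nu/2}$ while $\tau\sim r^{1+\nu/2}$, so $\tau\sim S^{(2+\nu)/(2-\nu)}$ grows strictly faster. The hypothesis gives $e^{\alpha S}\phi\in\mathcal B_0^*$ for every $\alpha$, which does \emph{not} yield $e^{\alpha\tau}\phi\in\mathcal B_0^*$. You must keep the paper's $S$ in the weight and live with the fact that $\nabla S\cdot p=ap_r$ does not coincide with $A=\operatorname{Re}(a^{-1}p_r)$; the mismatch produces extra $a^{\pm 1}$ factors that the paper tracks explicitly.

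Second, and more seriously, the contradiction mechanism you describe is not the one that works. In Lemma~\ref{lem:23082716} both the main positive term and the cutoff boundary errors carry the \emph{same} factor $\alpha^2$, so ``the $\alpha^2$-piece dominates the errors'' is false: the $\alpha^2$ cancels. After evaluating on $\phi$ and sending the outer cutoff to infinity (using $e^{\alpha S}\phi\in\mathcal B_0^*$), one is left with a Carleman-type inequality
\[
\bigl\|(r^{-1}a\bar\chi_m)^{1/2}e^{\alpha S}\phi\bigr\|^2\le C\bigl\|\chi_{m-1,m+1}^{1/2}e^{\alpha S}\phi\bigr\|^2,
\]
with $C$ independent of $\alpha$. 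The contradiction as $\alpha\to\infty$ comes from the fact that $S$ is strictly larger on $\operatorname{supp}\bar\chi_{m+2}$ than on $\operatorname{supp}\chi_{m-1,m+1}$, so if $\bar\chi_{m+2}\phi\not\equiv 0$ the left side grows in $\alpha$ strictly faster than the right. This is a comparison of exponential weights across annuli, not an $\alpha^2$-versus-lower-order balance; your proposal does not identify this structure, and without it the argument does not close.
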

Here, the function $S$ is defined by
\[
S(\lambda,x)=\int_0^r a(\lambda,s)\,\mathrm ds,
\qquad (\lambda,x)\in [0,\infty)\times\mathbb R^d,
\]
and solves the associated eikonal equation.
Theorem~\ref{thm:230607} follows immediately from these propositions. 
We will prove them in Sections~\ref{subsec:2308250} and \ref{subsec:2308251}, respectively,
somehow following the scheme provided in the proof of Proposition~\ref{prop:23080720}. 
In fact, we are going to compute and bound a \textit{distorted commutator}
\begin{align}
\mathop{\mathrm{Im}}(A\Theta (H-\lambda)). 
\label{eq:2308252}
\end{align}
Here the conjugate operator $A$ is from \eqref{eq:2308243}, 
and the weight function $\Theta$ is of the form  
\begin{align}
\Theta=\chi_{m,n}\mathrm e^{2\theta},\quad 
\theta=\alpha S+\beta\int_0^S(1+s/R)^{-1-\delta}\,\mathrm ds 
\label{eq:23082253}
\end{align}
with parameters $m,n\in\mathbb N_0$, $\alpha,\beta\ge 0$, $\delta>0$ and $R\ge 1$. 
The cutoff function $\chi_{m,n}$ is given as follows. 
Fix any $\chi\in C^\infty(\mathbb{R})$ such that 
\begin{align}
\chi(t)
=\left\{\begin{array}{ll}
1 &\mbox{ for } t \le 1, \\
0 &\mbox{ for } t \ge 2,
\end{array}
\right.
\quad
\chi'\le 0,
\label{eq:14.1.7.23.24}
\end{align}
and let $\chi_n,\bar\chi_n,\chi_{m,n}\in C^\infty(\mathbb R^d)$ be defined as 
\begin{align}
\chi_n=\chi(\tau/2^n),\quad \bar\chi_n=1-\chi_n,\quad
\chi_{m,n}=\bar\chi_m\chi_n.
\label{eq:11.7.11.5.14}
\end{align} 
We note the integral from \eqref{eq:23082253} is the so-called \textit{Yosida approximation}, so that 
it is bounded for each $R\ge 1$, and converges pointwise to $S$ as $R\to\infty$.

\subsection{A priori super-exponential decay estimate}\label{subsec:2308250}

Here we prove Proposition~\ref{prop:23082418}. 
Before computations of \eqref{eq:2308252}, let us present some preliminaries. 

It will be useful to decompose $H$ into the radial and spherical parts
with respect to \eqref{eq:2308243} and \eqref{eq:23030414}.

\begin{lemma}\label{lem:230304}
Let $\lambda\ge0$, and let $A$ and $L$ be the operators defined in
(\ref{eq:2308243}) and (\ref{eq:23030414}), respectively.  
Then one has a decomposition
\begin{align*}
H-\lambda=
\tfrac12A a^2A
+\tfrac12 L
-\tfrac12a^2
+q_1
\ \ \text{on }\mathbb R^d\setminus\{0\}
,
\end{align*}
where $q_1\in L^\infty(\mathbb R^d)$ satisfies that there exists $C>0$ such that for any $x\in\mathbb R^d$
\[
|q_1(x)|\le C\langle x\rangle^{-1-\nu'/2}.\]
\end{lemma}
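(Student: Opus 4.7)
The plan is to split $-\tfrac12\Delta$ into its radial and angular parts via the pointwise decomposition $\delta_{ij}=(\nabla r)_i(\nabla r)_j+\ell_{ij}$ on $\mathbb R^d\setminus\{0\}$, rewrite the radial piece in terms of the conjugate operator $A$ using elementary commutator calculus, and identify the resulting multiplicative remainder. The algebraic identity $V-\lambda=-\tfrac12 a^2$, valid in the relevant range $\lambda\ge 0$, will then place the $-\tfrac12 a^2$ term in the stated position, so that the original short-range perturbation $q$ together with the new multiplicative remainder become $q_1$.

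Concretely, I would first write $p^2=p_i(\nabla r)_i(\nabla r)_jp_j+L$ and use $(\nabla r)_jp_j=p_r$ together with the commutator computation $p_i(\nabla r)_i=p_r-\mathrm i(d-1)/r$ (from $\partial_j(x_j/r)=(d-1)/r$) to obtain, on $\mathbb R^d\setminus\{0\}$,
\[
-\Delta=L+p_r^2-\mathrm i\tfrac{d-1}{r}p_r.
\]
For the conjugate operator form, I would record $A=\Re(a^{-1}p_r)=a^{-1}p_r+w$ with the purely imaginary multiplication
\[
w=\tfrac{\mathrm i}{2}a^{-2}\partial_r a-\tfrac{\mathrm i(d-1)}{2ra}
\]
(verifiable directly, or from $A=\mathrm i[H,\tau]$ together with $\Delta\tau=-a^{-2}\partial_r a+(d-1)/(ra)$), expand $Aa^2A$ using only $[p_r,f(r)]=-\mathrm if'(r)$, and collect the coefficients of $p_r^2$, $p_r$, and the remaining multiplication. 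Direct substitution of $w$ yields the key algebraic cancellation $-\mathrm ia^{-1}\partial_r a+2aw=-\mathrm i(d-1)/r$, which forces the first-order piece to agree exactly with the one from the radial-spherical split, giving $Aa^2A=p_r^2-\mathrm i(d-1)/r\,p_r+R$ with $R=-\mathrm ia^{-1}\partial_r(a^2w)+a^2w^2$.

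Combining the two displays yields $-\tfrac12\Delta=\tfrac12 L+\tfrac12 Aa^2A-\tfrac12 R$, hence $H-\lambda=\tfrac12 Aa^2A+\tfrac12 L-\tfrac12 a^2+q_1$ with $q_1=q-\tfrac12 R$. It then remains to check that $R$ is real and of the stated decay. Carrying out the simplification explicitly, I expect
\[
R=\tfrac{\partial_r^2 a}{2a}-\tfrac{(\partial_r a)^2}{4a^2}+\tfrac{(d-1)(3-d)}{4r^2},
\]
which is manifestly real. From $a=(2\max\{\lambda,0\}-2V)^{1/2}$ and Assumption~\ref{cond:230806} one derives $a\gtrsim\langle r\rangle^{-\nu/2}$, $|\partial_r a|\lesssim\langle r\rangle^{-1-\nu/2}$ and $|\partial_r^2 a|\lesssim\langle r\rangle^{-2-\nu/2}$, so each term of $R$ is $O(\langle r\rangle^{-2})$ at infinity; the required bound $|q_1(x)|\le C\langle x\rangle^{-1-\nu'/2}$ then follows since $\nu'\le 2$.

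The main technical obstacle is the algebraic bookkeeping in the expansion of $Aa^2A$: the precise cancellation producing the coefficient $-\mathrm i(d-1)/r$ of $p_r$, matching exactly the first-order piece appearing in the radial-spherical split of $-\Delta$, is essential, and the symmetric form $A=\Re(a^{-1}p_r)$ has been chosen precisely so that this happens. Once this is in place, the remainder reduces to routine commutator calculus together with elementary estimates on $a$ and its derivatives from the hypotheses on $V$.
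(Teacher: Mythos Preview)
Your proposal is correct and follows essentially the same route as the paper: both start from the radial--angular split $p^2=p_r^*p_r+L$ (your $p_r^2-\mathrm i(d-1)r^{-1}p_r$ is exactly $p_r^*p_r$) and then use the relation $A=a^{-1}p_r-\tfrac{\mathrm i}{2}\bigl(a^{-3}\partial_rV+(d-1)r^{-1}a^{-1}\bigr)$ to pass between $p_r^*p_r$ and $Aa^2A$, the only difference being that the paper substitutes $p_r=aA+\ldots$ into $\tfrac12 p_r^*p_r$ whereas you expand $Aa^2A$ and match. Your explicit remainder $R=\tfrac{\partial_r^2 a}{2a}-\tfrac{(\partial_r a)^2}{4a^2}+\tfrac{(d-1)(3-d)}{4r^2}$ agrees with the paper's (unexpanded) expression, and the decay check is the same.
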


\begin{proof}
By definitions \eqref{eq:2308243} and \eqref{eq:23030414} of $p_r$ and $\ell$, respectively, we first rewrite 
\begin{align}
H-\lambda=
\tfrac12 p_r^*p_r^{\phantom{*}}+\tfrac12 L-\tfrac12a^2+q
.
\label{eq:230825251}
\end{align}
Noting 
\begin{align}
\Delta r=(d-1)r^{-1}, 
\label{eq:230825344}
\end{align}
we can have 
\begin{align}
\begin{split}
A
&
= a^{-1}p_r
-\tfrac{\mathrm i}2\nabla \cdot( a^{-1}\nabla r)
=
 a^{-1}p_r
-\tfrac{\mathrm i}2 \bigl( a^{-3}(\partial_rV)+(d-1)r^{-1}a^{-1}\bigr)
,
\end{split}
\label{eq:23030413}
\end{align}
so that 
\begin{align}
p_r
=
 aA
+\tfrac{\mathrm i}2\bigl( a^{-2}(\partial_rV)
+(d-1)r^{-1}\bigr)
.
\label{eq:230825250}
\end{align}
Then by substituting \eqref{eq:230825250} into \eqref{eq:230825251}, we can proceed as 
\begin{align*}
H-\lambda
&
=
\tfrac12 \bigl[A a-\tfrac{\mathrm i}2\bigl( a^{-2}(\partial_rV)+(d-1)r^{-1}\bigr)\bigr]
\\&\phantom{{}={}}{}
\cdot\bigl[aA+\tfrac{\mathrm i}2\bigl( a^{-2}(\partial_rV)+(d-1)r^{-1}\bigr)\bigr]
+\tfrac12 L-\tfrac12a^2+q
\\&
=
\tfrac12A a^2A
+\tfrac12 L-\tfrac12a^2+q
+\tfrac14 a^{-1}\bigl[\partial_r \bigl( a^{-1}(\partial_rV)+(d-1)r^{-1}a\bigr)\bigr]
\\&\phantom{{}={}}{}
+\tfrac18\bigl( a^{-2}(\partial_rV)+(d-1)r^{-1}\bigr)^2
.
\end{align*}
Hence, we are done. 
\end{proof}

\begin{lemma}\label{lem:230925}
There exist $c,C>0$ such that for any $(\lambda,x)\in [0,\infty)\times\mathbb R^d$
\begin{align*}
cr(x) a(\lambda,x)\le S(\lambda,x)\le Cr(x) a(\lambda,x).
\end{align*}
\end{lemma}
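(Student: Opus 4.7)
My plan is to reduce both inequalities to two-sided pointwise control of the derivative of the product $h(r):=ra(\lambda,r)$. Since $h(0)=0$, the fundamental theorem of calculus gives
\[
ra(\lambda,r)=\int_0^r h'(s)\,\mathrm ds,
\]
while by definition $S(\lambda,r)=\int_0^r a(\lambda,s)\,\mathrm ds$. Hence it suffices to establish, for some constant $C_1>0$ independent of $\lambda\ge 0$ and $r\ge 0$, the pointwise estimate
\[
\tfrac{\epsilon}{2}\,a(\lambda,r)\le h'(r)\le C_1\,a(\lambda,r),
\]
after which integration on $[0,r]$ immediately yields the claimed comparison with $c=1/C_1$ and $C=2/\epsilon$.

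To compute $h'(r)$, I differentiate the defining identity $a(\lambda,r)^2=2\lambda-2V(r)$, obtaining $\partial_r a=-V'(r)/a(\lambda,r)$ and therefore
\[
h'(r)=a(\lambda,r)-\frac{rV'(r)}{a(\lambda,r)}.
\]
For the lower bound I will invoke the virial hypothesis $r\partial_r V\le-(2-\epsilon)V$ from Assumption~\ref{cond:230806} together with the algebraic identity $V=\lambda-a^2/2$, which give
\[
-\frac{rV'(r)}{a(\lambda,r)}\ge\frac{(2-\epsilon)V(r)}{a(\lambda,r)}=\frac{(2-\epsilon)\lambda}{a(\lambda,r)}-\frac{(2-\epsilon)a(\lambda,r)}{2}.
\]
Substituting this back shows $h'(r)\ge(\epsilon/2)a(\lambda,r)+(2-\epsilon)\lambda/a(\lambda,r)\ge(\epsilon/2)a(\lambda,r)$, both remainder terms being nonnegative thanks to $\lambda\ge 0$. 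For the upper bound I will combine the derivative estimate $|\partial_r V(r)|\le C\langle r\rangle^{-\nu-1}$ with the lower bound $a(\lambda,r)^2\ge-2V(r)\ge 2c\langle r\rangle^{-\nu}$, which together yield
\[
\frac{|rV'(r)|}{a(\lambda,r)}\le\frac{C\langle r\rangle^{-\nu}}{a(\lambda,r)}\le\frac{C}{2c}\,a(\lambda,r),
\]
so that $h'(r)\le(1+C/(2c))a(\lambda,r)$.

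The argument is essentially a direct calculation, and I do not anticipate any genuine obstacle. The one observation that carries the whole proof is the substitution $V=\lambda-a^2/2$ in the virial step: it converts the $V/a$ contribution into a nonpositive multiple of $a$ plus a manifestly nonnegative $\lambda/a$ term, producing precisely the clean factor $\epsilon/2$ needed on the lower side. Once this identity is in play, both bounds come out symmetrically from the same differentiation.
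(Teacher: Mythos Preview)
Your argument is correct and complete. The derivative computation $h'(r)=a-rV'/a$ is valid since $a>0$ everywhere (because $V<0$), and both pointwise bounds on $h'$ follow from the hypotheses exactly as you describe; integrating then gives the lemma with $c=(1+C/(2c))^{-1}$ and $C=2/\epsilon$.

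The route, however, is genuinely different from the paper's. The paper never differentiates $ra(\lambda,r)$. Instead it first establishes the two-sided pointwise comparison $a(\lambda,r)\asymp(\lambda+\langle r\rangle^{-\nu})^{1/2}$ (using only the size bounds $c\langle r\rangle^{-\nu}\le -V\le C\langle r\rangle^{-\nu}$), and then argues directly on the integral: for the lower bound $S\ge cra$ it uses that $\langle s\rangle^{-\nu}$ is decreasing, so $a(\lambda,s)\gtrsim a(\lambda,r)$ for $s\le r$; for the upper bound it integrates the model weight explicitly, $\int_0^r\langle s\rangle^{-\nu/2}\,\mathrm ds\lesssim r\langle r\rangle^{-\nu/2}$. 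In particular the paper's proof uses neither the virial hypothesis nor the first-derivative bound on $V$. Your proof trades those size-only estimates for structural ones: the virial condition together with the identity $V=\lambda-a^2/2$ produces the clean factor $\epsilon/2$ for the upper bound on $S$, and the ratio $|rV'|/a$ is controlled via $|\partial_rV|\le C\langle r\rangle^{-\nu-1}$ and $a^2\ge 2c\langle r\rangle^{-\nu}$. Your approach is slightly more economical and gives an explicit constant, while the paper's is more model-based and works without invoking the virial inequality---which is why the same method transfers unchanged to the companion estimate for $\tau$ (Lemma~\ref{lem:230925b}), where no virial-type argument is available.
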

\begin{proof}
By Assumption~\ref{cond:230806},
we have 
\begin{align}
c_1\bigl(\max\{\lambda,0\}+\langle x\rangle^{-\nu}\bigr)^{1/2}\le a(\lambda,x)
\le C_1\bigl(\max\{\lambda,0\}+\langle x\rangle^{-\nu}\bigr)^{1/2}.
\label{eq:231113}
\end{align}
Hence, the asserted bound for $S$ from below is obvious. 
As for the one from above, we can compute it by using \eqref{eq:231113} as 
\begin{align*}
S(\lambda,x)
&
\le C_1\int_0^r \bigl(\max\{\lambda^{1/2},0\}+\langle s\rangle^{-\nu/2}\bigr)\,\mathrm ds
\\&
\le C_2r\bigl(\max\{\lambda^{1/2},0\}+\langle r\rangle^{-\nu/2}\bigr)
\\&
\le 2C_2r\bigl(\max\{\lambda,0\}+\langle r\rangle^{-\nu}\bigr)^{1/2}
\\&
\le C_3ra(\lambda,x)
.
\end{align*}
Thus, the assertion follows. 
\end{proof}

For simplicity of notation, we denote  
\[
\theta_0=1+S/R. 
\]
We also denote by primes the derivatives of $\chi_{m,n}$ in $\tau$ and of $\theta,\theta_0$ in $S$,
such as  
\begin{align*}
\mathrm i[A,\theta]&=\theta'=\alpha +\beta \theta_0^{-1-\delta},\\ 
\mathrm i[A,\theta']&=\theta''=-\beta R^{-1}(1+\delta)\theta_0^{-2-\delta},\\ 
\mathrm i[A,\Theta]&=\chi_{m,n}'a^{-2}\mathrm e^{2\theta}+2\theta'\Theta
.
\end{align*}
We shall repeated use the following estimates without a reference. 

\begin{lemma}
Fix any $\lambda\ge 0$ and $\delta>0$. 
Then there exists $c>0$ such that uniformly in $R\ge 1$ 
\begin{align*}
c\langle S\rangle^{-1}\le \theta_0^{-1}\le 1.
\end{align*}
In addition, for any $k=2,3,\ldots$ there exists $C>0$ such that uniformly in $\beta\ge 0$ and $R\ge 1$ 
\begin{align*}
0\le (-1)^{k+1}\theta^{(k)}\le C\beta\langle S\rangle^{1-k}\theta_0^{-1-\delta}
.
\end{align*}
\end{lemma}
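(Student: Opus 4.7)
The lemma is essentially a calculus exercise on the explicit function $\theta(S) = \alpha S + \beta\int_0^S \theta_0(s)^{-1-\delta}\,\mathrm ds$ with $\theta_0(s)=1+s/R$, so the plan is to compute everything by hand and then reduce both inequalities to elementary comparisons between $R+S$ and $\langle S\rangle$.

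For the first bound, I would start with the trivial observation that $\theta_0 = 1 + S/R \geq 1$ for $R\geq 1$ and $S\geq 0$ (recall $S(\lambda,x)\geq 0$), which immediately yields $\theta_0^{-1}\leq 1$. For the matching lower bound I would note that, since $R\geq 1$, $\theta_0 \leq 1 + S$, and combine this with the elementary inequality $(1+S)^2 \leq 2(1+S^2) = 2\langle S\rangle^2$ to obtain $\theta_0 \leq \sqrt{2}\,\langle S\rangle$, whence $\theta_0^{-1}\geq \tfrac{1}{\sqrt 2}\langle S\rangle^{-1}$.

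For the second bound I would first establish by a one-line induction on $k\geq 1$ the explicit formula
\[
\theta^{(k)}(S) = (-1)^{k-1}\beta\, R^{-(k-1)}\prod_{j=1}^{k-1}(j+\delta)\,\theta_0^{-k-\delta},
\qquad k\geq 2,
\]
which settles the sign statement $(-1)^{k+1}\theta^{(k)}\geq 0$. For the quantitative upper bound it then suffices to prove
\[
R^{-(k-1)}\theta_0^{-k-\delta} \leq C\langle S\rangle^{1-k}\theta_0^{-1-\delta},
\]
and for this I would split $\theta_0^{-k-\delta} = \theta_0^{-(k-1)}\theta_0^{-1-\delta}$ and use the identity
\[
R^{-(k-1)}\theta_0^{-(k-1)} = (R+S)^{-(k-1)}.
\]
Finally the elementary estimate $R+S\geq 1+S\geq \langle S\rangle$ (valid because $\langle S\rangle^2 = 1+S^2 \leq (1+S)^2$ for $S\geq 0$) gives $(R+S)^{-(k-1)}\leq \langle S\rangle^{1-k}$ and hence the claimed bound with $C = \prod_{j=1}^{k-1}(j+\delta)$.

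There is no real obstacle here; the only thing to be careful about is that every constant is uniform in $R\geq 1$, $\beta\geq 0$ and $S\geq 0$, which is automatic once the reduction to the pointwise inequality on $(R+S)$ versus $\langle S\rangle$ is carried out. The dependence on $\lambda$ and $\delta$ enters only through the universal factor $\prod(j+\delta)$, matching the form required in the statement.
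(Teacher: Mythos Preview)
Your proof is correct; the paper itself omits the argument entirely, declaring the estimates ``trivial,'' and your explicit computation of $\theta^{(k)}$ together with the reduction $R^{-(k-1)}\theta_0^{-(k-1)}=(R+S)^{-(k-1)}\le\langle S\rangle^{1-k}$ is precisely the elementary verification the authors had in mind. One small quibble: your closing sentence says the dependence on $\lambda$ enters through $\prod_j(j+\delta)$, but in fact $\lambda$ does not enter the constants at all---the bounds are pointwise in $S$ and hence automatically uniform in $\lambda\ge 0$; this does not affect the argument.
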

\begin{proof}
These estimates are trivial. We omit the proof. 
\end{proof}

Now we compute and bound the distorted commutator \eqref{eq:2308252}. 
The following lemma is a key for the proof of Proposition~\ref{prop:23082418}.

\begin{lemma}\label{lem:2308271}
Fix any $\lambda\ge 0$, $\alpha_0\ge 0$ and $\delta\in (0,(\nu'-\nu)/(2-\nu))$,
and let $\beta>0$ be sufficiently small. 
Then there exist $c,C>0$, $N_0\in\mathbb N_0$ and $R_0\ge 1$ such that 
uniformly in $\alpha\in [0,\alpha_0]$, $n\ge m\ge N_0$ and $R\ge R_0$
\begin{align*}
\mathop{\mathrm{Im}}(A\Theta (H-\lambda))
&\ge c r^{-1}a \theta_0^{-\delta} \Theta 
-C(\chi_{m-1,m+1}+\chi_{n-1,n+1})\tau^{-1}\mathrm e^{2\theta}
\\&\phantom{{}={}}{}
+\mathop{\mathrm{Re}}(f (H-\lambda))
,
\end{align*}
where $f$ is a multiplication operator by some function $f$, 
not a function of $H-\lambda$, that satisfies $\operatorname{supp}f\subset\operatorname{supp}\chi_{m,n}$ and
$|f|\le C_{m,n}e^{2\theta}$.
\end{lemma}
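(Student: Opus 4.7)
The plan is to mimic the classical calculation of Proposition~\ref{prop:23080720} in the quantum setting, starting from the algebraic identity
\[
2\mathop{\mathrm{Im}}\bigl(A\Theta(H-\lambda)\bigr)
=\mathrm i[H,\Theta A]-\mathrm i[A,\Theta](H-\lambda),
\]
which follows by taking adjoints since $A$ and $H$ are (formally) self-adjoint and $\Theta$ is a real multiplication. Because $\mathrm i[A,\Theta]=2\theta'\Theta+a^{-2}\chi_{m,n}'\mathrm e^{2\theta}$ is itself a function, after symmetrization the second term has the form $\mathop{\mathrm{Re}}(f_0(H-\lambda))$ with $\mathop{\mathrm{supp}}f_0\subset\mathop{\mathrm{supp}}\chi_{m,n}$ and $|f_0|\le C_{m,n}\mathrm e^{2\theta}$, so it is already absorbed into the $\mathop{\mathrm{Re}}(f(H-\lambda))$ slot on the right-hand side of the lemma. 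It therefore remains to lower-bound $\mathrm i[H,\Theta A]=\mathrm i\Theta[H,A]+\mathrm i[H,\Theta]A$.

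For the first piece $\mathrm i\Theta[H,A]$, I would follow the classical computation: substitute the decomposition $H-\lambda=\tfrac12 Aa^2A+\tfrac12 L-\tfrac12 a^2+q_1$ from Lemma~\ref{lem:230304}, then use the attractivity $-x\cdot\nabla V\ge(2-\epsilon)(-V)$, the convexity $\nabla^2 r=r^{-1}\ell\ge 0$, and the on-shell identity $a^2=2\lambda-2V$ to arrive at a lower bound
\[
\mathrm i\Theta[H,A]\ge c_1\bigl(r^{-1}a^{-1}L+\epsilon r^{-1}a\bigr)\Theta+\mathop{\mathrm{Re}}(f_1(H-\lambda))-Cr^{-1-\nu'/2}\Theta,
\]
with the last error collecting $q_1$ and $C^2$-type commutator corrections. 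For the weight-derivative piece $\mathrm i[H,\Theta]A$, the commutator $[L,\Theta]$ vanishes because $\ell\nabla r=0$, so Lemma~\ref{lem:230304} yields $\mathrm i[H,\Theta]=\tfrac12\bigl(Aa^2\mathrm i[A,\Theta]+\mathrm i[A,\Theta]a^2A\bigr)$. Substituting $\mathrm i[A,\Theta]=2\theta'\Theta+a^{-2}\chi_{m,n}'\mathrm e^{2\theta}$ and reducing the arising factor $Aa^2A$ once more via Lemma~\ref{lem:230304}, I extract after symmetrization the interior weight contribution $2\theta'(a^2-L)\Theta$ plus $\mathop{\mathrm{Re}}(f_2(H-\lambda))$ and $\theta''$-errors; since $\theta'\ge\beta\theta_0^{-1-\delta}$ and $\theta_0\sim 1+ra/R$ by Lemma~\ref{lem:230925}, the positive scalar part majorises $c\beta r^{-1}a\theta_0^{-\delta}\Theta$, while the companion $-2\theta' L\Theta$ is dominated by the spherical positivity from the previous step as soon as $\beta$ is sufficiently small.

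The hard part will be the absorption of the lower-order error terms into the principal contribution $cr^{-1}a\theta_0^{-\delta}\Theta$. The $q_1$-error has size $\langle x\rangle^{-1-\nu'/2}\Theta$, whereas the principal term scales like $r^{-1-\nu/2}\theta_0^{-\delta}\Theta$ at infinity; since $\theta_0\lesssim ra/R\lesssim r^{1-\nu/2}$, absorption is possible exactly when $\delta(1-\nu/2)<(\nu'-\nu)/2$, i.e.\ $\delta<(\nu'-\nu)/(2-\nu)$, which is precisely the assumed constraint. The $\theta''$-errors of size $\beta R^{-1}\langle S\rangle^{-1}\theta_0^{-1-\delta}\Theta$ are killed by taking $R\ge R_0$ large and $\beta$ small. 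Finally, the cut-off derivatives $\chi_{m,n}'$ are supported on the annular shells matched by $\chi_{m-1,m+1}$ and $\chi_{n-1,n+1}$ and produce exactly the stated boundary error $C(\chi_{m-1,m+1}+\chi_{n-1,n+1})\tau^{-1}\mathrm e^{2\theta}$. Choosing $N_0,R_0$ large and $\beta>0$ small then yields the claimed inequality uniformly in $\alpha\in[0,\alpha_0]$, $n\ge m\ge N_0$, and $R\ge R_0$.
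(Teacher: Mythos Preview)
Your plan has a genuine gap in the handling of the $\alpha$-dependence of $\theta'=\alpha+\beta\theta_0^{-1-\delta}$. The claim that ``the companion $-2\theta' L\Theta$ is dominated by the spherical positivity from the previous step as soon as $\beta$ is sufficiently small'' is false when $\alpha>0$: the spherical term is $c_1r^{-1}a^{-1}L\Theta$, and since $r^{-1}a^{-1}\to 0$ at infinity it cannot control $2\alpha L\Theta$ for any fixed $\alpha>0$. If you try to repair this by recombining $2\alpha(a^2-L)\Theta$ and invoking Lemma~\ref{lem:230304} once more, the commutator corrections you dismiss as ``$\theta''$-errors'' in fact produce a scalar term of the form $-Ca^2\theta'^3\Theta$; for $\alpha$ bounded away from $0$ this is of size $\alpha^3 a^2\Theta$, and none of the positive scalar terms you list (at most linear in $\theta'$) can absorb it.

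This is exactly the obstacle the paper confronts at \eqref{eq:23030322}, where $-a^2\theta'^3\Theta$ appears explicitly. The resolution is \emph{not} to reduce everything to scalars, but to retain the operator-valued contribution $Aa^2\theta'\Theta A$ and complete the square in $A$:
\[
A\bigl(a^2\theta'-c_1r^{-1}a\theta_0^{-\delta}\bigr)\Theta A
=(A+\mathrm i\theta')\bigl(a^2\theta'-c_1r^{-1}a\theta_0^{-\delta}\bigr)\Theta(A-\mathrm i\theta')
+\text{cross terms},
\]
and the cross terms yield precisely $+a^2\theta'^3\Theta$ (up to controllable remainders), cancelling the cubic obstruction; see the computations following \eqref{eq:23030415}. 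Your strategy of converting $Aa^2A$ to $a^2-L$ via Lemma~\ref{lem:230304} at the earliest opportunity destroys the operator structure needed for this cancellation. The argument as sketched would work only in the trivial range $\alpha=0$, whereas the lemma must hold uniformly in $\alpha\in[0,\alpha_0]$.
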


\begin{proof}
Fix $\lambda$, $\alpha_0$ and $\delta$ as in the assertion. 
We will for the moment discuss uniform estimates in 
$\beta\in (0,1]$, $\alpha\in [0,\alpha_0]$, $n\ge m\ge 0$ and $R\ge 1$,
and at the last step choose $\beta$, $N_0$ and $R_0$ so that the assertion holds. 
By Lemma~\ref{lem:230304}, we have 
\begin{align}
\begin{split}
\mathop{\mathrm{Im}}(A\Theta (H-\lambda))
&=
\tfrac12\mathop{\mathrm{Im}}(A\Theta A a^2A)
+\tfrac12 \mathop{\mathrm{Im}}(A\Theta L)
-\tfrac12\mathop{\mathrm{Im}}(A\Theta a^2)
\\&\phantom{{}={}}{}
+\mathop{\mathrm{Im}}(A\Theta q_1)
,
\end{split}
\label{eq:230303}
\end{align}
and in the following we further compute each term on the right-hand side of \eqref{eq:230303}. 
There appear many terms that will turn out to be negligible at last. 
In short, we shall gather them and write simply   
\begin{align*}
Q=
r^{-1-\nu'/2}\Theta
+Ar^{-1-\nu'/2}\Theta A
+\bigl(|\chi_{m,n}'|+|\chi_{m,n}''|a^{-2}\bigr)\mathrm e^{2\theta}
+p\cdot |\chi_{m,n}'|a^{-2}\mathrm e^{2\theta}p.
\end{align*}
In particular, once a derivative hits on $\chi_{m,n}$ in $\Theta$, the corresponding term is absorbed into $Q$. 
The term $Q$ will be computed and bounded later on.  

Now the first term on the right-hand side of \eqref{eq:230303} is rewritten and bounded by 
using \eqref{eq:23030413} and the Cauchy--Schwarz inequality as 
\begin{align}
\begin{split}
\tfrac12\mathop{\mathrm{Im}}(A\Theta A a^2A)
&
=
\tfrac14A a(\partial_r\Theta) A
+\tfrac12A  a^{-1}(\partial_rV)\Theta A
\\&
\ge 
\tfrac12A a^2\theta'\Theta A
+\tfrac12A  a^{-1}(\partial_rV)\Theta A
-C_1Q
.
\end{split}
\label{eq:23092320}
\end{align}
Here and below, $c_*,C_*>0$ are uniform in $\beta\in (0,1]$, $\alpha\in [0,\alpha_0]$, $n\ge m\ge 0$ and $R\ge 1$. 
We use the adjoint of \eqref{eq:23030413}, \eqref{eq:23030414} and \eqref{eq:230825344} 
to rewrite the second term of \eqref{eq:230303} as 
\begin{align}
\begin{split}
\tfrac12 \mathop{\mathrm{Im}}(A\Theta L)
&=
\tfrac12 \mathop{\mathrm{Im}}\bigl(p_r^* a^{-1}\Theta p_i\ell_{ij}p_j\bigr)
+\tfrac14 \bigl( a^{-3}(\partial_rV)+(d-1)r^{-1}a^{-1}\bigr)\Theta L
\\&
=
\tfrac12 \mathop{\mathrm{Im}}\bigl(p_ip_k(\nabla_kr) a^{-1}\Theta \ell_{ij}p_j\bigr)
+\tfrac12 \mathop{\mathrm{Re}}\bigl(p_k(\nabla^2r)_{ik} a^{-1}\Theta \ell_{ij}p_j\bigr)
\\&\phantom{{}={}}{}
+\tfrac14 \bigl( a^{-3}(\partial_rV)+(d-1)r^{-1}a^{-1}\bigr)\Theta L
\\&
\ge 
-\tfrac12 \theta'\Theta L
+\tfrac12 r^{-1} a^{-1}\Theta L
-C_2Q
.
\end{split}
\label{eq:23092321}
\end{align}
Here we have also used that 
\begin{align}
(\nabla_ir)\ell_{ij}=0
,\quad 
(\nabla_kr)(\partial_k\ell_{ij})
=
-r^{-1}x_k(\partial_kr^{-2}x_ix_j)
=0
.
\label{eq:2308273}
\end{align}
The third and fourth terms of \eqref{eq:230303} are computed and bounded by the Cauchy--Schwarz inequality as 
\begin{align*}
-\tfrac12\mathop{\mathrm{Im}}(A\Theta a^2)
+\mathop{\mathrm{Im}}(A\Theta q_1)
&\ge 
\tfrac12a^2\theta'\Theta
-\tfrac12 a^{-1}(\partial_rV)\Theta 
-C_3Q
.
\end{align*}
Thus, we have \eqref{eq:230303} bounded as 
\begin{align}
\begin{split}
\mathop{\mathrm{Im}}(A\Theta (H-\lambda))
&\ge 
\tfrac12A a^2\theta'\Theta A
+\tfrac12A  a^{-1}(\partial_rV)\Theta A
-\tfrac12\theta'\Theta L
+\tfrac12 r^{-1} a^{-1}\Theta L
\\&\phantom{{}={}}{}
-\tfrac12a^2\theta'\Theta 
-\tfrac12 a^{-1}(\partial_rV)\Theta 
-C_4Q
.
\end{split}
\label{eq:23030321}
\end{align}

We continue to compute the right-hand side of \eqref{eq:23030321}. 
Using Lemma~\ref{lem:230304} and $S^{-1}a\le C_5r^{-1}$, we combine the third and fifth terms of \eqref{eq:23030321} as 
\begin{align*}
-\tfrac12\theta' \Theta L
+\tfrac12a^2\theta' \Theta 
&
=
\tfrac12\mathop{\mathrm{Re}}\bigl(\theta' \Theta A a^2A\bigr)
+q_1\theta' \Theta
-\mathop{\mathrm{Re}}(\theta' \Theta(H-\lambda))
\\&
=
\tfrac12Aa^2\theta'\Theta  A
-\tfrac14a^{-1}\bigl(\partial_ra(\partial_r\theta' \Theta )\bigr)
+q_1\theta' \Theta
-\mathop{\mathrm{Re}}(\theta' \Theta(H-\lambda))
\\&
\ge 
\tfrac12Aa^2\theta'\Theta  A
+a^{-1} (\partial_rV)\theta'^2 \Theta 
-a^2\theta'^3 \Theta 
-\tfrac32a^2\theta'\theta'' \Theta 
\\&\phantom{{}={}}{}
-C_6Q
-\mathop{\mathrm{Re}}(\theta' \Theta(H-\lambda))
.
\end{align*}
Similarly, by Lemma~\ref{lem:230304}, Assumption~\ref{cond:230806b} and $S^{-1}a\le C_5r^{-1}$
the second, fourth and sixth terms of \eqref{eq:23030321} are combined as 
\begin{align*}
&
\tfrac12A  a^{-1}(\partial_rV)\Theta A
+\tfrac12 r^{-1} a^{-1}\Theta L
-\tfrac12 a^{-1}(\partial_rV)\Theta 
\\&
=
\tfrac12 r^{-1} a^{-3}(a^2-r\partial_rV)\Theta L
+\tfrac14a^{-1}\bigl(\partial_ra\bigl(\partial_ra^{-3}(\partial_rV)\Theta\bigr)\bigr)
\\&\phantom{{}={}}{}
- a^{-3}(\partial_rV) q_1\Theta
+\mathop{\mathrm{Re}}\bigl( a^{-3}(\partial_rV)\Theta (H-\lambda)\bigr)
\\&
\ge 
c_1r^{-1} a^{-1}\Theta L
+a^{-1}(\partial_rV)\theta'^2\Theta  
-C_7Q
+\mathop{\mathrm{Re}}\bigl( a^{-3}(\partial_rV)\Theta (H-\lambda)\bigr)
.
\end{align*}
Thus, we obtain 
\begin{align}
\begin{split}
\mathop{\mathrm{Im}}(A\Theta (H-\lambda))
&\ge 
Aa^2\theta'\Theta  A
+c_1r^{-1} a^{-1}\Theta L
-a^2\theta'^3 \Theta 
-\tfrac32a^2\theta'\theta'' \Theta 
\\&\phantom{{}={}}{}
+2a^{-1}(\partial_rV)\theta'^2\Theta  
-C_8Q
+\mathop{\mathrm{Re}}( f_1(H-\lambda))
.
\end{split}
\label{eq:23030322}
\end{align}
Here and below, $f_*$ satisfy the same conditions as $f$ in the assertion. 

We can see the third term on the right-hand side of  \eqref{eq:23030322} is the worst negative contribution. 
To remove it, we further exploit the first and second terms of \eqref{eq:23030322} as follows.  
Let us split them as 
\begin{align}
\begin{split}
Aa^2\theta'\Theta  A
+c_1r^{-1} a^{-1}\Theta L
&
\ge 
A\bigl(a^2\theta'-c_1r^{-1} a\theta_0^{-\delta}\bigr)\Theta A
\\&\phantom{{}={}}{}
+c_1Ar^{-1}a\theta_0^{-\delta}\Theta A
+c_1r^{-1}a^{-1}\theta_0^{-\delta} \Theta L
.
\end{split}
\label{eq:23030415}
\end{align}
Then the first term of \eqref{eq:23030415} is bounded by using  $S^{-1}a\le C_5r^{-1}$ as 
\begin{align*}
A\bigl(a^2\theta'-c_1r^{-1} a\theta_0^{-\delta}\bigr)\Theta A
&=
(A+\mathrm i\theta')\bigl(a^2\theta'-c_1r^{-1}a\theta_0^{-\delta} \bigr)\Theta (A-\mathrm i\theta')
\\&\phantom{{}={}}{}
+ a^{-1}\bigl(\partial_r(a^2\theta'-c_1r^{-1} a\theta_0^{-\delta})\theta'\Theta\bigr) 
\\&\phantom{{}={}}{}
-\bigl(a^2\theta'-c_1r^{-1} a\theta_0^{-\delta}\bigr)\theta'^2\Theta 
\\&\ge 
(A+\mathrm i\theta')\bigl(\beta a^2\theta_0^{-1-\delta}-c_1r^{-1}a\theta_0^{-\delta} \bigr)\Theta (A-\mathrm i\theta')
+ a^2\theta'^3\Theta 
\\&\phantom{{}={}}{}
+ 2a^2\theta'\theta''\Theta
-2a^{-1} (\partial_rV)\theta'^2\Theta
-c_1r^{-1} a\theta_0^{-\delta}\theta'^2\Theta 
-C_9Q
.
\end{align*}
On the other hand, by Lemma~\ref{lem:230304} and  $S^{-1}a\le C_5r^{-1}$ 
the second and third terms of \eqref{eq:23030415}
are combined as 
\begin{align*}
&
c_1Ar^{-1}a\theta_0^{-\delta} \Theta A
+c_1r^{-1}a^{-1}\theta_0^{-\delta} \Theta L
\\&=
\tfrac{c_1}2a^{-1}\bigl(\partial_r a\bigl(\partial_r r^{-1} a^{-1}\theta_0^{-\delta}\Theta\bigr)\bigr)
+c_1 r^{-1}a\theta_0^{-\delta} \Theta 
\\&\phantom{{}={}}{}
-2c_1 r^{-1}a^{-1}\theta_0^{-\delta} q_1\Theta 
+2c_1\mathop{\mathrm{Re}}\bigl(r^{-1}a^{-1}\theta_0^{-\delta} \Theta (H-\lambda)\bigr)
\\&\ge 
c_1 r^{-1}a\theta_0^{-\delta} \Theta 
+2c_1r^{-1}a\theta_0^{-\delta} \theta'^2\Theta  
-C_{10}Q
+2c_1\mathop{\mathrm{Re}}\bigl(r^{-1}\theta_0^{-\delta} a^{-1}\Theta (H-\lambda)\bigr)
.
\end{align*}
At this stage, we also bound $Q$, similarly to so far, as 
\begin{align*}
Q
&
\le 
C_{11}r^{-1-\nu'/2}\Theta
+C_{11}(\chi_{m-1,m+1}+\chi_{n-1,n+1})\tau^{-1}\mathrm e^{2\theta}
\\&\phantom{{}={}}{}
+2\mathop{\mathrm{Re}}\bigl(r^{-1-\nu'/2}a^{-2}\Theta (H-\lambda)\bigr)
+2\mathop{\mathrm{Re}}\bigl(|\chi_{m,n}'|a^{-2}\mathrm e^{2\theta}(H-\lambda)\bigr)
.
\end{align*}
Therefore, we obtain 
\begin{align}
\begin{split}
\mathop{\mathrm{Im}}(A\Theta (H-\lambda))
&\ge 
(A+\mathrm i\theta')\bigl(\beta a^2\theta_0^{-1-\delta}-c_1r^{-1}a\theta_0^{-\delta} \bigr)\Theta (A-\mathrm i\theta')
\\&\phantom{{}={}}{}
+c_1 r^{-1}a\theta_0^{-\delta} \Theta 
+c_1r^{-1} a\theta_0^{-\delta}\theta'^2\Theta 
+\tfrac12a^2\theta'\theta''\Theta
\\&\phantom{{}={}}{}
-C_{12}r^{-1-\nu'/2}\Theta
-C_{12}(\chi_{m-1,m+1}+\chi_{n-1,n+1})\tau^{-1}\mathrm e^{2\theta}
\\&\phantom{{}={}}{}
+\mathop{\mathrm{Re}}\bigl(f_2(H-\lambda)\bigr)
\\&\ge 
(A+\mathrm i\theta')
\bigl(\beta a r-c_1R^{-1}S-c_1 \bigr)r^{-1}a\theta_0^{-1-\delta}\Theta (A-\mathrm i\theta')
\\&\phantom{{}={}}{}
+c_2 r^{-1}a\theta_0^{-\delta} \Theta 
+(c_2-C_{13}\beta) r^{-1}a\theta_0^{-\delta} \Theta 
\\&\phantom{{}={}}{}
+\bigl(c_2r^{-1-\nu/2}S^{-\delta}-C_{12}r^{-1-\nu'/2}\bigr) \Theta 
\\&\phantom{{}={}}{}
-C_{12}(\chi_{m-1,m+1}+\chi_{n-1,n+1})\tau^{-1}\mathrm e^{2\theta}
+\mathop{\mathrm{Re}}\bigl(f_2(H-\lambda)\bigr)
.
\end{split}
\label{eq:230827}
\end{align}

Now we choose and fix $\beta>0$ small enough that the third term on the right-hand side of \eqref{eq:230827}
is nonnegative. Then we can choose $N_0\in\mathbb N$ and $R_0\ge 1$ large 
so that the first and fourth terms are nonnegative for any $n\ge m\ge N_0$ and $R\ge R_0$. Hence, we are done. 
\end{proof}

\begin{proof}[Proof of Proposition~\ref{prop:23082418}]
Let $\phi\in \mathcal B^*_0(\lambda)$ and $\lambda\ge 0$ satisfy the assumption of the assertion, 
and set
\begin{align*}
\alpha_0=\sup\bigl\{\alpha\ge 0\,\big|\,\mathrm e^{\alpha S}\phi\in \mathcal B_0^*(\lambda)\bigr\}.
\end{align*}
Assume $\alpha_0<\infty$, and we deduce a contradiction.
Fix $\delta$, $\beta$, $N_0$ and $R_0$ as in the assertion of Lemma~\ref{lem:2308271},
and take any $\alpha\in \{0\}\cup [0,\alpha_0)$ such that $\alpha+\beta>\alpha_0$. 
With such parameters, evaluate the inequality from Lemma~\ref{lem:2308271} for the state $\chi_{m-2,n+2}\phi$, 
and then, we obtain for any $n\ge m\ge N_0$ and $R\ge R_0$
\begin{align}
\begin{split}
\bigl\|(r^{-1}a\theta_0^{-\delta}\Theta)^{1/2}\phi\bigr\|^2
&
\le 
C_m\bigl\|\chi_{m-1,m+1}^{1/2}\phi\bigr\|^2
+C_R 2^{-n/2}\bigl\|\chi_{n-1,n+1}^{1/2}\mathrm e^{\alpha S}\phi\bigr\|^2
.
\end{split}
\label{eq:11.7.16.3.22a}
\end{align}
The second term on the right-hand side of (\ref{eq:11.7.16.3.22a})
  vanishes in the limit $n\to\infty$ by the assumption, 
and hence, by Lebesgue's monotone  convergence
theorem
\begin{align}
\bigl\|(r^{-1}a\theta_0^{-\delta}\Theta)^{1/2}\phi\bigr\|^2
 &\le 
C_m\bigl\|\chi_{m-1,m+1}^{1/2}\phi\bigr\|^2.
\label{eq:11.7.16.3.43a}
\end{align}
Next we let $R \to\infty$ in \eqref{eq:11.7.16.3.43a}
    invoking again Lebesgue's monotone  convergence
theorem,
and then, it follows that 
 $$\bar\chi_m^{1/2} r^{-1/2}a^{1/2}\mathrm e^{(\alpha+\beta)S}\phi
\in L^2(\mathbb R^d).$$ 
 This implies $\mathrm e^{\kappa S}\phi\in 
\mathcal B_0^*(\lambda)$ for any $\kappa\in(0,\alpha+\beta)$, 
contradicting $\alpha+\beta>\alpha_0$. 
Thus, we are done. 
\end{proof}

\subsection{Absence of super-exponentially decaying eigenfunction}\label{subsec:2308251} 

Here we prove Propositions~\ref{prop:23082419}. 
The proof is very similar to that of Propositions~\ref{prop:23082418}, 
but we focus on different parameters. 
In fact, we let $\beta=0$, so that 
\[
\theta=2\alpha S,\quad 
\Theta=\chi_{m,n}\mathrm e^{\alpha S}, 
\]
and we deduce uniform estimates in $\alpha\ge 0$ as follows. Note here $\delta$ and $R$ are irrelevant.

\begin{lemma}\label{lem:23082716}
Fix any $\lambda\ge 0$, and let $\beta=0$. 
Then there exist $c,C>0$ and $N_0\in\mathbb N_0$ such that 
uniformly in $\alpha\ge 1$ and $n\ge m\ge N_0$ 
\begin{align*}
\mathop{\mathrm{Im}}(A\Theta (H-\lambda))
&\ge c \alpha^2r^{-1}a \Theta 
-C\alpha^2(\chi_{m-1,m+1}+\chi_{n-1,n+1})\tau^{-1}\mathrm e^{2\theta}
\\&\phantom{{}={}}{}
+\mathop{\mathrm{Re}}(f (H-\lambda))
,
\end{align*}
where $f$ is a multiplication operator by some function $f$, 
not a function of $H-\lambda$, that satisfies 
$\operatorname{supp}f\subset\operatorname{supp}\chi_{m,n}$ and
$|f|\le C\alpha e^{2\theta}$
uniformly in $\alpha\ge1$ and $n\ge m\ge N_0$.
\end{lemma}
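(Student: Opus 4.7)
The plan is to follow the proof of Lemma~\ref{lem:2308271} almost verbatim, specializing to $\beta=0$, so that $\theta=\alpha S$, $\theta'=\alpha$ and $\theta^{(k)}=0$ for $k\ge 2$, while carefully tracking the powers of $\alpha$ that now replace powers of $\beta$ and $\theta_0^{-\delta}$. Unlike Lemma~\ref{lem:2308271}, we seek uniform estimates in $\alpha\ge 1$ rather than for small $\beta$; the principal positive contribution targeted is $c\alpha^2 r^{-1}a\Theta$ in place of $cr^{-1}a\theta_0^{-\delta}\Theta$, reflecting that, with a pure exponential weight, the decisive gain from the conjugation by $\mathrm{e}^{\alpha S}$ scales as $\alpha^2$.

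First, I would expand $\mathop{\mathrm{Im}}(A\Theta(H-\lambda))$ using Lemma~\ref{lem:230304} exactly as in \eqref{eq:230303}--\eqref{eq:23030322}, substituting $\theta'=\alpha$ and $\theta''=0$. The estimate \eqref{eq:23030322} becomes
\[
\mathop{\mathrm{Im}}(A\Theta(H-\lambda))\ge \alpha A a^2\Theta A+c_1 r^{-1}a^{-1}\Theta L-\alpha^3 a^2\Theta+2\alpha^2 a^{-1}(\partial_rV)\Theta-C_8Q+\mathop{\mathrm{Re}}(f_1(H-\lambda)),
\]
with $Q$ the collection of negligible and cut-off derivative contributions from the previous proof. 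The worst term is $-\alpha^3 a^2\Theta$, carrying the highest power of $\alpha$, and it must be exactly absorbed by other contributions in order to leave a positive $\alpha^2$-residue.

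To carry out the cancellation, I would mimic the splitting in \eqref{eq:23030415} but with $\theta_0^{-\delta}$ replaced by $1$ and $\beta$ replaced by $\alpha$:
\[
\alpha Aa^2\Theta A+c_1 r^{-1}a^{-1}\Theta L=A(\alpha a^2-c_2 r^{-1}a)\Theta A+c_2 Ar^{-1}a\Theta A+c_1 r^{-1}a^{-1}\Theta L,
\]
for some $c_2\in(0,c_1]$, and then complete the square in the first piece via $(A+\mathrm i\alpha)(\,\cdot\,)(A-\mathrm i\alpha)$. By Lemma~\ref{lem:230925} and \eqref{eq:231113}, $ar\to\infty$ as $r\to\infty$, so for $\alpha\ge 1$ and $N_0$ large enough, $\alpha a-c_2 r^{-1}\ge \tfrac12\alpha a$ on $\mathop{\mathrm{supp}}\chi_{m,n}$ whenever $m\ge N_0$; the completed square is then non-negative. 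Its expansion produces exactly $-\alpha^3 a^2\Theta+c_2\alpha^2 r^{-1}a\Theta$ plus derivative corrections of at most linear order in $\alpha$. The first kills the bad $-\alpha^3 a^2\Theta$ in the previous display, and the second is the asserted main positive term. The remainder $c_2 Ar^{-1}a\Theta A+c_1 r^{-1}a^{-1}\Theta L$ is rewritten by Lemma~\ref{lem:230304} exactly as in the second half of the argument following \eqref{eq:23030415}, yielding further non-negative contributions of order $\alpha^2 r^{-1}a\Theta$ together with an additional $\mathop{\mathrm{Re}}(\,\cdot\,(H-\lambda))$ remainder.

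The rest of the analysis is formally identical to that of Lemma~\ref{lem:2308271}. The $Q$-term is bounded exactly as before; its short-range part, at worst $C\alpha^2 r^{-1-\nu'/2}\Theta$, is absorbed into the dominant $c\alpha^2 r^{-1}a\Theta$ using $a\ge c\langle r\rangle^{-\nu/2}$ and $\nu'>\nu$, again by enlarging $N_0$. The cut-off derivatives assemble into the $C\alpha^2(\chi_{m-1,m+1}+\chi_{n-1,n+1})\tau^{-1}\mathrm{e}^{2\theta}$ remainder of the statement, the $\alpha^2$ factor arising from the multiplicative $\theta'^2=\alpha^2$ in those terms. All contributions to $f$ acquire at most one factor of $\theta'=\alpha$, so $|f|\le C\alpha\,\mathrm{e}^{2\theta}$. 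The main difficulty is bookkeeping: one must check that the completion-of-squares cancels $\alpha^3 a^2\Theta$ \emph{exactly} and that no $\alpha^2$-order error escapes outside $c\alpha^2 r^{-1}a\Theta$. Fixing $c_2$ first and then choosing $N_0$ large, all remaining $\alpha$-linear corrections fit either into the remainder $Q$ or into $\mathop{\mathrm{Re}}(f(H-\lambda))$, and the asserted inequality follows.
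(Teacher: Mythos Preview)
Your overall plan is the paper's own argument, and the ingredients you list (Lemma~\ref{lem:230304}, the $(A+\mathrm i\alpha)(\cdot)(A-\mathrm i\alpha)$ device, then a second use of Lemma~\ref{lem:230304} on $c_2Ar^{-1}a\Theta A+c_1r^{-1}a^{-1}\Theta L$) are exactly what is needed. However, one concrete step is wrong: the ``derivative corrections'' arising from the square completion are \emph{not} at most linear in $\alpha$. The weight $\Theta=\chi_{m,n}\mathrm e^{2\alpha S}$ itself depends on $\alpha$, so $\partial_r\Theta=\chi_{m,n}'a^{-1}\mathrm e^{2\theta}+2\alpha a\Theta$ already carries a factor $\alpha$. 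With $W=(\alpha a^2-c_2r^{-1}a)\Theta$ one finds
\[
\alpha\,a^{-1}\partial_rW
=2\alpha^3a^2\Theta-2\alpha^2a^{-1}(\partial_rV)\Theta-2c_2\alpha^2r^{-1}a\Theta+O(\alpha),
\]
so the identity $A W A=(A+\mathrm i\alpha)W(A-\mathrm i\alpha)-\alpha^2W+\alpha\,a^{-1}\partial_rW$ contributes $\alpha^3$ and $\alpha^2$ terms you have not accounted for. In particular, your claim that ``the first kills the bad $-\alpha^3a^2\Theta$'' fails as stated: $-\alpha^2W$ produces $-\alpha^3a^2\Theta$, with the \emph{same} sign as the term you are trying to cancel; the cancellation of $\alpha^3a^2\Theta$ actually comes from the $+2\alpha^3a^2\Theta$ hidden in $\alpha\,a^{-1}\partial_rW$. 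After correct bookkeeping the square-completion step leaves the \emph{negative} contribution $-c_2\alpha^2r^{-1}a\Theta$, and the decisive positive $+2c_2\alpha^2r^{-1}a\Theta$ only appears in the double-commutator term $\tfrac{c_2}{2}a^{-1}\partial_r\bigl(a\,\partial_r(r^{-1}a^{-1}\Theta)\bigr)$ produced by the Lemma~\ref{lem:230304} step on $c_2Ar^{-1}a\Theta A+c_2r^{-1}a^{-1}\Theta L$.

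The paper avoids this bookkeeping trap by completing the square on the full weight $\alpha a^2\Theta$ first, which gives the exact identity
\[
\alpha A a^2\Theta A-\alpha^3a^2\Theta+2\alpha^2a^{-1}(\partial_rV)\Theta
=\alpha(A+\mathrm i\alpha)a^2\Theta(A-\mathrm i\alpha)+\alpha^2\chi_{m,n}'\mathrm e^{2\theta},
\]
so the three ``bad'' terms in your display are a manifest non-negative square up to a cut-off error. Only then is the square lowered to $c_2(A+\mathrm i\alpha)r^{-1}a\Theta(A-\mathrm i\alpha)$ and expanded. Your split-then-square route can be made to work, but only after you track the $\alpha$-dependence of $\Theta$ through every $\partial_r$; as written, the sketch misidentifies where the crucial $+c\alpha^2r^{-1}a\Theta$ comes from.
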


\begin{proof}
We repeat computations similar to the proof of Lemma~\ref{lem:2308271},
however, focusing on different parameters.  
Fix $\lambda$ and $\beta$ as in the assertion.
We will again for the moment discuss uniform estimates in 
$\alpha\ge 1$ and $n\ge m\ge 0$, and then finally fix appropriate $N_0$ so that the assertion holds. 
By Lemma~\ref{lem:230304}, we write  
\begin{align}
\begin{split}
\mathop{\mathrm{Im}}(A\Theta (H-\lambda))
&=
\tfrac12\mathop{\mathrm{Im}}(A\Theta A a^2A)
+\tfrac12 \mathop{\mathrm{Im}}(A\Theta L)
-\tfrac12\mathop{\mathrm{Im}}(A\Theta a^2)
\\&\phantom{{}={}}{}
+\mathop{\mathrm{Im}}(A\Theta q_1)
,
\end{split}
\label{eq:230303b}
\end{align}
and further compute the right-hand side below. 
The terms to be negligible are gathered and denoted by 
\begin{align*}
Q
&=
\alpha r^{-1-\nu'/2}\Theta
+\alpha^{-1}Ar^{-1-\nu'/2}\Theta A
\\&\phantom{{}={}}{}
+\bigl(\alpha^2|\chi_{m,n}'|+\alpha|\chi_{m,n}''|a^{-2}\bigr)\mathrm e^{2\theta}
+p\cdot |\chi_{m,n}'|a^{-2}\mathrm e^{2\theta}p
,
\end{align*}
and it will be estimated later on. 

The first term on the right-hand side of \eqref{eq:230303b} is rewritten and bounded by 
using \eqref{eq:23030413} and the Cauchy--Schwarz inequality as 
\begin{align*}
\tfrac12\mathop{\mathrm{Im}}(A\Theta A a^2A)
&
=
\tfrac14A a(\partial_r\Theta) A
+\tfrac12A  a^{-1}(\partial_rV)\Theta A
\\&
\ge 
\tfrac{\alpha}2A a^2\Theta A
+\tfrac12A  a^{-1}(\partial_rV)\Theta A
-C_1Q
.
\end{align*}
Here and below, $c_*,C_*>0$ are uniform in $\alpha\ge 1$ and $n\ge m\ge 0$. 
By the adjoint of \eqref{eq:23030413}, \eqref{eq:23030414}, \eqref{eq:230825344} and \eqref{eq:2308273},
we compute the second term of \eqref{eq:230303b} as 
\begin{align*}
\tfrac12 \mathop{\mathrm{Im}}(A\Theta L)
&=
\tfrac12 \mathop{\mathrm{Im}}\bigl(p_r^* a^{-1}\Theta p_i\ell_{ij}p_j\bigr)
+\tfrac14 \bigl( a^{-3}(\partial_rV)+(d-1)r^{-1}a^{-1}\bigr)\Theta L
\\&
=
\tfrac12 \mathop{\mathrm{Im}}\bigl(p_ip_k(\nabla_kr) a^{-1}\Theta \ell_{ij}p_j\bigr)
+\tfrac12 \mathop{\mathrm{Re}}\bigl(p_k(\nabla^2r)_{ik} a^{-1}\Theta \ell_{ij}p_j\bigr)
\\&\phantom{{}={}}{}
+\tfrac14 \bigl( a^{-3}(\partial_rV)+(d-1)r^{-1}a^{-1}\bigr)\Theta L
\\&
\ge 
-\tfrac{\alpha}2 \Theta L
+\tfrac12 r^{-1} a^{-1}\Theta L
-C_2Q
.
\end{align*}
The third and fourth terms of \eqref{eq:230303b} are computed and bounded by the Cauchy--Schwarz inequality as 
\begin{align*}
-\tfrac12\mathop{\mathrm{Im}}(A\Theta a^2)
+\mathop{\mathrm{Im}}(A\Theta q_1)
&\ge 
\tfrac{\alpha}2a^2\Theta
-\tfrac12 a^{-1}(\partial_rV)\Theta 
-C_3Q
.
\end{align*}
Thus, \eqref{eq:230303b} is bounded as 
\begin{align}
\begin{split}
\mathop{\mathrm{Im}}(A\Theta (H-\lambda))
&\ge 
\tfrac{\alpha}2A a^2\Theta A
+\tfrac12A  a^{-1}(\partial_rV)\Theta A
-\tfrac{\alpha}2\Theta L
+\tfrac12 r^{-1} a^{-1}\Theta L
\\&\phantom{{}={}}{}
-\tfrac{\alpha}2a^2\Theta 
-\tfrac12 a^{-1}(\partial_rV)\Theta 
-C_4Q
.
\end{split}
\label{eq:23030321b}
\end{align}

We continue to compute the right-hand side of \eqref{eq:23030321b}. 
Using Lemma~\ref{lem:230304}, we combine the third and fifth terms of \eqref{eq:23030321b} as 
\begin{align*}
-\tfrac{\alpha}2 \Theta L
+\tfrac{\alpha}2a^2 \Theta 
&
=
\tfrac{\alpha}2\mathop{\mathrm{Re}}\bigl(\Theta A a^2A\bigr)
+\alpha q_1 \Theta
-\alpha\mathop{\mathrm{Re}}(\Theta(H-\lambda))
\\&
=
\tfrac{\alpha}2Aa^2\Theta  A
-\tfrac{\alpha}4a^{-1}\bigl(\partial_ra(\partial_r \Theta )\bigr)
+\alpha q_1 \Theta
-\alpha\mathop{\mathrm{Re}}( \Theta(H-\lambda))
\\&
\ge 
\tfrac{\alpha}2Aa^2\Theta  A
+\alpha^2a^{-1} (\partial_rV)   \Theta 
-\alpha^3 a^2   \Theta 
-C_5Q
-\alpha\mathop{\mathrm{Re}}( \Theta(H-\lambda))
.
\end{align*}
Similarly, by Lemma~\ref{lem:230304} and Assumption~\ref{cond:230806b}
the second, fourth and sixth terms of \eqref{eq:23030321b} are combined as 
\begin{align*}
&
\tfrac12A  a^{-1}(\partial_rV)\Theta A
+\tfrac12 r^{-1} a^{-1}\Theta L
-\tfrac12 a^{-1}(\partial_rV)\Theta 
\\&
=
\tfrac12 r^{-1} a^{-3}(a^2-r\partial_rV)\Theta L
+\tfrac14a^{-1}\bigl(\partial_ra\bigl(\partial_ra^{-3}(\partial_rV)\Theta\bigr)\bigr)
\\&\phantom{{}={}}{}
- a^{-3}(\partial_rV) q_1\Theta
+\mathop{\mathrm{Re}}\bigl( a^{-3}(\partial_rV)\Theta (H-\lambda)\bigr)
\\&
\ge 
c_1 r^{-1} a^{-1}\Theta L
+\alpha^2 a^{-1}(\partial_rV)\Theta   
-C_6Q
+\mathop{\mathrm{Re}}\bigl( a^{-3}(\partial_rV)\Theta (H-\lambda)\bigr)
.
\end{align*}
Thus, we obtain 
\begin{align}
\begin{split}
\mathop{\mathrm{Im}}(A\Theta (H-\lambda))
&\ge 
\alpha Aa^2\Theta  A
+c_1r^{-1} a^{-1}\Theta L
-\alpha^3a^2\Theta 
\\&\phantom{{}={}}{}
+2\alpha^2a^{-1}(\partial_rV)\Theta  
-C_7Q
+\mathop{\mathrm{Re}}( f_1(H-\lambda))
.
\end{split}
\label{eq:23030322b}
\end{align}
Here and below, $f_*$ satisfy the same conditions as $f$ in the assertion. 

The first to fourth terms of \eqref{eq:23030322b} are further bounded as 
\begin{align*}
&\alpha Aa^2\Theta  A
+c_1r^{-1} a^{-1}\Theta L
-\alpha^3a^2\Theta 
+2\alpha^2a^{-1}(\partial_rV)\Theta  
\\&
\ge 
\alpha (A+\mathrm i\alpha) a^2\Theta (A-\mathrm i\alpha)
+c_1r^{-1} a^{-1}\Theta L
-C_8Q
\\&\ge 
c_2 (A+\mathrm i\alpha) r^{-1}a\Theta (A-\mathrm i\alpha)
+c_2r^{-1} \Theta L
-C_8Q
\\&\ge 
c_2 A r^{-1}a\Theta A
+c_2r^{-1} \Theta L
-c_2 \alpha^2 r^{-1}a\Theta 
+c_2\alpha   r^{-2}\Theta 
-C_9Q
\\&\ge 
\tfrac{c_2}2 a^{-1}\bigl(\partial_ra \bigl(\partial_rr^{-1}a^{-1}\Theta\bigr) \bigr)
+c_2 r^{-1}a\Theta 
-2c_2 r^{-1}a^{-1}q_1\Theta 
-c_2 \alpha^2 r^{-1}a\Theta 
\\&\phantom{{}={}}{}
+c_2\alpha   r^{-2}\Theta 
-C_9Q
+2c_2 \mathop{\mathrm{Re}}\bigl( r^{-1}a^{-1}\Theta (H-\lambda)\bigr)
\\&\ge 
c_2 (\alpha^2+1) r^{-1}a\Theta 
-C_{10}Q
+2c_2 \mathop{\mathrm{Re}}\bigl( r^{-1}a^{-1}\Theta (H-\lambda)\bigr)
.
\end{align*}
We also bound $Q$, similarly to so far, as 
\begin{align*}
Q
&
\le 
C_{11}\alpha r^{-1-\nu'/2}\Theta
+C_{11}\alpha^2(\chi_{m-1,m+1}+\chi_{n-1,n+1})\tau^{-1}\mathrm e^{2\theta}
\\&\phantom{{}={}}{}
+2\alpha^{-1}\mathop{\mathrm{Re}}\bigl(r^{-1-\nu'/2}a^{-2}\Theta (H-\lambda)\bigr)
+2\mathop{\mathrm{Re}}\bigl(|\chi_{m,n}'|a^{-2}(H-\lambda)\bigr).
.
\end{align*}
Therefore, we obtain 
\begin{align*}
\mathop{\mathrm{Im}}(A\Theta (H-\lambda))
&\ge 
c_2 (\alpha^2+1) r^{-1}a\Theta 
-C_{12}\alpha r^{-1-\nu'/2}\Theta
\\&\phantom{{}={}}{}
-C_{12}\alpha^2(\chi_{m-1,m+1}+\chi_{n-1,n+1})\tau^{-1}\mathrm e^{2\theta}
+\mathop{\mathrm{Re}}( f_2(H-\lambda))
.
\end{align*}
Hence, by letting $N_0\in\mathbb N_0$ be large enough, we obtain the assertion. 
\end{proof}

\begin{proof}[Proof of Propositions~\ref{prop:23082419}]
Let $\phi \in \mathcal B_0^*(\lambda)$ and $\lambda\ge 0$
satisfy the assumption of the assertion.
Choose $N_0 \ge 0$ as in Lemma~\ref{lem:23082716},
and we evaluate the inequality from Lemma~\ref{lem:23082716} for the state $\chi_{m-2,n+2}\phi$.
Then $\alpha\ge 1$ and $n\ge m\ge N_0$
\begin{align}\label{eq:evaluate-in-the-state}
\begin{split}
\bigl\| (r^{-1}a\chi_{m,n})^{1/2}\mathrm e^{\alpha S} \phi \bigr\|^2 
&\le 
C_1\bigl\| \chi_{m-1,m+1}^{1/2}\mathrm e^{\alpha S}\phi\bigr\|^2 
+ C_12^{-n/2}\bigl\| \chi_{n-1,n+1}^{1/2}\mathrm e^{\alpha S}\phi \bigr\|^2.
\end{split}
\end{align}
Since $\mathrm e^{\alpha S}\phi \in \mathcal B_0^*(\lambda)$ for any $\alpha\ge 0$, 
the second term on the right-hand side of \eqref{eq:evaluate-in-the-state} 
vanishes in the limit $n \to \infty$.
Hence, by the Lebesgue monotone convergence theorem we obtain
\begin{equation}\label{eq:n-infty}
\bigl\| (r^{-1}a\bar \chi_m)^{1/2}\mathrm e^{\alpha S} \phi \bigr\|^2 
\le
 C_1\bigl\| \chi_{m-1,m+1}\mathrm e^{\alpha S}\phi \bigr\|^2,
\end{equation}
Now assume $\bar\chi_{m+2}\phi \neq 0$,
and then, we can deduce a contradiction from \eqref{eq:n-infty} as $\alpha\to\infty$. 
Thus, $\bar\chi_{m+2}\phi =0$.
By invoking the unique continuation property for the second-order elliptic operator
$H$, we conclude that $\phi \equiv 0$. We refer to~\cite{W}
for the unique continuation property when $d \ge 2$. The case $d = 1$ follows
from the uniqueness of solutions to ordinary differential equations.
\end{proof}

\begin{proof}[Proof of Theorem~\ref{thm:230607}]
The assertion is obvious by Propositions~\ref{prop:23082418} and \ref{prop:23082419}. 
\end{proof}

\section{LAP bounds}\label{2411244}

\subsection{Main proposition}

In this section, we prove the LAP bounds, or Theorem~\ref{thm:2308271915}. 
The proof depends on a commutator argument as in the previous section, 
but we use a different weight function of the form 
\begin{align*}
\Theta=\Theta_{R}^\delta=\bar\chi_0\theta,
\end{align*}
where $\bar\chi_0$ is from \eqref{eq:11.7.11.5.14} with $n=0$, and 
\begin{align}
\theta
=\int_0^{\tau/R}(1+s)^{-1-\delta}\,\mathrm ds
=\delta^{-1}\bigl[1-(1+\tau/R)^{-\delta}\bigr]
;\quad \delta>0,\ \ R\ge1.
\label{eq:15.2.15.5.8}
\end{align}
Note we put $\bar\chi_0$ to remove a singularity of $A$ at the origin. 
In this section, we denote derivatives of functions in $\tau$ by primes, such as 
\begin{align}
\theta'=R^{-1}(1+\tau/R)^{-1-\delta},\quad
\theta''=-R^{-2}(1+\delta)(1+\tau/R)^{-2-\delta}.
\label{eq:15.2.15.5.9}
\end{align}
We quote the following estimates from \cite{IS}. 

\begin{lemma}\label{lem:theta-inequality}
For any $\delta>0$, there exist $c, C, C_k>0,\ k=2, 3, \ldots$, such that for any $k=2,3,\ldots$ 
and uniformly in $R\ge 1$
\begin{align*}
&
\min\{c,c\tau/R\} \le \theta \le \min\{C, \tau/R\},
\\&
c(\min\{R, \tau\})^\delta \tau^{-1-\delta}\theta \le \theta' \le \tau^{-1}\theta,
\\&
0\le(-1)^{k-1}\theta^{(k)} \le C_k\tau^{-k}\theta.
\end{align*}
\end{lemma}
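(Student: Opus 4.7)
The plan is to reduce every bound in the lemma to an elementary one-variable estimate in $u=\tau/R\in[0,\infty)$, using the explicit formula
\[
\theta(u)=\delta^{-1}\bigl[1-(1+u)^{-\delta}\bigr]=\int_0^u(1+s)^{-1-\delta}\,\mathrm ds,
\]
and the fact that the integrand is positive, smooth, and strictly decreasing. This immediately removes the parameter $R\ge 1$ from the analysis, so all constants one obtains will automatically be $R$-uniform.

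For the first display, the upper bound $\theta\le \min\{\delta^{-1},\tau/R\}$ follows by (i) dominating the integrand by $1$ to get $\theta(u)\le u$, and (ii) letting $u\to\infty$ to get $\theta(u)\le\delta^{-1}$. For the lower bound, monotonicity of $(1+s)^{-1-\delta}$ on $[0,u]$ gives
\[
\theta(u)\ge u(1+u)^{-1-\delta},
\]
whence $\theta(u)\ge 2^{-1-\delta}u$ for $u\le 1$ and $\theta(u)\ge\theta(1)>0$ for $u\ge 1$, which is the claim with some $c>0$.

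For the second display, the upper bound $\theta'\le\tau^{-1}\theta$ rewrites, after multiplying by $\tau$ and using $\theta'=R^{-1}(1+u)^{-1-\delta}$, as exactly the monotonicity bound $u(1+u)^{-1-\delta}\le\theta(u)$ observed above. For the lower bound I would split on $\tau\lessgtr R$. When $\tau\le R$ the factor $(\min\{R,\tau\})^\delta\tau^{-1-\delta}$ collapses to $\tau^{-1}$, so the inequality becomes $c\tau^{-1}\theta\le\theta'$; combining $\theta\le\tau/R$ with $(1+u)^{-1-\delta}\ge 2^{-1-\delta}$ reduces this to a uniform constant. When $\tau>R$, the inequality $1+u\le 2u$ gives $\theta'\ge 2^{-1-\delta}R^\delta\tau^{-1-\delta}$, which together with $\theta\le\delta^{-1}$ again yields the desired bound with $c=c(\delta)$.

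For the third display, a straightforward induction on $k$ yields
\[
\theta^{(k)}=(-1)^{k-1}\Bigl(\prod_{j=1}^{k-1}(j+\delta)\Bigr)R^{-k}(1+\tau/R)^{-k-\delta},
\]
which makes the sign $(-1)^{k-1}\theta^{(k)}\ge 0$ transparent. The upper bound then rewrites as $u^k(1+u)^{-k-\delta}\le C_k'\,\theta(u)$, and I would verify this in the same two cases: for $u\le 1$ use $u^k(1+u)^{-k-\delta}\le u\le 2^{1+\delta}\theta(u)$, and for $u\ge 1$ use $u^k(1+u)^{-k-\delta}\le u^{-\delta}\le 1\le\theta(1)^{-1}\theta(u)$. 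Combining with the constant $\prod_{j=1}^{k-1}(j+\delta)$ produces $C_k$.

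There is no real obstacle here — the whole lemma is a careful calculus exercise; the only point requiring attention is to keep the constants manifestly independent of $R\ge 1$, which is automatic because everything reduces to inequalities for $\theta$ as a function of the scaled variable $u=\tau/R$.
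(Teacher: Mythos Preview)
Your proposal is correct; each of the three displays is handled cleanly by the substitution $u=\tau/R$ and elementary monotonicity bounds on the integrand $(1+s)^{-1-\delta}$, and the constants you produce are manifestly independent of $R$. The paper itself does not give a proof of this lemma at all --- it simply writes ``We omit the proof, see e.g.\ [IS, Lemma~4.2]'' --- so there is nothing to compare against beyond noting that your argument is presumably the intended one from the cited reference.
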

\begin{proof}
We omit the proof, see, e.g., \cite[Lemma~4.2]{IS}.
\end{proof}

\begin{lemma}\label{lem:230925b}
There exist $c,C>0$ such that for any $(\lambda,x)\in \mathbb{R} \times\mathbb R^d$
\begin{align*}
c r(x) a(\lambda,x)^{-1}\le \tau(\lambda,x)\le C r(x) a(\lambda,x)^{-1}
\end{align*}
\end{lemma}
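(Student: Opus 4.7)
The plan is to reduce the comparison $\tau \asymp r/a$ to monotonicity properties of a simpler explicit function of $s$. The only case that matters is $\lambda \ge 0$, since $a(\lambda,\cdot)$ depends on $\lambda$ only through $\max\{\lambda,0\}$, and for $\lambda<0$ the quantity coincides with $\lambda=0$; so I assume $\lambda\ge 0$ throughout.

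First, I would invoke the same two-sided bound already used in Lemma~\ref{lem:230925}, namely
\[
c_{1}\bigl(\lambda+\langle s\rangle^{-\nu}\bigr)^{1/2}\le a(\lambda,s)\le C_{1}\bigl(\lambda+\langle s\rangle^{-\nu}\bigr)^{1/2},
\]
which is a direct consequence of Assumption~\ref{cond:230806}. Setting $g(s)=\bigl(\lambda+\langle s\rangle^{-\nu}\bigr)^{-1/2}$, this gives $a(\lambda,s)^{-1}\asymp g(s)$ uniformly in $\lambda\ge 0$ and $s\ge 0$, so the whole problem reduces to proving $\int_{0}^{r}g(s)\,\mathrm ds\asymp r\,g(r)$.

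For the upper bound I would just use that $g$ is monotonically increasing in $s\ge 0$ (because $\langle s\rangle^{-\nu}$ is decreasing), which immediately yields
\[
\int_{0}^{r}g(s)\,\mathrm ds\le r\,g(r).
\]
For the lower bound the key observation is that $g(r/2)$ and $g(r)$ are comparable uniformly in $\lambda$ and $r$: the elementary inequality $\langle r/2\rangle^{-\nu}\le 2^{\nu}\langle r\rangle^{-\nu}$ (valid for all $r\ge 0$) gives
\[
\frac{g(r/2)}{g(r)}=\Bigl(\frac{\lambda+\langle r\rangle^{-\nu}}{\lambda+\langle r/2\rangle^{-\nu}}\Bigr)^{1/2}\ge 2^{-\nu/2}.
\]
Since $g$ is increasing, $g(s)\ge g(r/2)\ge 2^{-\nu/2}g(r)$ on $[r/2,r]$, hence
\[
\int_{0}^{r}g(s)\,\mathrm ds\ge \int_{r/2}^{r}g(s)\,\mathrm ds\ge \tfrac{r}{2}\cdot 2^{-\nu/2}g(r).
\]
Combining the two sides with $a(\lambda,s)^{-1}\asymp g(s)$ and specializing at $s=r$ yields the claim.

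There is no real obstacle here; the only point requiring care is that every constant must be independent of $\lambda$, which is automatic from the explicit form of $g$ and the fact that the ratio $g(r/2)/g(r)$ is bounded below by $2^{-\nu/2}$ uniformly in $\lambda\ge 0$. This uniformity is exactly what is needed to obtain a single pair $(c,C)$ in the statement, in contrast with the $\lambda$-dependent bounds \eqref{eq:23111912} and \eqref{eq:23111913} quoted earlier.
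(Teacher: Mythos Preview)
Your proof is correct and follows essentially the same approach as the paper: both reduce to the comparison $a(\lambda,s)^{-1}\asymp(\max\{\lambda,0\}+\langle s\rangle^{-\nu})^{-1/2}$ from \eqref{eq:231113}, use monotonicity of this function for the upper bound, and obtain the lower bound by integrating over $[r/2,r]$ together with the doubling comparison $\langle r/2\rangle^{-\nu}\le 2^{\nu}\langle r\rangle^{-\nu}$. Your write-up is a bit more explicit (introducing $g$ and spelling out the ratio $g(r/2)/g(r)\ge 2^{-\nu/2}$), but the argument is the same.
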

\begin{proof}
By \eqref{eq:231113}, the bound from above is easy. 
As for the bound from below, we estimate it by using \eqref{eq:231113} as 
\begin{align*}
\tau(\lambda,x)
&
\ge c_1 \int_{r/2}^r\bigl(\max\{\lambda,0\}+\langle s\rangle^{-\nu}\bigr)^{-1/2}\,\mathrm ds
\\&
\ge \tfrac{c_1}2  r\bigl(\max\{\lambda,0\}+\langle \tfrac12 r\rangle^{-\nu}\bigr)^{-1/2}
\\&
\ge c_2 ra(\lambda,x)^{-1}
.
\end{align*}
Hence, we are done. 
\end{proof}

We now present a key estimate for the proof of LAP bounds. 
It is essentially a consequence of commutator computations. 

\begin{proposition}\label{prop:23112315}
Fix any $\rho>0$, $\omega \in (0,\pi)$ and $\delta\in (0,(\nu'-\nu)/(2+\nu))$. 
Then there exist $C>0$ and $n \in\mathbb N_0$ such that 
for any $R\ge 1$, $z=\lambda\pm\mathrm i\mu\in \Gamma_{\pm}(\rho,\omega)$ and $\psi \in \mathcal B(\lambda)$
the state $\phi=R(z)\psi$ satisfies
\begin{align*}
&
 \|\theta'^{1/2}\phi\|^2 
+  \|a^{-1}\theta'^{1/2}p_r\phi\|^2 
+ \bigl\langle \phi,a^{-2}\langle \tau\rangle^{-1}\theta L\phi\bigr\rangle
\\&\le 
C\left(
\|\psi\|_{\mathcal B(\lambda)}^2
+ \|\phi\|_{\mathcal B^*(\lambda)} \|\psi\|_{\mathcal B(\lambda)} 
+ \|a^{-1}p_r\phi\|_{\mathcal B^*(\lambda)} \|\psi\|_{\mathcal B(\lambda)} 
+ R^{-1}\|\chi_n^{1/2}\phi\|^2 \right).
\end{align*}
\end{proposition}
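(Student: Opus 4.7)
The plan is to establish the estimate via a positive commutator argument for the observable $A\Theta$, in direct analogy with Lemma~\ref{lem:2308271}, and then to convert the resulting operator inequality into the quadratic-form bound by invoking the resolvent identity $(H-\lambda)\phi=\pm\mathrm i\mu\phi+\psi$. The only substantial change from Lemma~\ref{lem:2308271} is that the exponential weight $\mathrm e^{2\theta}$ there is replaced by the Yosida weight $\theta$ of \eqref{eq:15.2.15.5.8}, so that $\mathrm i[A,\theta]=a^{-2}\theta'$ with $\theta'$ as in \eqref{eq:15.2.15.5.9}, and the cutoff $\chi_{m,n}$ is replaced by the single fixed cutoff $\bar\chi_0$ removing a neighbourhood of the origin.

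First, I would compute $\mathrm{Im}(A\Theta(H-\lambda))$ by decomposing $H-\lambda$ as in Lemma~\ref{lem:230304} and treating each of the four resulting pieces just as in the proof of Lemma~\ref{lem:2308271}. The Mourre-type bound $a^2-r\partial_rV\ge 2\lambda-\epsilon V\ge 0$ from Assumption~\ref{cond:230806} is used to rescue the positive sign of the spherical contribution $a^{-2}\langle\tau\rangle^{-1}\theta L$, via the same shift $A\mapsto A\pm\mathrm i\theta'$ and reorganisation as in the passage from \eqref{eq:23030321} to \eqref{eq:23030322}. The restriction $\delta<(\nu'-\nu)/(2+\nu)$ is exactly what allows the short-range error $r^{-1-\nu'/2}\Theta$ coming from $q_1$ to be absorbed into the leading term $r^{-1}a\theta$, as in the step leading to \eqref{eq:230827}. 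Commutators with $\bar\chi_0$ localise derivatives to $\{1\le\tau\le 2\}$, where $\theta\le 2R^{-1}$ by Lemma~\ref{lem:theta-inequality}; this is the origin of the $R^{-1}\|\chi_n^{1/2}\phi\|^2$ term, for any fixed $n$ with $\chi_n\equiv 1$ on $\{\tau\le 2\}$. The resulting operator inequality has the schematic form
\[
\mathrm{Im}(A\Theta(H-\lambda))\ge c\bigl(\theta'+p_r^* a^{-2}\theta' p_r+a^{-2}\langle\tau\rangle^{-1}\theta L\bigr)-CR^{-1}\chi_n+\mathrm{Re}(f(H-\lambda)),
\]
with $f$ a real bounded function satisfying $\|f\phi\|_{\mathcal B^*(\lambda)}\lesssim\|\phi\|_{\mathcal B^*(\lambda)}+\|a^{-1}p_r\phi\|_{\mathcal B^*(\lambda)}$.

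Next, I would pair this inequality with $\phi=R(z)\psi$ and substitute $(H-\lambda)\phi=\pm\mathrm i\mu\phi+\psi$. The $\mathrm{Re}(f(H-\lambda))$ contribution becomes $\mathrm{Re}\langle\phi,f\psi\rangle$ (the $\mu$-piece vanishes since $f$ is real), bounded by $\|\psi\|_{\mathcal B(\lambda)}(\|\phi\|_{\mathcal B^*(\lambda)}+\|a^{-1}p_r\phi\|_{\mathcal B^*(\lambda)})$ via the $\mathcal B$--$\mathcal B^*$ duality. The left-hand side splits as $\mathrm{Im}\langle\phi,A\Theta\psi\rangle\pm\mu\,\mathrm{Re}\langle\phi,A\Theta\phi\rangle$; the first term is handled analogously. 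The residual $\mu$-term is controlled by combining the a priori identity $\mu\|\phi\|_{L^2}^2=|\mathrm{Im}\langle\psi,\phi\rangle|\le\|\psi\|\,\|\phi\|$ coming from self-adjointness of $H$ with the Cauchy--Schwarz bound $|\mathrm{Re}\langle\phi,A\Theta\phi\rangle|\le\|\Theta^{1/2}A\phi\|\,\|\Theta^{1/2}\phi\|$ and the boundedness of $\Theta$, absorbing into the right-hand side.

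The main obstacle will be the commutator bookkeeping: all of the sign-definite and error terms must combine so that the three leading positive quantities emerge with \emph{simultaneously} positive constants uniformly in $R\ge 1$ and $z\in\Gamma_\pm(\rho,\omega)$. In particular, the spherical term appears with a negative sign in the direct expansion of $\tfrac12\mathrm{Im}(A\Theta L)$ and is rescued only by the Mourre inequality combined with the shift trick $A\mapsto A\pm\mathrm i\theta'$, and the uniformity in $z$ must be tracked through every replacement of $(H-\lambda)\phi$ by $\pm\mathrm i\mu\phi+\psi$.
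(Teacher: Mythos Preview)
Your commutator scheme for the operator inequality is essentially the paper's own (compare \eqref{eq:230804}--\eqref{eq:230923}); that part is fine.  The genuine gap is in your treatment of the $\mu$-term.

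When you pair with $\phi=R(z)\psi$ and substitute $(H-\lambda)\phi=\psi\pm\mathrm i\mu\phi$, the left-hand side produces the term $\pm\mu\,\mathrm{Re}\langle\phi,A\Theta\phi\rangle$.  Your proposed control of this via $\mu\|\phi\|^2\le\|\psi\|\,\|\phi\|$ together with $|\mathrm{Re}\langle\phi,A\Theta\phi\rangle|\le\|\Theta^{1/2}A\phi\|\,\|\Theta^{1/2}\phi\|$ does not close: since $\Theta$ is merely bounded while $a^{-1}$ grows like $\langle x\rangle^{\nu/2}$ (uniformly so as $\lambda\to 0^+$ and for $\lambda\le 0$), the factor $\|\Theta^{1/2}A\phi\|$ is an $L^2$-norm of $a^{-1}p_r\phi$ with an \emph{unbounded} weight.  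This is not dominated by any of the admissible quantities $\|\phi\|_{\mathcal B^*(\lambda)}$, $\|a^{-1}p_r\phi\|_{\mathcal B^*(\lambda)}$, $\|\psi\|_{\mathcal B(\lambda)}$, nor can it be absorbed into the left-hand side, which carries the decaying weight $\theta'$ rather than the bounded weight $\Theta$.  The best one extracts from your two ingredients is a bound of the form $C\|A\phi\|_{L^2}\|\psi\|_{L^2}$, which is not uniform in $z\in\Gamma_\pm(\rho,\omega)$.

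The paper circumvents this by working with $(H-z)$ from the outset, so that the $\mu$-contribution appears already at the operator level as $\mp\mu\,\mathrm{Re}(A\Theta)$.  This is then handled by an iterative bootstrap (the chain \eqref{eq:23111321}--\eqref{eq:2311202023}): one first estimates $\mp\mu\,\mathrm{Re}(A\Theta)\ge -C\mu-C\mu\,p_i\bar\chi_0a^{-2}p_i$, feeds $p^2$ back into $H-z$, and repeats; each pass gains extra spatial decay on the residual $\mu^2$-term.  The output is a combination of $\mathrm{Re}(\gamma(H-z))$ with bounded $\gamma$ and, crucially, a term $-C(H-\bar z)\tau(H-z)$.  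When paired with $\phi$ the latter becomes $-C\langle\psi,\tau\psi\rangle$, which is controlled by $\|\psi\|_{\mathcal B(\lambda)}^2$---this is exactly why that squared term sits on the right-hand side of the proposition.  Without a mechanism producing such a second-order-in-$(H-z)$ remainder (or an equivalent device), the $\mu$-term cannot be closed uniformly down to the real axis.
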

\begin{proof}
Fix $\rho$, $\omega$ and $\delta$ as in the assertion. 
Clearly it suffices to show there exist $n\in\mathbb N_0$ and $c_1,C_1>0$ such that 
\begin{equation}
\begin{split}
\mathop{\mathrm{Im}}( A\Theta(H-z)) 
&\ge 
c_1 \theta' +c_1 p_r^*a^{-2}\theta'p_r +c_1a^{-2}\langle \tau\rangle^{-1}\theta L
\\&\phantom{{}={}}{}
-\C R^{-1}\chi_n 
+\mathop{\mathrm{Re}}( \gamma(H-z))
-C_1 (H-\bar z)\tau(H-z),
\end{split}
\label{eq:181218b}
\end{equation} 
uniformly in $R\ge 1$ and $z=\lambda\pm\mathrm i\mu\in \Gamma_{\pm}(\rho,\omega)$, 
where $\gamma=\gamma_{z,R}$ is a certain uniformly bounded function: 
$|\gamma| \le \C$. 
In the below, we compute and bound the quantity on the left-hand side of \eqref{eq:181218b}. 
As in the proof of Propositions~\ref{prop:23082418} and \ref{prop:23082419}, 
we gather \textit{admissible error terms} and write them for short as 
\[
Q=
r^{-1-\nu'/2}\Theta
+Ar^{-1-\nu'/2}\Theta A
+R^{-1} p_i\chi_1 p_i
. 
\]
For the moment, the following estimates are all uniform in $R\ge 1$, $z=\lambda\pm\mathrm i\mu\in\Gamma_{\pm}(\rho,\omega)$
and $n\in \mathbb N$, and we will fix $n$ only at the last step of the proof.

Similarly to \eqref{eq:230303}, we first rewrite  
\begin{align}
\begin{split}
\mathop{\mathrm{Im}}(A\Theta (H-z))
&
=
\tfrac12\mathop{\mathrm{Im}}(A\Theta Aa^2A)
+\tfrac12\mathop{\mathrm{Im}}(A\Theta L)
-\tfrac12\mathop{\mathrm{Im}}(A\Theta a^2)
\\&\phantom{{}={}}{}
+\mathop{\mathrm{Im}}(A\Theta q_1)
-\mathop{\mathrm{Im}}(A\Theta z)
.
\end{split}
\label{eq:230804}
\end{align}
Partially similarly to \eqref{eq:23092320}, the first term of \eqref{eq:230804} is computed as 
\begin{align*}
\tfrac12\mathop{\mathrm{Im}}(A\Theta Aa^2A)
&=
\tfrac14Aa(\partial_r\Theta) A
-\tfrac14Aa^{-1}(\partial_ra^2)\Theta A
\\&
=
\tfrac14 A\Theta' A
+\tfrac12A a^{-1}(\partial_rV)\Theta A
.
\end{align*}
We can also use a part of \eqref{eq:23092321} to compute 
the second term of \eqref{eq:230804} as 
\begin{align*}
\tfrac12\mathop{\mathrm{Im}}(A\Theta L)
&=
\tfrac12 \mathop{\mathrm{Im}}\bigl(p_ip_k(\nabla_kr) a^{-1}\Theta \ell_{ij}p_j\bigr)
+\tfrac12 \mathop{\mathrm{Re}}\bigl(p_k(\nabla^2r)_{ik} a^{-1}\Theta \ell_{ij}p_j\bigr)
\\&\phantom{{}={}}{}
+\tfrac14 \bigl( a^{-3}(\partial_rV)+(d-1)r^{-1}a^{-1}\bigr)\Theta L
\\&
=
-\tfrac14  a^{-2}\Theta' L
+\tfrac12 r^{-1}a^{-1}\Theta L
.
\end{align*}
We can directly compute and bound the third to fifth terms of \eqref{eq:230804} by using 
the Cauchy--Schwarz inequality as 
\begin{align*}
&-\tfrac12\mathop{\mathrm{Im}}(A\Theta a^2)
+\mathop{\mathrm{Im}}(A\Theta q_1)
-\mathop{\mathrm{Im}}(A\Theta z)
\\&=
\tfrac14a^{-1}(\partial_r\Theta a^2)
+\mathop{\mathrm{Im}}(A\Theta q_1)
+\tfrac12\lambda a^{-1}(\partial_r\Theta )
\mp\mu\mathop{\mathrm{Re}}(A\Theta )
\\&
\ge 
\tfrac14 \Theta' 
-\tfrac12a^{-1}(\partial_rV)\Theta 
+\tfrac12\lambda  a^{-2}\Theta'
\mp\mu\mathop{\mathrm{Re}}(A\Theta )
-\C Q
.
\end{align*}
Thus, it follows that 
\begin{align}
\begin{split}
\mathop{\mathrm{Im}}(A\Theta (H-z))
&
\ge 
\tfrac14 A\Theta' A
+\tfrac12A a^{-1}(\partial_rV)\Theta A
-\tfrac14  a^{-2}\Theta' L
+\tfrac12 r^{-1}a^{-1}\Theta L
\\&\phantom{{}={}}{}
+\tfrac14 \Theta' 
-\tfrac12a^{-1}(\partial_rV)\Theta 
+\tfrac12\lambda  a^{-2}\Theta'
\mp\mu\mathop{\mathrm{Re}}(A\Theta )
-C_3 Q
.
\end{split}
\label{eq:23080416}
\end{align}

We continue to compute the right-hand side of \eqref{eq:23080416}. 
We can combine the third, fifth and seventh terms of \eqref{eq:23080416} as 
\begin{align}
\begin{split}
-\tfrac14 a^{-2}\Theta' L
+\tfrac14 \Theta' 
+\tfrac12\lambda  a^{-2}\Theta'
&
=
\tfrac14  \mathop{\mathrm{Re}}\bigl(a^{-2}\Theta' Aa^2A\bigr)
+\tfrac12  a^{-2} q_1\Theta'
\\&\phantom{{}={}}{}
-\tfrac12  \mathop{\mathrm{Re}}\bigl(a^{-2}\Theta' (H-z)\bigr)
\\&
\ge 
\tfrac14  A\Theta' A
-\C Q
-\D{0}R^{-1}\chi_1
\\&\phantom{{}={}}{}
-\tfrac12  \mathop{\mathrm{Re}}\bigl(a^{-2}\Theta' (H-z)\bigr)
.
\end{split}
\label{eq:2311202021}
\end{align}
The second and sixth terms of \eqref{eq:23080416} are rewritten and bounded by the 
Cauchy--Schwarz inequality as 
\begin{align}
\begin{split}
\tfrac12A a^{-1}(\partial_rV)\Theta A
-\tfrac12a^{-1}(\partial_rV)\Theta 
&
=
\tfrac12\mathop{\mathrm{Im}}\bigl(a\bigl(\partial_r a^{-3}(\partial_rV)\Theta\bigr)A\bigr)
\\&\phantom{{}={}}{}
-\tfrac12a^{-3}(\partial_rV)\Theta L
-a^{-3}(\partial_rV)q_1\Theta 
\\&\phantom{{}={}}{}
+\mathop{\mathrm{Re}}\bigl( a^{-3}(\partial_rV)\Theta (H-z)\bigr)
\\&
\ge 
-\tfrac12a^{-3}(\partial_rV)\Theta L
-\C Q
-\D{0}R^{-1}\chi_1
\\&\phantom{{}={}}{}
+\mathop{\mathrm{Re}}\bigl( a^{-3}(\partial_rV)\Theta (H-z)\bigr)
.
\end{split}
\label{eq:2311202022}
\end{align}
The eighth term of \eqref{eq:23080416} requires a technical treatment as follows. 
We first use the Cauchy--Schwarz inequality and Lemma~\ref{lem:theta-inequality} as 
\begin{align}
\begin{split}
\mp\mu\mathop{\mathrm{Re}}(A\Theta )
&\ge 
-\C \mu
-\D{0} \mu p_i\bar \chi_0a^{-2}p_i
\\&
\ge 
-\C\mu
-2\D{1} \mu \mathop{\mathrm{Re}}\bigl(\bar\chi_0a^{-2}\theta (H-z)\bigr)
\\&
\ge 
\mp \D{0}\mathop{\mathrm{Re}}(\mathrm i(H-z))
-\C \mu^2\bar\chi_0r^{\nu/2-1}a^{-2}\theta
\\&\phantom{{}={}}{}
-\D{0}(H-\bar z)\tau(H-z)
.
\end{split}
\label{eq:23111321}
\end{align}
The first and third terms on the right-hand side of \eqref{eq:23111321} are negligible, see the last two terms of \eqref{eq:181218b}.
The second term of \eqref{eq:23111321} is bounded as follows: 
\begin{align}
\begin{split}
-\D{0}\mu^2\bar\chi_0r^{\nu/2-1}a^{-2}\theta
&=
\pm \tfrac12\D{0}\mu\mathop{\mathrm{Re}}\bigl(\bigl(\partial_r\bar\chi_0r^{\nu/2-1}a^{-2}\theta\bigr)p_r\bigr)
\\&\phantom{{}={}}{}
\pm \D{0}\mu\mathop{\mathrm{Im}}\bigl(\bar\chi_0r^{\nu/2-1}a^{-2}\theta(H-z)\bigr)
\\&
\ge 
-\C\mu
-\D{0}\mu p_i\bar\chi_0 r^{2\nu-4}a^{-2}\theta p_i
-\D{0}Q
-\D{0}R^{-1}\chi_1
\\&\phantom{{}={}}{}
-\D{0}\mu^2\bar\chi_0r^{\nu-2}a^{-2}\theta
- \D{0}(H-\bar z)\tau(H-z)
\\&\ge 
-\C \mu
-2C_9\mu \mathop{\mathrm{Re}}\bigl(\bar\chi_0 r^{2\nu-4}a^{-2}\theta(H-z)\bigr)
\\&\phantom{{}={}}{}
-\D{0}Q
-\D{0}R^{-1}\chi_1
\\&\phantom{{}={}}{}
-\D{1}\mu^2\bar\chi_0r^{\nu-2}a^{-2}\theta
- \D{1}(H-\bar z)\tau(H-z)
\\&\ge 
\mp \D{0}\mathop{\mathrm{Re}}(\mathrm i(H-z))
-\D{0}Q
-\D{0}R^{-1}\chi_1
\\&\phantom{{}={}}{}
-\C\mu^2\bar\chi_0r^{\nu-2}a^{-2}\theta
-\D{0}(H-\bar z)\tau(H-z)
.
\end{split}
\label{eq:2311132132}
\end{align}
Note the fourth term on the right-hand side of \eqref{eq:2311132132} has a better decay rate than on the left-hand side, 
or the second term on the right-hand side of \eqref{eq:23111321}, and all the other terms of \eqref{eq:2311132132} are negligible. 
Thus, by repeating this procedure, the second term of \eqref{eq:23111321} can get any extra decay rates up to negligible errors,
so that we obtain 
\begin{align}
\begin{split}
\mp\mu\mathop{\mathrm{Re}}(A\Theta )
&\ge 
\C \mathop{\mathrm{Re}}(\mathrm i(H-z))
-\C Q
\\&\phantom{{}={}}{}
-\D{0}R^{-1}\chi_1
-\D{0}(H-\bar z)\tau(H-z)
.
\end{split}
\label{eq:2311202023}
\end{align}
Hence, by \eqref{eq:23080416}, \eqref{eq:2311202021}, \eqref{eq:2311202022} and \eqref{eq:2311202023}
\begin{align}
\begin{split}
\mathop{\mathrm{Im}}(A\Theta (H-z))
&
\ge 
\tfrac12 A\Theta' A
+\tfrac12 r^{-1}a^{-1}\Theta L
-\tfrac12a^{-3}(\partial_rV)\Theta L
-\C Q
\\&\phantom{{}={}}{}
-\D{0}R^{-1}\chi_1
+\mathop{\mathrm{Re}}(\gamma_1(H-z))
-\D{0}(H-\bar z)\tau(H-z)
,
\end{split}
\label{eq:230923}
\end{align}
where $\gamma_1$ is a bounded function uniformly in $z$ and $R$.

Now we deduce the main positive contributions from the first to third terms of \eqref{eq:230923}. 
By using Assumption~\ref{cond:230806b}, we bound them below as 
\begin{align*}
&
\tfrac12 A\Theta' A
+\tfrac12 r^{-1}a^{-1}\Theta L
-\tfrac12a^{-3}(\partial_rV)\Theta L
\\&
\ge 
\tfrac12 A\Theta' A
+\tfrac12 r^{-1}a^{-3}(2\lambda-\epsilon V)\Theta L
\\&
\ge
\tfrac12 A\Theta' A
+\tfrac{\epsilon}4 r^{-1}a^{-1}\Theta L
\\&
\ge
2c_2 A\Theta' A
+c_2 a^{-2}\Theta' L
+c_2 \tau^{-1}a^{-2}\Theta L
-\C Q
\\&
\ge
c_2 \Theta' 
+c_2 A\Theta' A
+c_2 a^{-2}\tau^{-1}\Theta L
-\C  Q
\\&\phantom{{}={}}{}
-\D{0}R^{-1}\chi_1
-\D{0} \mu
+2c_2 \mathop{\mathrm{Re}}\bigl(a^{-2} \Theta' (H-z)\bigr)
\\&
\ge
c_2 \theta' 
+c_2 p_r^*a^{-2}\theta' p_r
+c_2 a^{-2}\langle \tau\rangle^{-1}\theta L
-\C  Q
\\&\phantom{{}={}}{}
-\D{0}R^{-1}\chi_1
\mp \D{1} \mathop{\mathrm{Re}}(\mathrm i(H-z))
+2c_2 \mathop{\mathrm{Re}}\bigl(a^{-2} \Theta' (H-z)\bigr)
.
\end{align*}
Here we also bound $Q$ as
\begin{align*}
Q
&
\le 
\C r^{-1-\nu'/2}\Theta
+\D{0} R^{-1}\chi_2
+2\mathop{\mathrm{Re}}\bigl(a^{-2}r^{-1-\nu'/2}\Theta (H-z)\bigr)
\\&\phantom{{}={}}{}
+2R^{-1} \mathop{\mathrm{Re}}(\chi_1(H-z))
\\&
\le 
\D{0} r^{-1-\nu'/2}\bar\chi_{n}\theta
+\C R^{-1}\chi_n
+2\mathop{\mathrm{Re}}\bigl(a^{-2}r^{-1-\nu'/2}\Theta (H-z)\bigr)
\\&\phantom{{}={}}{}
+2R^{-1} \mathop{\mathrm{Re}}(\chi_1(H-z))
.
\end{align*} 
Therefore, \eqref{eq:230923} and the above estimates imply 
\begin{align*}
\mathop{\mathrm{Im}}(A\Theta (H-z))
&
\ge 
c_2 \theta' 
-\D{1} r^{-1-\nu'/2}\bar\chi_{n}\theta
+c_2 p_r^*a^{-2}\theta' p_r
+c_2 a^{-2}\langle\tau\rangle^{-1}\theta L
\\&\phantom{{}={}}{}
-\C R^{-1}\chi_n
+\mathop{\mathrm{Re}}(\gamma_2(H-z))
-\D{0}(H-\bar z)\tau(H-z)
.
\end{align*} 
By letting $n\in\mathbb N_0$ be sufficiently large we obtain \eqref{eq:181218b}, hence the assertion. 
\end{proof}

\subsection{Proof of LAP bounds}

Now we prove Theorem~\ref{thm:2308271915}. 
We combine Proposition~\ref{prop:23112315} and contradiction.

\begin{proof}[Proof of Theorem~\ref{thm:2308271915}]
Fix any $\rho>0$ and $\omega\in (0,\pi)$  as in the assertion.

\smallskip
\noindent
\textit{Step 1.}\ 
Here we assume 
\begin{align}\label{eq:lap-single-bound}
\|\phi\|_{\mathcal B^*(\lambda)} \le \CI\|\psi\|_{\mathcal B(\lambda)};\quad 
\phi=R(z)\psi,\ z\in \Gamma_\pm(\rho,\omega),\ \psi\in \mathcal B(\lambda), 
\end{align}
and deduce the assertion by using \eqref{eq:lap-single-bound}. 
Note all the following estimates are uniform in $ z\in \Gamma_\pm(\rho,\omega)$ and $\psi\in \mathcal B(\lambda)$. 
By Proposition~\ref{prop:23112315} with any $\delta\in (0,(\nu'-\nu)/(2+\nu))$ and \eqref{eq:lap-single-bound}, 
we can find $\C>0$ such that uniformly in $\epsilon\in(0,1)$ and $R\ge 1$
\begin{equation*}
 \|a^{-1}\theta'^{1/2}p_r\phi\|^2 
+ \bigl\langle \phi,a^{-2}\langle \tau\rangle^{-1}\theta L\phi\bigr\rangle
\le 
\epsilon\|a^{-1}p_r\phi\|_{{\mathcal B^*(\lambda)}}^2 
+ \epsilon^{-1}\D{0}\|\psi\|_{\mathcal B(\lambda)}^2.
\end{equation*}
Noting the expressions \eqref{eq:15.2.15.5.8} and \eqref{eq:15.2.15.5.9} from $\theta$ and $\theta'$, respectively, 
take the supremum of each term on the above left-hand side in $R\ge 1$, and we obtain
\begin{equation*}
\|a^{-1}p_r\phi\|_{{\mathcal B^*(\lambda)}}^2 
+ \bigl\langle \phi,a^{-2}\langle \tau\rangle^{-1} L\phi\bigr\rangle
\le 
\epsilon \C\|a^{-1}p_r\phi\|_{{\mathcal B^*(\lambda)}}^2 
+\epsilon^{-1}\D{0} \|\psi\|_{\mathcal B(\lambda)}^2.
\end{equation*}
Therefore, by letting $\epsilon\in(0, \D{0}^{-1})$ it follows that
\begin{equation*}
\|a^{-1}p_r\phi\|_{{\mathcal B^*(\lambda)}}^2 
+ \bigl\langle \phi,a^{-2}\langle \tau\rangle^{-1} L\phi\bigr\rangle
\le 
\C\|\psi\|_{\mathcal B(\lambda)}^2.
\end{equation*}
Hence, the assertion reduces to the single bound \eqref{eq:lap-single-bound}.

\smallskip
\noindent
\textit{Step 2.}\ 
Next we prove \eqref{eq:lap-single-bound} by contradiction. 
Let us discuss only the upper sign. 
Assume there exist $z_k\in \Gamma_+(\rho,\omega)$ and $\psi_k\in{\mathcal B(\lambda)}$ such that 
\begin{equation}\label{eq:psi_k}
\lim_{k \to \infty}\|\psi_k\|_{\mathcal B(\lambda)}=0, 
\quad 
\|\phi_k\|_{{\mathcal B^*(\lambda)}}=1;\ \ 
\phi_k=R(z_k)\psi_k.
\end{equation}
By choosing a subsequence, 
we may let $z_k$ converge to some $z\in \overline{\Gamma_+(\rho,\omega)}$ as $k\to\infty$.
If $\mathop{\mathrm{Im}}z>0$, then 
\eqref{eq:psi_k} contradicts the bounds
\begin{equation*}
\|\phi_k\|_{{\mathcal B^*(\lambda)}} 
\le 
\C \|R(z_k)\psi_k\| 
\le 
\|R(z_k)\|_{\mathcal L(\mathcal H)}\|\psi_k\|
\le 
\C\|R(z_k)\|_{\mathcal L(\mathcal H)}\|\psi_k\|_{\mathcal B(\lambda)}
\end{equation*}
and the norm continuity of $R(z)\in \mathcal L(\mathcal H)$ in $z\in \rho(H)$. 
Thus, we have a  real limit
\begin{equation}\label{eq:lim-z_k}
\lim_{k\to\infty}z_k = z = \lambda\in [0,\rho].
\end{equation}

Fix any $s\in (1/2,(2+\nu')/(2(2+\nu))$.
By choosing a subsequence again, we may further let $\langle \tau\rangle^{-s}\phi_k\in \mathcal H$ converge weakly to some 
$\langle \tau\rangle^{-s}\phi \in \mathcal H$. 
Then, in fact, $\langle \tau\rangle^{-s}\phi_k$ converges strongly to $\langle \tau\rangle^{-s}\phi$ in $\mathcal H$.
To see this, take any $t\in(1/2,s)$ and $f\in C_{\mathrm c}^\infty(\mathbb R)$ 
with $f=1$ on a neighborhood of $[0,\rho]$, and decompose $\langle \tau\rangle^{-s}\phi_k$ for any $m\in\mathbb N_0$ as 
\begin{align}
\begin{split}
\langle \tau\rangle^{-s}\phi_k 
&= 
(\langle \tau\rangle^{-s}f(H))(\chi_m \langle \tau\rangle^s)(\langle \tau\rangle^{-s}\phi_k) 
\\&\phantom{{}={}}{}
+ (\langle \tau\rangle^{-s}f(H)\langle \tau\rangle^s)(\bar\chi_m \langle \tau\rangle^{t-s})(\langle \tau\rangle^{-t}\phi_k) 
\\&\phantom{{}={}}{}
+ \langle \tau\rangle^{-s}(1-f(H))R(z_k)\psi_k.
\end{split}
\label{eq:18121814}
\end{align}
By \eqref{eq:psi_k} the last term on the right-hand side 
of \eqref{eq:18121814} converges to $0$ in $\mathcal H$.
Since $\langle \tau\rangle^{-s}f(H)\langle \tau\rangle^s$ is a bounded operator on $\mathcal H$, 
by choosing $m\in\mathbb N_0$ 
sufficiently large the second term of \eqref{eq:18121814}
can be arbitrarily small in $\mathcal H$.
Lastly, since $\langle\tau\rangle^{-s}f(H)$ is a compact operator on $\mathcal H$, 
for any fixed $m\in\mathbb N_0$ the first term of \eqref{eq:18121814} converges strongly in $\mathcal H$.
Therefore, $\langle\tau\rangle^{-s}\phi_k$ is a Cauchy sequence in $\mathcal H$ and converges there to $\langle\tau\rangle^{-s}\phi$:
\begin{equation}\label{eq:lim-phi_k}
\lim_{k\to\infty}\langle\tau\rangle^{-s}\phi_k 
= \langle\tau\rangle^{-s}\phi \ \text{ in } \mathcal H.
\end{equation}

By \eqref{eq:psi_k}, \eqref{eq:lim-z_k} and \eqref{eq:lim-phi_k}, it follows that
\begin{equation}\label{eq:1810031853}
(H-\lambda)\phi = 0 \ \text{ in the distributional sense}.
\end{equation}
In addition, we can verify $\phi\in{\mathcal B_0^*(\lambda)}$.
In fact, letting $\delta\in (2s-1,(\nu'-\nu)/(2+\nu))$, 
we apply Proposition~\ref{prop:23112315} to $\phi_k$ and let $k\to\infty$.
Then by Lemma~\ref{lem:theta-inequality}, \eqref{eq:lim-phi_k} and \eqref{eq:psi_k}, we obtain for all $R\ge 1$
\begin{align}\label{eq:1810031903}
\|\theta'^{1/2}\phi\| 
\le 
\C R^{-1}\|\chi_n^{1/2}\phi\|.
\end{align}
By letting $R\to\infty$ in \eqref{eq:1810031903}, 
we obtain $\phi\in{\mathcal B_0^*(\lambda)}$.
Therefore, by \eqref{eq:1810031853} and Theorem~\ref{thm:230607}
it follows that $\phi=0$

Now we have a contradiction.
In fact, as in Step 1, we can show
\begin{align*}
1=\|\phi_k\|^2_{\mathcal B^*(\lambda)}
\le \C\bigl(\|\psi_k\|_{\mathcal B(\lambda)}^2+\|\chi_n^{1/2}\phi_k\|^2\bigr),
\end{align*}
but the right-hand side can be made smaller than $1$ by taking $k$ large enough.
\end{proof}

\section{Radiation condition bounds}\label{2411245}
In this section, we discuss the radiation condition bounds and their relevant consequences.
Throughout the section, we prove the statements only for the upper sign for simplicity.
\subsection{Commutator estimate}
\begin{lemma}\label{lem:2312221340}
  Let $b$ be the function defined by (\ref{eq:2312221350}), and let $\rho >0$ and $\omega \in (0,\pi)$.
  Then there exists $C>0$ such that uniformly in $z \in \overline{\Gamma_{\pm}(\rho,\omega)}$
  \begin{align*}
    &|b| \leq C,\quad \Im{b} \geq -C\langle x \rangle^{-1-\nu} a^{-2}, \quad |p_{r}b +b^{2} -2(z-V)| \leq C \langle x \rangle^{-2}. 
  \end{align*}
\end{lemma}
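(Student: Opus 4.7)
The plan is to verify each of the three inequalities by direct computation, exploiting the fact that the attractivity bound $V\le -c\langle x\rangle^{-\nu}$ combined with $\Re z\ge 0$ keeps $z-V$ in the right half-plane, uniformly away from the branch cut of the square root.

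For the pointwise bound $|b|\le C$, I would first use that $|V|\le C$ and $|z|\le\rho$, so $|z-V|$ is uniformly bounded and hence $|\sqrt{2(z-V)}|=\sqrt{2|z-V|}\le C$. For the correction, the key lower bound $|z-V|\ge\Re(z-V)=\lambda-V\ge -V\ge c\langle x\rangle^{-\nu}$ combined with $|\partial_r V|\le C\langle x\rangle^{-\nu-1}$ from Assumption~\ref{cond:230806} yields
\[
\Bigl|\frac{\partial_r V}{4(z-V)}\Bigr|\le C\langle x\rangle^{-1},
\]
which is uniformly bounded.

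For the lower bound on $\Im b$ in the upper-sign case ($z=\lambda+\mathrm i\mu$, $\mu\ge 0$), the branch convention $\Re\sqrt{w}>0$ places $\sqrt{w}$ in the closed first quadrant whenever $\Re w>0$ and $\Im w\ge 0$; applied to $w=2(z-V)$ this gives $\Im\sqrt{2(z-V)}\ge 0$. Multiplying the correction by the conjugate,
\[
\Im\Bigl(-\mathrm i\frac{\partial_r V}{4(z-V)}\Bigr)=-\frac{(\lambda-V)\partial_r V}{4|z-V|^2},
\]
and using $(\lambda-V)/|z-V|^2\le 1/(\lambda-V)=2/a^2$ together with $|\partial_r V|\le C\langle x\rangle^{-\nu-1}$ produces the claimed estimate $\Im b\ge -C\langle x\rangle^{-1-\nu}a^{-2}$.

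The third inequality is the real content. Setting $c=\sqrt{2(z-V)}$, we have $b=c-\mathrm i\partial_r V/(2c^2)$, and differentiating $c^2=2(z-V)$ gives the crucial identity $\partial_r c=-\partial_r V/c$. Squaring $b$ yields
\[
b^2-2(z-V)=-\mathrm i\frac{\partial_r V}{c}-\frac{(\partial_r V)^2}{4c^4},
\]
while a direct computation of $p_r b=-\mathrm i\partial_r b$ using the identity for $\partial_r c$ produces
\[
p_r b=\mathrm i\frac{\partial_r V}{c}-\frac{\partial_r^2 V}{2c^2}-\frac{(\partial_r V)^2}{c^4}.
\]
The $\pm\mathrm i\partial_r V/c$ terms cancel exactly, leaving
\[
p_r b+b^2-2(z-V)=-\frac{\partial_r^2 V}{2c^2}-\frac{5(\partial_r V)^2}{4c^4},
\]
and the bound $|c^2|\ge 2(\lambda-V)\ge 2c\langle x\rangle^{-\nu}$ together with $|\partial_r^{\,k}V|\le C\langle x\rangle^{-\nu-k}$ for $k=1,2$ makes each summand $O(\langle x\rangle^{-2})$. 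The main (only) non-routine point is verifying the exact cancellation of the first-order terms in $\partial_r V$, which is precisely what motivates the correction $\mp\mathrm i\partial_r V/(4(z-V))$ in the definition of $b$; everything else is straightforward bookkeeping from Assumption~\ref{cond:230806}.
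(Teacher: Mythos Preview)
Your proof is correct and follows exactly the same direct-computation approach as the paper; indeed you supply more detail, since the paper dismisses the first two bounds as ``clear from the definition'' and for the third simply records the identity
\[
p_r b + b^2 - 2(z-V) = -\tfrac{\partial_r^2 V}{4(z-V)} - \tfrac{5(\partial_r V)^2}{16(z-V)^2},
\]
which coincides with your formula once you substitute $c^2 = 2(z-V)$. One small caveat: your bounds $|z-V|\ge \lambda-V$ and $1/(\lambda-V)=2/a^2$ tacitly assume $\Re z\ge 0$, which can fail on $\overline{\Gamma_\pm(\rho,\omega)}$ when $\omega>\pi/2$; in that regime one should instead use the cone condition $\mu\ge |\lambda|\sin\omega$ to obtain $|z-V|\ge c(\omega)\,a^2$, a routine adjustment.
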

\begin{proof}
  It is clear from the definition (\ref{eq:2312221350}) that the first and second inequalities hold.
  Since we can write, 
  \begin{align*}
    p_{r}b +b^{2} -2(z-V) &= -\tfrac{\partial^{2}_{r}V}{4(z-V)}-\tfrac{5(\partial_{r}V)^{2}}{16(z-V)^{2}}.
  \end{align*}
  Hence, the last bound is also clear.
\end{proof}

To simplify a commutator computation in Lemma \ref{lem:2312221423}, we introduce $B$ as
\begin{equation*}
  B = \Re{p_{r}} = p_{r} -\tfrac{i}{2} \Delta r=p_{r} -\tfrac{i(d-1)}{2r} ,
\end{equation*} 
and decompose $H-z$ into a sum of radial and spherical components.

\begin{lemma}\label{lem:2312221420}
  Let $\rho >0$ and $\omega \in (0, \pi)$. Then there exists a complex-valued function $q_{2}$
  and a constant $C>0$ such that uniformly in $z \in \overline{\Gamma_{\pm}(\rho,\omega)} $ 
  \begin{equation*}
    H-z = \tfrac{1}{2}(B+b)(B-b) +\tfrac{1}{2}L + q_{2}\quad  \text{on} \quad \mathbb{R}^{d} \backslash \{0\}
  \end{equation*}
  where $q_{2} \in L^{\infty}(\mathbb{R}^{d})$ satisfies that there exists $C>0$ such that 
  for any $x \in \mathbb{R}^{d}$
  \begin{equation*}
    |q_{2}(x)| \leq C \langle x \rangle^{-1-\nu^{\prime}/2}.
  \end{equation*}  
\end{lemma}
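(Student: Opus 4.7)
The plan is to prove the identity by a direct algebraic expansion that parallels the proof of Lemma~\ref{lem:230304}, using Lemma~\ref{lem:2312221340} at the end to absorb the near-eikonal error of $b$ into $q_2$.

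First I would compute $(B+b)(B-b)$ as an operator product. Since $b=b_z$ depends only on $r$ and $B = p_r - \tfrac{\mathrm{i}(d-1)}{2r}$, the multiplication part of $B$ commutes with $b$, so $[B,b] = [p_r, b] = -\mathrm{i}(\partial_r b)$, and hence
\[
(B+b)(B-b) = B^{2} - [B,b] - b^{2} = B^{2} + \mathrm{i}(\partial_{r} b) - b^{2}.
\]
Next I would convert $B^{2}$ into the standard radial Laplace component. From $p_r = B + \tfrac{\mathrm{i}(d-1)}{2r}$, $p_r^{*} = B - \tfrac{\mathrm{i}(d-1)}{2r}$, and $[p_r, r^{-1}] = \mathrm{i}r^{-2}$, a short commutator computation gives $p_{r}^{*}p_{r} = B^{2} + \tfrac{(d-1)(d-3)}{4 r^{2}}$; combining with $p^{2} = p_{r}^{*} p_{r} + L$ (from $\delta_{ij} = \ell_{ij} + r^{-2} x_{i}x_{j}$) and $H = \tfrac{1}{2}p^{2} + V + q$, one rearranges to the claimed decomposition on $\mathbb{R}^{d} \setminus \{0\}$ with
\[
q_{2} = (V-z) + q + \tfrac{1}{2} b^{2} - \tfrac{\mathrm{i}}{2}(\partial_{r} b) + \tfrac{(d-1)(d-3)}{8 r^{2}}.
\]

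Finally I would bound $q_{2}$ by invoking Lemma~\ref{lem:2312221340}. Reading $p_{r} b = -\mathrm{i}(\partial_{r} b)$ as a function, the third estimate of that lemma gives
\[
\Bigl|\tfrac{1}{2}b^{2} - \tfrac{\mathrm{i}}{2}(\partial_{r} b) + V - z\Bigr| = \tfrac{1}{2}\bigl|p_{r} b + b^{2} - 2(z-V)\bigr| \le C\langle x\rangle^{-2},
\]
uniformly in $z \in \overline{\Gamma_{\pm}(\rho,\omega)}$. Together with $|q| \le C'\langle x\rangle^{-1-\nu'/2}$ from Assumption~\ref{cond:230806} (which provides the slowest decay, since $\nu' \le 2$) and with the centrifugal correction $\tfrac{(d-1)(d-3)}{8 r^{2}} = O(\langle x\rangle^{-2})$ at infinity, the bound $|q_{2}(x)| \le C\langle x\rangle^{-1-\nu'/2}$ follows. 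The main work is bookkeeping: the various $r^{-2}$ contributions produced by the $B$-formalism must combine consistently at both ends of the radius, exactly as in Lemma~\ref{lem:230304} (so that the near-origin behavior of $q_{2}$ is controlled in the same way there), and the $\mp$ signs inherited from the branch choice in \eqref{eq:2312221350} must be tracked carefully so the final estimate is uniform throughout $\overline{\Gamma_{\pm}(\rho,\omega)}$. The identity itself is forced by algebra; the substance of the lemma is the quantitative decay of $q_{2}$, which reduces directly to the near-eikonal property of $b$ encoded in Lemma~\ref{lem:2312221340}.
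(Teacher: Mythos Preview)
Your proof is correct and follows essentially the same route as the paper's: expand $(B+b)(B-b)$, relate $B^2$ to $p_r^*p_r$ via the standard $\tfrac{(d-1)(d-3)}{4r^2}$ correction (equivalently the $\tfrac14\partial_r(\Delta r)+\tfrac18(\Delta r)^2$ terms the paper writes), and then invoke Lemma~\ref{lem:2312221340} to bound the near-eikonal remainder $\tfrac12(p_rb+b^2-2(z-V))$ by $C\langle x\rangle^{-2}$, leaving $q$ as the slowest-decaying piece. The paper's proof is just a one-line citation of \eqref{eq:230825251} leading to the same formula for $q_2$ (note its displayed $(B-b)(B+b)$ is a typo for $(B+b)(B-b)$), so your argument is the intended one spelled out in slightly more detail.
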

\begin{proof}
  Using the expression \eqref{eq:230825251}, we can write
  \begin{align*}
  H-z &= \tfrac{1}{2}(B-b)(B+b) +\tfrac{1}{2}L + q \\
      & \quad +\tfrac{1}{2}(p_{r}b +b^{2}-2(z-V)) +\tfrac{1}{4} \partial_{r}(\Delta r) +\tfrac{1}{8}(\Delta r)^{2}. 
  \end{align*}
  Hence, we are done.
\end{proof}

Let us introduce the weight
\begin{equation*}
  \Theta = \Theta^{\delta}_{R,l} = \bar{\chi}_{l} \theta;\quad \delta>0,\ \ R\ge1, \ \ l =1,2,3,\ldots.
\end{equation*}
where $\bar{\chi}_{l}$ is a function defined in \eqref{eq:11.7.11.5.14} and $\theta$ is  
\begin{align*}
  \theta
  =\int_0^{\tau/R}(1+s)^{-1-\delta}\,\mathrm ds
  =\delta^{-1}\bigl[1-(1+\tau/R)^{-\delta}\bigr].
\end{align*}

Next we state and prove the key commutator estimate of the section 
that is needed for our proof of the radiation condition bounds.
\begin{lemma}\label{lem:2312221423}
  Let $\rho>0$, $\omega \in (0, \pi)$ and fix any $\delta \in \left(0,\min\left\{\frac{2-\nu}{2(2+\nu)},\frac{\nu^{\prime}-\nu}{2+\nu}\right\} \right)$ and $\beta \in (0,\beta_{c,\rho})$.
  Then there exists $C >0, l \geq 1$ such that for any $R\geq 1$, $z =\lambda+i\mu \in \Gamma_{+}(\rho,\omega)$ 
  and $\psi \in C^{\infty}_{0}(\mathbb{R}^{d})$ the state $\phi= R(z)\psi$ satisfies 
  \begin{equation*}
    \begin{split}
      &\|(\bar{\chi}_{l}\theta^{\prime})^{1/2} \Theta^{\beta-1/2}a^{-1}(p_{r}-b_{z})\phi\|^{2} 
      + \langle\phi,\tau^{2\beta} a^{-2} \langle \tau \rangle^{-1} L\phi \rangle \\
      &\quad \leq C(\|\Theta^{\beta} a^{-1}(p_{r}-b_{z})\phi\|_{\mathcal{B}^{*}(\lambda)} \|\theta^{\beta}\psi\|_{\mathcal{B}(\lambda)}
      +\|\langle \tau \rangle^{-1/2-\min\left\{\frac{2-\nu}{2(2+\nu)},\frac{\nu^{\prime}-\nu}{2+\nu}\right\}+\delta}  \theta^{\beta} \phi\|^{2} \\
      &\quad + \|\langle \tau \rangle^{1/2-\min\left\{\frac{2-\nu}{2(2+\nu)},\frac{\nu^{\prime}-\nu}{2+\nu}\right\} +\delta} \theta^{\beta}\psi\|^{2} ).
    \end{split}
  \end{equation*} 
\end{lemma}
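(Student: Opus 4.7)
The plan is to adapt the distorted-commutator arguments used for Propositions~\ref{prop:23082418}, \ref{prop:23082419} and \ref{prop:23112315} to an \emph{outgoing-wave} observable built from $a^{-1}(p_r - b)$, with the extra weight $\Theta^{2\beta}$ inserted to produce the $\tau^{2\beta}$ decay on the left-hand side. Concretely, I would compute the imaginary part of the pairing
\[
\mathop{\mathrm{Im}}\bigl\langle a^{-1}(p_r-b)\phi,\ \Theta^{2\beta}(H-z)\phi\bigr\rangle
 =\mathop{\mathrm{Im}}\bigl\langle a^{-1}(p_r-b)\phi,\ \Theta^{2\beta}\psi\bigr\rangle.
\]
The right-hand side is immediately controlled by $\|\Theta^{\beta}a^{-1}(p_{r}-b_{z})\phi\|_{\mathcal B^*(\lambda)}\|\theta^{\beta}\psi\|_{\mathcal B(\lambda)}$ after redistributing half of $\Theta^{2\beta}$ onto each factor (using $\Theta\lesssim\theta$ and vice versa on the relevant supports). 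The goal is to extract from the left-hand side the two positive quantities in the lemma.

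To evaluate the left-hand side I would substitute the decomposition of Lemma~\ref{lem:2312221420},
\[
H-z=\tfrac12(B+b)(B-b)+\tfrac12 L+q_{2},
\]
and rewrite $B-b=(p_r-b)-\tfrac{\mathrm i}{2}\Delta r$. Commuting $\Theta^{2\beta}$ through $(B+b)$ produces a contribution
\[
\tfrac12(B+b)^*[\,\Theta^{2\beta},(B+b)\,]\cdots \sim \beta(\bar\chi_l\theta')\Theta^{2\beta-1}\,a^{-1}(p_r-b)^*a^{-1}(p_r-b),
\]
which, after symmetrization, yields the first positive term on the left. Pairing $a^{-1}(p_r-b)\phi$ against $\tfrac12\Theta^{2\beta} L\phi$ and integrating by parts in the radial direction (using \eqref{eq:2308273} and the identity $[L,f(\tau)]=0$ modulo derivatives of $\tau$) produces, after invoking Assumption~\ref{cond:230806} via $-a^{-1}\partial_rV\ge (2-\epsilon)a^{-1}(-V)\gtrsim r^{-1}a$ exactly as in \eqref{eq:23092321}, the spherical-direction quantity $\tau^{2\beta}a^{-2}\langle\tau\rangle^{-1}L$.

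The remaining terms are error terms: (i) the "eikonal defect" $p_rb+b^2-2(z-V)=O(\langle x\rangle^{-2})$ and $\mathop{\mathrm{Im}}b\ge -C\langle x\rangle^{-1-\nu}a^{-2}$ from Lemma~\ref{lem:2312221340}; (ii) the short-range term $q_2$ from Lemma~\ref{lem:2312221420}; (iii) the commutator $[\,\Theta^{2\beta},b\,]$ and $(d-1)r^{-1}a^{-1}$ type contributions; and (iv) a boundary term $\mp\mu\mathop{\mathrm{Re}}(\cdots)$ that should be handled by the same iterative Yosida-style bootstrap as in \eqref{eq:23111321}--\eqref{eq:2311202023}, so that the $\mu$-dependent piece is absorbed into $\mathop{\mathrm{Re}}(\gamma(H-z))$. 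All of these are estimated by Cauchy--Schwarz with a small parameter $\epsilon$: one part $\epsilon\|\Theta^\beta a^{-1}(p_r-b)\phi\|_{\mathcal B^*(\lambda)}^{2}$ is absorbed into the left-hand side or into the first term of the right-hand side, while the $\epsilon^{-1}$ part becomes the $\langle\tau\rangle^{-1/2-\min\{\cdot\}+\delta}\theta^\beta\phi$ and $\langle\tau\rangle^{1/2-\min\{\cdot\}+\delta}\theta^\beta\psi$ norms. The choice of $\delta<\min\{(2-\nu)/(2(2+\nu)),(\nu'-\nu)/(2+\nu)\}$ ensures that these weighted norms have strictly better decay than $L^2_{1/2,\lambda}$, and the cutoff $\bar\chi_l$ with $l$ large removes any singularity of $b$ and $a^{-1}$ at the origin.

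The main obstacle will be the bookkeeping of the threshold $\beta_{c,\rho}$: each of the three components of $\beta_{c,\rho}$ corresponds to balancing a distinct error against the positive main term. The factor $(2-\nu)/(2(2+\nu))$ governs the exchange $r^{-1}a\leftrightarrow \tau^{-1}$ (via Lemmas~\ref{lem:230925} and~\ref{lem:230925b}) when trading $\Theta^{2\beta-1}\theta'$ for $\tau^{2\beta}\langle\tau\rangle^{-1}$; $(\nu'-\nu)/(2+\nu)$ controls the $q_2\Theta^{2\beta}$ contribution; and the virial-type component $\tfrac{2+3\epsilon}{8}\inf\liminf\tau a r^{-1}$ arises from measuring how much of the $\epsilon$-positivity in $-r\partial_rV\ge(2-\epsilon)(-V)$ is consumed by the $2\beta\theta'$ produced when $\Theta^{2\beta}$ is commuted past $(B+b)$. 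Once the three constraints are explicitly tracked through the Cauchy--Schwarz inequalities, taking $\beta<\beta_{c,\rho}$ leaves a strict margin that absorbs all error terms and yields the stated bound.
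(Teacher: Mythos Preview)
Your overall strategy is the paper's: the same outgoing conjugate operator, the same weight $\Theta^{2\beta}$, and the same decomposition from Lemma~\ref{lem:2312221420}. Two points in your error analysis would, however, prevent the argument from closing as written.

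First, item (iv) is a phantom. Unlike in Proposition~\ref{prop:23112315}, here the spectral parameter $z$ is absorbed into $b$ via Lemma~\ref{lem:2312221420}; once you substitute $H-z=\tfrac12(B+b)(B-b)+\tfrac12 L+q_2$ there is no residual $-z$ and hence no $\mp\mu\,\mathrm{Re}(\cdots)$ term to bootstrap. The machinery of \eqref{eq:23111321}--\eqref{eq:2311202023} is simply not invoked here.

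Second, and more seriously, symmetrizing the radial block does not only produce $[\Theta^{2\beta},B]$: the factor $a^{-1}$ in the conjugate operator also generates a commutator term with weight $\tfrac14 a^{-1}(\partial_rV)\,\Theta^{2\beta}$ sandwiched between $(a^{-1}(B-b))^*$ and $a^{-1}(B-b)$. Since $a^{-1}|\partial_rV|\lesssim\langle x\rangle^{-1-\nu/2}\sim\tau^{-1}$, this weight is of the \emph{same} order as your main positive weight $\beta\bar\chi_l\theta'\Theta^{2\beta-1}$ and cannot be absorbed as an error. The paper's resolution is to combine it with the $(\Im b)$ contribution, which you list in (i) as an error controlled by its lower bound, and observe an exact leading-order cancellation:
\[
\tfrac14 a^{-1}\partial_rV+\tfrac12(\Im b)\,a
=\Bigl(\tfrac12\Im\sqrt{2(z-V)}+O(\mu^2\langle x\rangle^{-1-\nu})\Bigr)a
\ge \tfrac12\sqrt{\mu}\bigl(1-Cr^{-1+\nu/2}\bigr)a,
\]
which is nonnegative on $\operatorname{supp}\bar\chi_l$ for $l$ large. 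Using only the one-sided bound $\Im b\ge -C\langle x\rangle^{-1-\nu}a^{-2}$ discards exactly the part of $\Im b$ that kills the $\partial_rV$ term. A parallel cancellation occurs in the $L$-block, where $\tfrac12 r\Im b$ combines with $-\tfrac12 a^{-2}r\partial_rV$; it is \emph{there}, not in the $(B+b)(B-b)$ commutator, that the virial component $\tfrac{2+3\epsilon}{8}\inf\liminf\tau a r^{-1}$ of $\beta_{c,\rho}$ appears, balancing $\beta r a^{-1}\tau^{-1}$ against $1-\tfrac38(2-\epsilon)$.
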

\begin{proof}
  Fix $\rho,\omega,\delta,\beta_{c,\rho}$ in the assertion. 
  Clearly it suffices to show there exist $l\geq 1$ and $c,C>0$ such that 
  \begin{equation}\label{eq:231223715}
    \begin{split}
      \Im((a^{-1}(B-b))^{*}\Theta^{2\beta}(H-z)) &\geq c (a^{-1}(p_{r}-b))^{*} \theta^{\prime} \bar{\chi}_{l} \Theta^{2\beta-1} (a^{-1}(p_{r}-b)) \\ 
                                           &\quad +c p_{i}  a^{-2}\langle \tau \rangle^{-1} \ell_{i,j}\Theta^{2\beta} p_{j} \\
                                           &\quad -C\langle \tau \rangle^{-1-2\min\{(2-\nu)/2(2+\nu),(\nu^{\prime}-\nu)/(2+\nu)\}+2\delta} \theta^{2\beta} \\
                                           &\quad-\Re(\gamma \theta^{2\beta}(H-z)),
    \end{split}
  \end{equation}
  uniformly $R\geq 1$, $z=\lambda +i\mu \in \Gamma_{+}(\rho,\omega)$,
  where $\gamma$ is a complex-valued function satisfying $|\gamma| \leq C \langle \tau \rangle^{-1-2\min\left\{(2-\nu)/(2(2+\nu)),(\nu^{\prime}-\nu)/(2+\nu)\right\}+2\delta}$.
  In the below, we compute and bound the quantity on the left-hand side of \eqref{eq:231223715}.
  We gather admissible error terms and write them for short as
  \begin{equation*}\label{eq:231223719}
    \begin{split}
      Q&=\langle \tau \rangle^{-1-2\min\{(2-\nu)/2(2+\nu),(\nu^{\prime}-\nu)/(2+\nu)\}+2\delta} \theta^{2\beta} \\
      &\quad + p_{i} \langle \tau \rangle^{-1-2\min\{(2-\nu)/2(2+\nu),(\nu^{\prime}-\nu)/(2+\nu)\}+2\delta} \ell_{i,j}\theta^{2\beta}  p_{j}.    
    \end{split}
  \end{equation*}
  For the moment, the following estimates are all uniform in $R\geq 1$, $z \in \Gamma_{+}(\rho,\omega))$,
  and $l\geq 1$.
  By the Lemma~\ref{lem:2312221420}, we can write
  \begin{equation}\label{eq:231223802}
    \begin{split}
      \Im((a^{-1}(B-b))^{*}\Theta^{2\beta}(H-z)) &= \tfrac{1}{2} \Im((a^{-1}(B-b))^{*}\Theta^{2\beta} (B+b)(B-b)) \\
                                           &\quad +\tfrac{1}{2}\Im(Ba^{-1}\Theta^{2\beta}L) -\tfrac{1}{2}\Im((a^{-1}b)^{*}\Theta^{2\beta}L) \\
                                           &\quad +\Im((a^{-1}(B-b))^{*}\Theta^{2\beta}q_{2}).
    \end{split}
  \end{equation}
  The first term of \eqref{eq:231223802} is computed as
  \begin{equation*}
    \begin{split}
      &\tfrac{1}{2} \Im((a^{-1}(B-b))^{*}\Theta^{2\beta} (B+b)(B-b)) \\
      &\quad \geq \tfrac{1}{2} (a^{-1}(B-b))^{*}\beta \Theta^{2\beta-1}  \bar{\chi}_{l} \theta^{\prime} (a^{-1}(B-b)) \\
      &\qquad + \tfrac{1}{4} (a^{-1}(B-b))^{*} a^{-1}\partial_{r}V \Theta^{2\beta} (a^{-1}(B-b))\\
      &\qquad + \tfrac{1}{2} (a^{-1}(B-b))^{*} (\Im b) a  \Theta^{2\beta} (a^{-1}(B-b)) \\
      &\qquad -C_{1}Q.
    \end{split}
  \end{equation*}
  Furthermore, the following inequality holds
  \begin{equation*}
    \begin{split}
      &(\tfrac{1}{4}a^{-1}\partial_{r}V +\tfrac{1}{2} (\Im b)a ) \Theta^{2\beta} \\
      &=\Bigl(\tfrac{1}{2} \Im(\sqrt{2(z-V)}) - \tfrac{\mu^{2}\partial_{r}V}{4a^{2}((\lambda-V)^{2}+\mu^{2})} \\
      &\quad + (\lambda-\max\{\lambda,0\})\tfrac{\partial_{r}V(\lambda-V)}{4a^{2}((\lambda-V)^{2}+\mu^{2})} \Bigr) a \Theta^{2\beta} \\
      &\geq \tfrac{1}{2}\sqrt{\mu}(1 -C_{2}r^{-1+\nu/2})a\Theta^{2\beta}. 
    \end{split}
  \end{equation*}
 Thus, if we take $l$ large enough, we obtain the following estimate:
\begin{equation*}
  \tfrac{1}{2}
  \Im\bigl( 
    (a^{-1}(B-b))^{*}\Theta^{2\beta}(B+b)(B-b)
  \bigr)
  \ge
  \tfrac{1}{2}\,
  (a^{-1}(B-b))^{*}
  \beta \Theta^{2\beta-1}\bar{\chi}_{l}\theta'
  (a^{-1}(B-b)).
\end{equation*}
   Next we can compute the second and third term of \eqref{eq:231223802} as
   \begin{equation*}
    \begin{split}
      &\tfrac{1}{2}\Im(Ba^{-1}\Theta^{2\beta}L) -\tfrac{1}{2}\Im((a^{-1}b)^{*}\Theta^{2\beta}L) \\
      &\quad \geq \tfrac{1}{2} p_{i} r^{-1}a^{-1}(1- \beta r a^{-1} \tau^{-1} -\tfrac{1}{2}a^{-2}r \partial_{r}V +\tfrac{1}{2}r \Im b) \ell_{i,j}\Theta^{2\beta} p_{j} \\
      &\qquad -C_{3}Q \\
      &\quad \geq \tfrac{1}{2} p_{i} r^{-1}a^{-1}(1- \beta r a^{-1} \tau^{-1} +\tfrac{1}{2}(2-\epsilon)a^{-2}V -\tfrac{1}{4} (r \partial_{r}V)/(\lambda-V)) \ell_{i,j}\Theta^{2\beta} p_{j} \\
      &\qquad-C_{3}Q \\
      &\quad \geq \tfrac{1}{2} p_{i} r^{-1}a^{-1}(1- \beta r a^{-1} \tau^{-1} -\tfrac{1}{4}(2-\epsilon)-\tfrac{1}{8}(2-\epsilon)) \ell_{i,j}\Theta^{2\beta} p_{j} \\
      &\qquad-C_{3}Q.
    \end{split}
   \end{equation*} 
   By the definition of $\beta_{c,\rho}$ and Lemma \ref{lem:230925b}, if we take $l$ large enough, we obtain the following estimate:
   \begin{equation*}
    \begin{split}
      &\tfrac{1}{2}\Im(B a^{-1}\Theta^{2\beta}L) -\tfrac{1}{2}\Im((a^{-1}b)^{*}\Theta^{2\beta}L) \\
      &\geq c_{1} p_{i} a^{-2} \langle \tau \rangle^{-1} \ell_{i,j}\Theta^{2\beta}p_{j }-C_{3}Q.
    \end{split}
  \end{equation*}
   We can directly compute the fourth term of \eqref{eq:231223802} by using the Cauchy--Schwarz inequality as
   \begin{equation*}
    \begin{split}
     &\Im((a^{-1}(B-b))^{*}\Theta^{2\beta}q_{2}) \\
     &\quad \geq -C_{4}(a^{-1}(B-b))^{*} \langle \tau \rangle^{-1-2\delta}(a^{-1}(B-b)) -C_{4}Q. 
    \end{split}
   \end{equation*}
   Thus, it follows that 
    \begin{equation*}
      \begin{split}
         &\Im((a^{-1}(B-b))^{*}\Theta^{2\beta}(H-z))\\
         &\quad \geq  \tfrac{1}{2} (a^{-1}(B-b))^{*}(\beta \bar{\chi}_{l}\theta^{\prime} - C_{5}\langle \tau \rangle^{-1-2\delta}\Theta)\Theta^{2\beta-1} (a^{-1}(B-b)) \\
         &\qquad +c_{1} p_{i}a^{-2} \langle \tau \rangle^{-1} \ell_{i,j}\Theta^{2\beta}p_{j} 
         -C_{5}Q.
      \end{split}
    \end{equation*}
   By Lemma \ref{lem:theta-inequality}, we have
   \begin{equation*}
    \begin{split}
      &\tfrac{1}{2} (a^{-1}(B-b))^{*}(\beta \bar{\chi}_{l}\theta^{\prime} - C_{5}\langle \tau \rangle^{-1-2\delta}\Theta)\Theta^{2\beta-1} (a^{-1}(B-b)) \\
      &\quad \geq c_{2} (a^{-1}(B-b))^{*}\bar{\chi}_{l} \theta^{\prime} \Theta^{2\beta-1} (a^{-1}(B-b)) 
      -C_{6}Q \\
      &\quad \geq c_{3} (a^{-1}(p_{r}-b))^{*}\bar{\chi}_{l} \theta^{\prime} \Theta^{2\beta-1} (a^{-1}(p_{r}-b)) 
      -C_{7}Q.
    \end{split}
   \end{equation*}
Finally we can bound $-Q$ as 
\begin{equation*}
  \begin{split}
    -Q &\geq -C_{8} \langle \tau \rangle^{-1-2\min\{(2-\nu)/2(2+\nu),(\nu^{\prime}-\nu)/(2+\nu)\}+2\delta} \theta^{2\beta} \\
       &\quad -C_{8} \Re(\langle \tau \rangle^{-1-2\min\{(2-\nu)/2(2+\nu),(\nu^{\prime}-\nu)/(2+\nu)\}+2\delta} (H-z) ).
  \end{split}
\end{equation*}
Hence, we are done.
\end{proof}

\subsection{Applications}
Now we are going to prove Theorem \ref{thm:2308271916} and Corollary \ref{cor:2312221102}--\ref{cor:2312221107} in this order.
\subsubsection{Radiation condition bounds for complex spectral parameters}
\begin{proof}[Proof of Theorem \ref{thm:2308271916}]
  Let $\rho,\omega$ be as in the assertion. 
  For $\beta =0$, the assertion is obvious by Theorem \ref{thm:2308271915}, 
  and hence, we may let $\beta \in (0,\beta_{c,\rho})$. 
  We take any
  \begin{equation*}
    \delta \in \left(0,\min\left\{0, \min\left\{\tfrac{2-\nu}{2(2+\nu)},\tfrac{\nu^{\prime}-\nu}{2+\nu}\right\} -\beta_{c,\rho} \right\}\right).
  \end{equation*}
  By Lemma \ref{lem:2312221423}, 
  there exists $C_{1}>0$ such that for any state $\phi = R(z)\psi$ 
  with $\psi \in C^{\infty}_{0}(\mathbb{R}^{d})$ and $z \in \Gamma_{+}(\rho,\omega)$
  \begin{equation}\label{eq:2312231523}
    \begin{split}
      &\|(\bar{\chi}_{l}\theta^{\prime})^{1/2} \Theta^{\beta-1/2}a^{-1}(p_{r}-b_{z})\phi\|^{2} 
      + \langle\phi,\tau^{2\beta} a^{-2} \langle \tau \rangle^{-1} L\phi \rangle \\
      &\quad \leq C(\|\Theta^{\beta} a^{-1}(p_{r}-b_{z})\phi\|_{\mathcal{B}^{*}(\lambda)} \|\theta^{\beta}\psi\|_{\mathcal{B}(\lambda)}
      +\|\langle \tau \rangle^{-1/2-\min\left\{\frac{2-\nu}{2(2+\nu)},\frac{\nu^{\prime}-\nu}{2+\nu}\right\}+\delta} \theta^{\beta} \phi\|^{2} \\
      &\qquad + \|\langle \tau \rangle^{1/2-\min\left\{\frac{2-\nu}{2(2+\nu)},\frac{\nu^{\prime}-\nu}{2+\nu}\right\}+\delta} \theta^{\beta} \psi\|^{2} ) \\
      &\quad \leq C_{1} R^{-2\beta} (\|\bar{\chi}_{l}^{1/2} \tau ^{\beta}a^{-1}(p_{r}-b)\phi\|_{\mathcal{B}^{*}(\lambda)} \|\langle \tau \rangle^{\beta} \psi\|_{\mathcal{B}(\lambda)}
       + \|\langle \tau \rangle^{\beta} \psi\|_{\mathcal{B}(\lambda)}^{2}).
    \end{split}
  \end{equation} 
  Here we note that $\langle \tau \rangle^{\beta} a^{-1}(p_{r}-b)\phi \in \mathcal{B}^{*}$ for each $z \in \Gamma_{+}(\rho,\omega)$,
  and hence, the quantity on the right-hand side of \eqref{eq:2312231523} is finite.
  In fact, this can be verified by commuting $R(z)$ 
  and powers of $\langle \tau \rangle$ sufficiently many times and using the fact that $\psi \in C^{\infty}_{0}(\mathbb{R}^{d})$.
  Then by \eqref{eq:2312231523}, it follows
  \begin{equation}\label{eq:2312231540}
    \begin{split}
      &R^{2\beta} \|(\bar{\chi}_{l}\theta^{\prime})^{1/2} \Theta^{\beta-1/2}a^{-1}(p_{r}-b_{z})\phi\|^{2} 
      + R^{2\beta} \langle\phi,\tau^{2\beta} a^{-2} \langle \tau \rangle^{-1} L\phi \rangle \\ 
      &\quad \leq C_{1}  (\|\bar{\chi}_{l}^{1/2} \tau^{\beta}a^{-1}(p_{r}-b)\phi\|_{\mathcal{B}^{*}(\lambda)} \|\langle \tau \rangle^{\beta} \psi\|_{\mathcal{B}(\lambda)}
      + \|\langle \tau \rangle^{\beta} \psi\|_{\mathcal{B}(\lambda)}^{2}).
    \end{split}
  \end{equation} 
  In the first term on the left-hand side of \eqref{eq:2312231540},
  let $R$ be $2^{m}$ and take the supremum in $m \geq 0$ 
  noting \eqref{eq:15.2.15.5.9}, and then, obtain
  \begin{equation*}\label{eq:2312231600}
    \begin{split}
      &c_{1} \|(\bar{\chi}_{l})^{1/2} \tau^{\beta} a^{-1}(p_{r}-b_{z})\phi\|_{\mathcal{B}^{*}(\lambda)}^{2} \\
      &\quad \leq C_{1}  (\|\bar{\chi}_{l}^{1/2} \tau^{\beta}a^{-1}(p_{r}-b)\phi\|_{\mathcal{B}^{*}(\lambda)} \|\langle \tau \rangle^{\beta} \psi\|_{\mathcal{B}(\lambda)}
      + \|\langle \tau \rangle^{\beta} \psi\|_{\mathcal{B}(\lambda)}^{2}).
    \end{split}
  \end{equation*}
  which implies
  \begin{equation}\label{eq:2312231623}
    \|\bar{\chi}^{1/2}_{l}\tau^{\beta} a^{-1}(p_{r}-b)\phi\|_{\mathcal{B}(\lambda)} 
    \leq C_{2} \|\langle \tau \rangle^{\beta} \psi\|_{\mathcal{B}(\lambda)}.
  \end{equation}
  As for the second term on the left-hand side of \eqref{eq:2312231540}, 
  we use \eqref{eq:2312231623}, the concavity of $\Theta$ 
  and Lebesgue's monotone convergence theorem and then obtain 
  by letting $R \rightarrow \infty$
  \begin{equation}\label{eq:2312231602}
    \langle p_{i} \bar{\chi}_{l}^{1/2} \tau^{2\beta} a^{-2} \langle \tau \rangle^{-1} \ell_{i,j} p_{j}   \rangle_{\phi}
    \leq C_{3}\|\langle \tau \rangle^{\beta} \psi\|_{\mathcal{B}(\lambda)}.
  \end{equation}
  From \eqref{eq:2312231623} and \eqref{eq:2312231602}, we can remove the cut off $\bar{\chi}_{l}^{1/2}$ 
  by using Theorem \ref{thm:2308271915}. Hence, we are done.
\end{proof}

\subsubsection{Limiting absorption principle}
\begin{proof}[Proof of corollary \ref{cor:2312221102}]
Let $s>1/2$ and $\gamma \in (0,\min\{s-1/2,\beta_{c,\rho}\})$ be in the assertion. 
Let $s^{\prime}=s-\gamma$.
We decompose for $n \geq 0$ and $z,z^{\prime} \in \Gamma_{+}(\rho,\omega)$ 
\begin{equation}
  \begin{split}\label{eq:2312222353}
    R(z)-R(z^{\prime}) &= \chi_{n}R(z)\chi_{n} -\chi_{n}R(z^{\prime})\chi_{n} \\
                       &\quad + (R(z)-\chi_{n}R(z)\chi_{n}) -(R(z^{\prime})-\chi_{n}R(z^{\prime})\chi_{n}).  
    \end{split}
\end{equation}
By Theorem \ref{thm:2308271915}, we can estimate the third term of \eqref{eq:2312222353}
uniformly in $n \geq 0$ and $z,z^{\prime} \in \Gamma_{+}(\rho,\omega)$ as
\begin{equation}\label{eq:23122414415}
  \begin{split}
  &\|R(z)-\chi_{n}R(z)\chi_{n}\|_{\mathcal{B}\left(L^{2}_{s,\min\{\Re{z},\Re{z^{\prime}}\}}, L^{2}_{-s,\min\{\Re{z},\Re{z^{\prime}}\}}\right)} \\
  &\leq \|\langle \tau \rangle^{-s} \bar{\chi}_{n} R(z) \bar{\chi}_{n} \langle \tau \rangle^{-s}\|_{\mathcal{B(\mathcal{H})}} 
  + \|\langle \tau \rangle^{-s} \chi_{n} R(z) \bar{\chi}_{n} \langle \tau \rangle^{-s}\|_{\mathcal{B(\mathcal{H})}} \\
  &\quad +\|\langle \tau \rangle^{-s} \bar{\chi}_{n} R(z) \chi_{n} \langle \tau \rangle^{-s}\|_{\mathcal{B(\mathcal{H})}} \\
  &\leq C_{1}2^{n(s^{\prime}-s)}=C_{1}2^{-n\gamma}. 
  \end{split}
\end{equation}  
Similarly, we obtain
\begin{equation*}\label{eq:12241433}
    \|R(z^{\prime})-\chi_{n}R(z^{\prime})\chi_{n}\|_{\mathcal{B}\left(L^{2}_{s,\min\{\Re{z},\Re{z^{\prime}}\}}, L^{2}_{-s,\min\{\Re{z},\Re{z^{\prime}}\}}\right)} 
    \leq C_{2}2^{-n\gamma}. 
\end{equation*}
As for the first and second term 
on the right-hand side of \eqref{eq:2312222353},
using the equation
\begin{equation}
  \mathrm i[H,\chi_{n+1}] =\Re(\chi_{n+1 }^{\prime}a^{-1}p_{r} )
\end{equation}
and noting the identity $\bar{b}_{\bar{z}}=b_{z}$, 
we can write for $n\geq 0$ 
\begin{equation}\label{eq:2312241434}
  \begin{split}
    &\chi_{n}R(z)\chi_{n}-\chi_{n}R(z^{\prime})\chi_{n} \\
    &=\chi_{n}R(z){\chi_{n+1}(H-z^{\prime})-(H-z)\chi_{n+1}}R(z^{\prime})\chi_{n} \\
    &=\chi_{n}R(z)\left\{(z-z^{\prime})\chi_{n+1}+i \Re(\chi^{\prime}_{n+1}a^{-1}p_{r})\right\}R(z^{\prime})\chi_{n} \\
    &=\tfrac{i}{2}\chi_{n}R(z)\chi^{\prime}_{n+1}a^{-1}(p_{r}-b_{z^{\prime}})R(z^{\prime})\chi_{n}
      +\tfrac{i}{2}\chi_{n}R(z)(a^{-1}(p_{r}+b_{\bar{z}}))^{*}\chi^{\prime}_{n+1}R(z^{\prime}) \chi_{n} \\
    &\quad - \tfrac{i}{2}\chi_{n}R(z)(a^{-1}(b_{z}-b_{z^{\prime}}))\chi^{\prime}_{n+1}R(z^{\prime})\chi_{n} 
      -(z-z^{\prime})\chi_{n}R(z)\chi_{m}R(z^{\prime})\chi_{n} \\
    &\quad - (z-z^{\prime}) \chi_{n}R(z)\chi_{m,n+1}(a(b_{z}+b_{z}^{\prime})^{-1})(a^{-1}(p_{r}-b_{z^{\prime}}))R(z^{\prime})\chi_{n} \\
    &\quad  +(z-z^{\prime}) \chi_{n}R(z)(a^{-1}(p_{r}+b_{\bar{z}}))^{*}\chi_{m,n+1} a (b_{z}+b_{z^{\prime}})^{-1} R(z^{\prime}) \chi_{n} \\
    &\quad -(z-z^{\prime}) \chi_{n} R(z) [a^{-1}p_{r}, \chi_{m,n+1}a(b_{z}+b_{z^{\prime}})^{-1}] R(z^{\prime})\chi_{n}.
  \end{split}
\end{equation}
Here $m\geq 0$ is fixed so that $(b_{z}+b_{z^{\prime}})^{-1}$ is non-singular on supp $\bar{\chi}_{m}$.
Then by Theorem \ref{thm:2308271915} and \ref{thm:2308271916}, 
we have uniformly in $n\geq 0$ and $z,z^{\prime} \in \Gamma_{+}(\rho,\omega)$
\begin{equation}\label{eq:2312241436}
  \begin{split}
    &\|\chi_{n}R(z)\chi_{n}-\chi_{n}R(z^{\prime})\chi_{n}\|_{\mathcal{B}\left(L^{2}_{s,\min\{\Re{z},\Re{z^{\prime}}\}}, L^{2}_{-s,\min\{\Re{z},\Re{z^{\prime}}\}}\right)} \\
    &\leq C_{3}2^{-n\gamma}+C_{4}2^{n(1-\gamma)} |z-z^{\prime}|.
  \end{split}
\end{equation}
Summing up \eqref{eq:2312222353}--\eqref{eq:2312241436},
we obtain uniformly in $n \geq 0$ and $z,z^{\prime} \in \Gamma_{+}(\rho,\omega)$ 
\begin{equation*}\label{eq:2312241439}
  \|R(z)-R(z^{\prime})\|_{\mathcal{B}\left(L^{2}_{s,\min\{\Re{z},\Re{z^{\prime}}\}}, L^{2}_{-s,\min\{\Re{z},\Re{z^{\prime}}\}}\right)}
  \leq C_{5}2^{-n\gamma }+C_{5}2^{n(1-\gamma)}|z-z^{\prime}|.
\end{equation*}
Now, if $|z-z^{\prime}| \leq 1$, then we choose $2^{n} \leq |z-z^{\prime}|^{-1} <2^{n+1}$ 
and then obtain
\begin{equation}\label{eq:2312241440}
  \|R(z)-R(z^{\prime})\|_{\mathcal{B}\left(L^{2}_{s,\min\{\Re{z},\Re{z^{\prime}}\}}, L^{2}_{-s,\min\{\Re{z},\Re{z^{\prime}}\}}\right)}
  \leq C_{6}|z-z^{\prime}|^{\gamma}.
\end{equation} 
The same bound is trivial for $|z-z^{\prime}| \geq 1$, 
and hence, the H\"{o}lder continuity \eqref{eq:2312221103}
for $R(z)$ follows from \eqref{eq:2312241440}.
The H\"{o}lder continuity for $a^{-1}p_{r}R(z)$ follows
by using the first resolvent equation.
The existence of the limits \eqref{eq:2312221104} is an immediate consequence of \eqref{eq:2312221103}.
By Theorem \ref{thm:2308271915} the limits $R(z)$ and $a^{-1}p_{r}R(z)$ actually map into $\mathcal{B}^{*}$,
and moreover, they extend continuously to map $\mathcal{B} \rightarrow \mathcal{B}^{*}$ by density argument.
Hence, we are done.
\end{proof}

\subsubsection{Radiation condition bounds for real spectral parameters}
  \begin{proof}[Proof of Corollary~\ref{cor:2312221105}]
   The corollary follows from Theorem \ref{thm:2308271916}, Corollary \ref{cor:2312221102} and approximation arguments.
   Note the elementary property
   \begin{equation*}
    \|\psi\|_{\mathcal{B}^{*}(\lambda)} = \sup_{n\geq 0} \|\chi_{n}\psi\|_{\mathcal{B}^{*}(\lambda)};\quad \psi \in \mathcal{B^{*}(\lambda)}.
   \end{equation*}
   Hence, we are done.
  \end{proof}

\subsubsection{Sommerfeld uniqueness result}
  \begin{proof}[Proof of Corollary~\ref{cor:2312221107}]
    Let $\lambda \geq 0$, $\phi \in L^{2}_{\mathrm{loc}}$ and $\psi \in \langle \tau \rangle^{-\beta} \mathcal{B}(\lambda)$ with $\beta \in [0,\beta_{c,\lambda})$.
    We first assume $\phi = R(\lambda +\mathrm i0)\psi$. Then (i) and (ii) of the corollary hold by Corollaries \ref{cor:2312221102} and \ref{cor:2312221105}.
    Conversely, assume (i) and (ii) of the corollary, and let 
    \begin{equation*}
      \phi^{\prime} = \phi -R(\lambda +\mathrm i0)\psi.
    \end{equation*}
    Then by Corollaries \ref{cor:2312221102} and \ref{cor:2312221105}, it follows that $\phi^{\prime}$ satisfies of the corollary with $\psi =0$.
    In addition, we can verify $\phi^{\prime} \in \mathcal{B}^{*}_{0}(\lambda)$ by the virial-type argument.
    In fact noting the identity 
    \begin{equation*}
      2\Im(\chi_{m}(H-\lambda)) = \Re(a^{-1}b)\chi^{\prime}_{m} + \Re(\chi_{m}^{\prime}a^{-1}(p_{r}-b)),
    \end{equation*}
    we conclude that 
    \begin{equation}\label{eq:2312222241}
      0 \leq \langle \Re(a^{-1}b)\bar{\chi}^{\prime}_{m}\rangle_{\phi^{\prime}} = \Re \langle \chi_{m}^{\prime}a^{-1}(p_{r}-b) \rangle_{\phi^{\prime}}. 
    \end{equation}
    Taking the limit $m \rightarrow \infty$ and using $\phi^{\prime} \in \langle \tau \rangle^{\beta} \mathcal{B}^{*}(\lambda)$ 
    and $a^{-1}(p_{r}-b)\phi^{\prime} \in \langle \tau \rangle^{-\beta} \mathcal{B}^{*}_{0}(\lambda)$ in \eqref{eq:2312222241},
    we obtain $\phi^{\prime} \in \mathcal{B}^{*}_{0}(\lambda)$. 
    By Theorem \ref{thm:230607}, it follows that $\phi^{\prime}=0$. Hence we have $\phi = R(\lambda+\mathrm i0)\psi$. 
  \end{proof}

\subsubsection*{Acknowledgements} 
KI thanks Erik Skibsted for valuable discussions and comments on the draft. 
The authors thank Akitoshi Hoshiya for some references.

\subsubsection*{Funding}
KI was partially supported by JSPS KAKENHI Grant Number JP23K03163.

\subsubsection*{Data Availability}
Data sharing is not applicable to this article as no datasets were generated or analyzed during the current study.

\section*{Declarations}

\subsubsection*{Conflict of interest}
The authors declare that they have no conflict of interest.

\appendix 

\section{LAP bounds in high-energy regime}\label{sec:231120}

\subsection{Settings and results}

In this appendix, we discuss the LAP bounds in the high-energy regime.
The settings here are slightly more general than in the previous sections, 
being rather independent of them, for there do not appear the difficulties due to the energy $0$. 
See also Yafaev's book~\cite{Y2} for related results.  

Consider the Schr\"odinger operator of the form 
\[
H=\tfrac12 p^2+V
\]
with the following assumption on $V$: 
\begin{assumption}\label{cond:23112420}
Let $V\in L^\infty(\mathbb R^d;\mathbb R)$, 
and there exists a splitting 
  \begin{align*}
  V=V_1+V_2;\quad V_1\in C^1(\mathbb R^d;\mathbb R),\ \ V_2\in L^\infty(\mathbb R^d;\mathbb R),
  \end{align*}
such that for some $C>0$ and $\epsilon\in (0,1)$ 
\begin{align*}
\partial_r V_1\le C\langle x\rangle^{-1-\epsilon},\quad
|V_2|\le C\langle x\rangle^{-1-\epsilon}.
\end{align*}
\end{assumption}
\begin{remark}
Clearly, this is weaker than Assumption~\ref{cond:230806}. 
\end{remark}

We also slightly refine the Agmon--H\"ormander spaces
 as, for any $m\in\mathbb N_0$, 
\begin{align*}
\mathcal B(m)&=
\bigl\{\psi\in L^2_{\mathrm{loc}};\ \|\psi\|_{\mathcal B(m)}<\infty\bigr\},\quad 
\|\psi\|_{\mathcal B(m)}=\sum_{n\in\mathbb N} 2^{(n-m)/2}\|1_n(m)\psi\|_{\mathcal H},\\
\mathcal B^*(m)&=
\bigl\{\psi\in L^2_{\mathrm{loc}};\ \|\psi\|_{\mathcal B^*(m)}<\infty\bigr\},\quad 
\|\psi\|_{\mathcal B^*(m)}=\sup_{n\in\mathbb N}2^{-(n-m)/2}\|1_n(m)\psi\|_{\mathcal H},
\end{align*}
with 
\begin{align*}
\begin{split}
1_1(m)&=1\bigl(\bigl\{x\in\mathbb R^d;\ |x|<2^{1-m}\bigr\}\bigr),\\
1_n(m)&=1\bigl(\bigl\{x\in\mathbb R^d;\ 2^{n-m-1}\le |x|<2^{n-m}\bigr\}\bigr)
\ \  \text{for }n=2,3,\ldots .
\end{split}
\end{align*}
Note $\mathcal B(m)$ coincide for all $m\in\mathbb N_0$ as sets, and so do $\mathcal B^*(m)$,
however, we have 
\begin{align}
\|\cdot\|_{\mathcal B(m)}\ge\|\cdot\|_{\mathcal B(m+1)},\quad 
\|\cdot\|_{\mathcal B^*(m)}\le \|\cdot\|_{\mathcal B^*(m+1)}.
\label{eq;23112417}
\end{align}
For any $a,b >0$, we set 
\[
\Gamma_{\pm}'(a,b)
=\bigl\{z=\lambda\pm\mathrm i\mu \in\mathbb C;\ \lambda>a,\ 0<\mu<b\lambda^{1/2}\bigr\}, 
\]
respectively.

\begin{theorem}\label{thm:2308271915b}
Suppose Assumption~\ref{cond:23112420}.
Let $a>0$ be sufficiently large, and let $b>0$. 
Then there exists $C>0$ such that for any $z=\lambda\pm\mathrm i\mu\in \Gamma_{\pm}'(a,b)$, 
$\psi\in \mathcal B(\lambda)$ and $m\in\mathbb N_0$ with $2^{2m}<\lambda\le 2^{2m-2}$
\begin{align}
\|R(z)\psi\|_{\mathcal B^*(m)}
&\le C2^{-m}\|\psi\|_{\mathcal B(m)},
\label{eq:23112513}
\\
\|p_rR(z)\psi\|_{\mathcal B^*(m)}
&\le C\|\psi\|_{\mathcal B(m)},
\notag
\\
\bigl\langle R(z)\psi,\langle 2^{m}x\rangle^{-1}LR(z)\psi\bigr\rangle
&\le C2^{-m}\|\psi\|_{\mathcal B(m)}^2,
\notag
\end{align}
respectively. 
\end{theorem}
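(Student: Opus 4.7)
The plan is a scaling argument: reduce the theorem to a uniform LAP for a family of rescaled Schr\"odinger operators at bounded positive energy. Set $\kappa=2^m$ and introduce the unitary dilation $D_\kappa\psi(x)=\kappa^{d/2}\psi(\kappa x)$. A direct computation yields
\begin{equation*}
D_\kappa^*(H-z)D_\kappa=\kappa^2(\tilde H_\kappa-\tilde z),\quad \tilde H_\kappa=\tfrac12 p^2+\tilde V_\kappa,\quad \tilde V_\kappa(y)=\kappa^{-2}V(\kappa^{-1}y),\quad \tilde z=\kappa^{-2}z,
\end{equation*}
so the constraint on $\lambda$ places $\Re\tilde z$ in a compact interval $I\subset(0,1)$, and $0<\mu<b\lambda^{1/2}$ gives $0<\Im\tilde z<b\kappa^{-1}$. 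By Assumption~\ref{cond:23112420}, $|\tilde V_\kappa|\le C\kappa^{-2}$ uniformly, and the splitting $\tilde V_\kappa=\tilde V_{\kappa,1}+\tilde V_{\kappa,2}$ inherits $\partial_r\tilde V_{\kappa,1}\le C\kappa^{-2+\epsilon}\langle y\rangle^{-1-\epsilon}$ and $|\tilde V_{\kappa,2}|\le C\kappa^{-2+\epsilon}\langle y\rangle^{-1-\epsilon}$, so every contribution from $\tilde V_\kappa$ is uniformly small as $\kappa\to\infty$.

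Next I would transport the Agmon--H\"ormander norms through $D_\kappa$. The dyadic cutoffs $1_n(m)$ are supported on $2^{n-m-1}\le|x|<2^{n-m}$, which the change of variable $y=\kappa x$ identifies with the standard dyadic annuli $2^{n-1}\le|y|<2^n$ underlying $\mathcal B(0),\mathcal B^*(0)$; a short computation then gives
\begin{equation*}
\|\psi\|_{\mathcal B(m)}=\kappa^{-1/2}\|D_\kappa^*\psi\|_{\mathcal B(0)},\qquad \|\psi\|_{\mathcal B^*(m)}=\kappa^{1/2}\|D_\kappa^*\psi\|_{\mathcal B^*(0)}.
\end{equation*}
Combined with $D_\kappa^*R(z)D_\kappa=\kappa^{-2}\tilde R_\kappa(\tilde z)$, $D_\kappa^*p_rD_\kappa=\kappa p_r$, and $D_\kappa^*LD_\kappa=\kappa^2L$, all three inequalities of Theorem~\ref{thm:2308271915b} collapse to the single uniform bound
\begin{equation*}
\|\tilde R_\kappa(\tilde z)\tilde\psi\|_{\mathcal B^*(0)}+\|p_r\tilde R_\kappa(\tilde z)\tilde\psi\|_{\mathcal B^*(0)}+\bigl\langle\tilde R_\kappa(\tilde z)\tilde\psi,\langle y\rangle^{-1}L\tilde R_\kappa(\tilde z)\tilde\psi\bigr\rangle^{1/2}\le C\|\tilde\psi\|_{\mathcal B(0)},
\end{equation*}
needed uniformly in $\kappa$ large and $\tilde z\in I+\mathrm i(0,b\kappa^{-1})$.

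The final step is to prove this rescaled LAP by a direct commutator estimate modelled on Proposition~\ref{prop:23112315}, but drastically simplified because $\Re\tilde z$ is bounded away from $0$. With conjugate operator $A=\Re p_r$, weight $\Theta=\bar\chi_0\theta$ and $\theta=\int_0^{|y|/R}(1+s)^{-1-\delta}\,\mathrm ds$ for small $\delta>0$, the identity $\mathrm{Im}(A\Theta(\tilde H_\kappa-\tilde z))$ produces positive contributions $c(\theta'+p_r^*\theta'p_r+\langle y\rangle^{-1}\theta L)$ from the kinetic term, while every contribution of $\tilde V_\kappa$ carries a prefactor $\kappa^{-2+\epsilon}$ and is absorbed for $\kappa$ large; the $\mu$-term is handled verbatim as in \eqref{eq:23111321}. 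The contradiction argument of Step~2 in the proof of Theorem~\ref{thm:2308271915} then transfers with one simplification: the limiting eigenvalue problem is $(-\tfrac12\Delta-\tilde z)\phi=0$ with $\tilde z$ bounded away from $0$, for which the absence of nonzero $\mathcal B^*_0$ solutions is classical. Undoing the scaling yields the theorem. I expect the main obstacle to be purely bookkeeping: tracking that every error term from $\tilde V_\kappa$ carries a strictly positive power of $\kappa^{-1}$, so that the uniform LAP really holds as $\kappa\to\infty$ together with $\Im\tilde z\to 0$.
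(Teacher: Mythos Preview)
Your approach is essentially identical to the paper's: both reduce Theorem~\ref{thm:2308271915b} by dilation to a uniform LAP for the rescaled family $\tfrac12 p^2+\kappa^{-2}V(\kappa^{-1}\cdot)$ at bounded positive energy (this is the paper's Proposition~\ref{prop:23112415} with $h=\kappa^{-1}$), and then prove that uniform LAP by the commutator scheme of Proposition~\ref{prop:23112315} with $A=\Re p_r$, which is the content of Lemma~\ref{lem:23112315b}. Your use of the unitary $D_\kappa$ instead of the paper's non-unitary $U_m$ is cosmetically cleaner, and your norm identities $\|\psi\|_{\mathcal B(m)}=\kappa^{-1/2}\|D_\kappa^*\psi\|_{\mathcal B(0)}$, $\|\psi\|_{\mathcal B^*(m)}=\kappa^{1/2}\|D_\kappa^*\psi\|_{\mathcal B^*(0)}$ are correct; two minor inaccuracies are harmless: the range of $\Re\tilde z$ is $(1,4]$ rather than a subset of $(0,1)$ (the theorem statement has a typo), and in the contradiction step the limiting problem need not be free if $\kappa_k$ stays bounded, but one then lands on some $\tilde H_{\kappa_*}$ with $\kappa_*$ large, for which absence of positive-energy $\mathcal B^*_0$ solutions is still classical.
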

\begin{remark}
According to \eqref{eq;23112417},
this refines the ordinary $\mathcal B(0)$-$\mathcal B^*(0)$ estimates in the high-energy regime, 
where the Agmon--H\"ormander spaces approach those with homogeneous weights. 
\end{remark}

\subsection{Outline of proof}

\subsubsection{Reduction}

The proof of Theorem~\ref{thm:2308271915b} depends on a simple scaling argument
combined with the following local resolvent estimates for an $h$-dependent Schr\"odinger operator 
\[
H_h=\tfrac12p^2+V_h
;\quad 
V_h(x)=h^2V(hx),\ \ 
h\in (0,1].
\]
Let $R_h(w)=(H_h-w)^{-1}$ for $w\in\rho(H_h)$, and also set for any $b >0$ 
\[
\Gamma_{\pm}''(b)
=\bigl\{z=\lambda\pm\mathrm i\mu \in\mathbb C;\ 1< \lambda\le 4,\ 0<\mu<b\bigr\}, 
\]
respectively.

\begin{proposition}\label{prop:23112415}
Suppose Assumption~\ref{cond:23112420}.
Let $h_0>0$ be sufficiently small, and let $b>0$. 
Then there exists $C>0$ such that for any $h\in(0,h_0]$, $w\in \Gamma''_\pm(b)$ 
and $\psi\in \mathcal B(0)$
\[
\|R_h(w)\psi\|_{\mathcal B^*(0)}
+\|p_rR_h(w)\psi\|_{\mathcal B^*(0)}
+\bigl\langle R_h(w)\psi,\langle x\rangle^{-1}LR_h(w)\psi\bigr\rangle^{1/2}
\le C\|\psi\|_{\mathcal B(0)}
.
\]
\end{proposition}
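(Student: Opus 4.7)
The plan is to adapt the commutator method from the proof of Theorem~\ref{thm:2308271915} to the semiclassical regime, where the spectral parameter $w$ is uniformly bounded both above and away from zero, so that no low-energy complications occur. Since the potential $V_h(x)=h^2V(hx)$ obeys $\partial_r V_{1,h}(x)\le Ch^3\langle hx\rangle^{-1-\epsilon}$ and $|V_{2,h}(x)|\le Ch^2\langle hx\rangle^{-1-\epsilon}$, both contributions can be treated as small perturbations of the free Hamiltonian for $h$ small. The natural conjugate operator reduces to $A=\Re(p_r)$ (since $a_h\approx\sqrt{2\lambda}$ is uniformly bounded above and below on the spectral window), and one uses the same weight $\Theta=\bar\chi_0\theta$ with $\theta$ as in \eqref{eq:15.2.15.5.8} but with $\tau$ replaced by $r$ (which is equivalent here up to a uniform constant).

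First I would establish a uniform commutator estimate
\begin{align*}
\Im(A\Theta(H_h-w))
&\ge c\theta'+c\,p_r^*\theta'p_r+c\langle x\rangle^{-1}\theta L-CR^{-1}\chi_n \\
&\quad+\Re(\gamma(H_h-w))-C(H_h-\bar w)\tau(H_h-w)
\end{align*}
uniformly in $h\in(0,h_0]$, $w\in\Gamma''_\pm(b)$ and $R\ge 1$, where $\gamma$ is uniformly bounded. The derivation mirrors Proposition~\ref{prop:23112315}: the positive contributions come from commuting $p_r$ with $\tfrac12p^2$, the $\partial_r V_{1,h}$ contributions are of order $h^3$ and absorbed into admissible errors, and $V_{2,h}$ is controlled via Cauchy-Schwarz exactly as the short-range $q_1$ in the main text.

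Given this estimate, the three bounds follow by a two-stage argument mirroring the proof of Theorem~\ref{thm:2308271915}. In the first stage, assuming the scalar bound $\|R_h(w)\psi\|_{\mathcal B^*(0)}\le C\|\psi\|_{\mathcal B(0)}$, one evaluates the commutator estimate on $\phi=R_h(w)\psi$, uses Cauchy-Schwarz on the cross terms $\Re(\gamma(H_h-w))$ and $(H_h-\bar w)\tau(H_h-w)$, and takes the supremum over dyadic $R=2^n$ to extract the bounds on $\|p_rR_h(w)\psi\|_{\mathcal B^*(0)}$ and $\langle R_h(w)\psi,\langle x\rangle^{-1}LR_h(w)\psi\rangle$. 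In the second stage, the scalar bound itself is proved by contradiction. Assuming sequences $(h_k,w_k,\psi_k)$ with $\|\psi_k\|_{\mathcal B(0)}\to0$ and $\|R_{h_k}(w_k)\psi_k\|_{\mathcal B^*(0)}=1$, I extract subsequential limits $h_k\to h_*\in[0,h_0]$ and $w_k\to w_*$, where $w_*=\lambda_*\in[1,4]$ must be real (otherwise the resolvents would be uniformly norm-bounded). A decomposition and local-compactness argument identical to \eqref{eq:18121814}, with $H$ replaced by $H_{h_k}$ and uniform-in-$h_k$ compactness of $\langle r\rangle^{-s}f(H_{h_k})$, yields strong convergence $\langle r\rangle^{-s}\phi_k\to\langle r\rangle^{-s}\phi_*$ in $\mathcal H$ for any $s>1/2$; passing the commutator estimate to the limit and letting $R\to\infty$ shows $\phi_*\in\mathcal B^*_0(0)$ and $(H_{h_*}-\lambda_*)\phi_*=0$.

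The main obstacle is ruling out non-trivial $\phi_*$ uniformly in $h_*\in[0,h_0]$. For $h_*>0$ this is the classical Rellich theorem at strictly positive energy for short-range perturbations of the Laplacian (see e.g.\ \cite{IS}); for $h_*=0$ the limiting equation is the free Helmholtz equation $(\tfrac12p^2-\lambda_*)\phi_*=0$ with $\lambda_*\ge 1$, where the Rellich-Vekua uniqueness theorem yields $\phi_*\equiv0$. In either case one derives a contradiction from the normalization via the estimate
\[
1=\|R_{h_k}(w_k)\psi_k\|_{\mathcal B^*(0)}^2\le C\bigl(\|\psi_k\|_{\mathcal B(0)}^2+\|\chi_n^{1/2}\phi_k\|^2\bigr),
\]
whose right-hand side vanishes in the limit. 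The technical delicacy throughout is that every constant in the commutator estimate and in the compactness argument must be uniform in $h\in(0,h_0]$; this follows from the uniform-in-$h$ local boundedness of $V_h$ together with the fact that the weight structure is independent of $h$.
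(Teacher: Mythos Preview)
Your proposal is correct and follows essentially the same approach as the paper: establish the uniform-in-$h$ commutator estimate (the paper's Lemma~\ref{lem:23112315b}), then run the two-step argument from the proof of Theorem~\ref{thm:2308271915}, reducing to a Rellich-type uniqueness at the limiting energy $\lambda_*\in[1,4]$. The paper is even terser than you are---it simply defers to \cite{IS} and Theorem~\ref{thm:2308271915} and omits the details---whereas you have correctly made explicit the one new wrinkle, namely that the contradiction sequence may have $h_k\to 0$, in which case the limiting equation is the free Helmholtz equation and classical Rellich--Vekua applies.
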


The proof of Proposition~\ref{prop:23112415} is essentially the same as \cite[Proposition~4.1]{IS}
and hence is very similar to Theorem~\ref{thm:2308271915}. 
We will only present key steps later on. 
Here let us verify Theorem~\ref{thm:2308271915b} by using Proposition~\ref{prop:23112415} and scaling.

\begin{proof}[Deduction of Theorem~\ref{thm:2308271915b} from Proposition~\ref{prop:23112415}]
We shall prove only \eqref{eq:23112513} since the others are treated similarly. 
For any $m\in\mathbb N_0$ and $\psi\in L^2_{\mathrm{loc}}(\mathbb R^d)$, we set 
\[
(U_m\psi)(x)=\psi(2^{-m} x). 
\]
Then by definition of the norms in $\mathcal B(m)$ and $\mathcal B^*(m)$
and change of variables, it is straightforward to see that 
\[
\|\phi\|_{\mathcal B^*(m)}
=\|U_m\phi\|_{\mathcal B^*(0)}
,\quad 
\|\psi\|_{\mathcal B(m)}
=2^{-m}\|U_m\psi\|_{\mathcal B(0)}
,
\]
so that 
\[
\|\phi\|_{\mathcal B^*(m)}
\le 
2^m\|U_mR(z)U_m^{-1}\|_{\mathcal L(\mathcal B(0),\mathcal B^*(0))}\|\psi\|_{\mathcal B(m)}
.
\]
Hence, it suffices to examine $U_mR(z)U_m^{-1}\colon \mathcal B(0)\to\mathcal B^*(0)$. 
Let us rewrite 
\[
U_mR(z)U_m^{-1}=2^{-2m}R_h (w)
\ \ \text{with }h=2^{-m},\ w=2^{-2m}z
.
\]
If we choose $m\in\mathbb N_0$ such that $2^{2m}<\mathop{\mathrm{Re}}z\le 2^{2m+2}$, 
then Proposition~\ref{prop:23112415} indeed applies, and the assertion follows. 
We are done. 
\end{proof}

\subsubsection{$h$-Resolvent estimates}

In order to prove Proposition~\ref{prop:23112415}, we employ the following estimates with
\begin{align*}
\theta
=\int_0^{r/R}(1+s)^{-1-\delta}\,\mathrm ds
=\delta^{-1}\bigl[1-(1+r/R)^{-\delta}\bigr]
;\quad \delta>0,\ \ R\ge1,
\end{align*}
cf.\ \eqref{eq:15.2.15.5.8}.

\begin{lemma}\label{lem:23112315b}
Let $h_0>0$ be sufficiently small, and let $b>0$ and $\delta\in (0,\epsilon)$. 
Then there exist $C>0$ and $n \in\mathbb N_0$ such that 
for any $R\ge 1$, $h\in(0,1]$, $w\in \Gamma_{\pm}''(b)$ and $\psi \in \mathcal B(0)$
the state $\phi=R_h(w)\psi$ satisfies
\begin{align*}
&
 \|\theta'^{1/2}\phi\|^2 
+  \|\theta'^{1/2}p_r\phi\|^2 
+ \bigl\langle \phi,\langle x\rangle^{-1}\theta L\phi\bigr\rangle
\\&\le 
C\left(
\|\phi\|_{\mathcal B^*(0)} \|\psi\|_{\mathcal B(0)} 
+ \|p_r\phi\|_{\mathcal B^*(0)} \|\psi\|_{\mathcal B(0)} 
+ R^{-1}\|\chi_n^{1/2}\phi\|^2 \right).
\end{align*}
\end{lemma}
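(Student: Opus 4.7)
The plan is to follow the commutator scheme of Proposition~\ref{prop:23112315}, with two simplifications afforded by the high-energy regime. First, on $\Gamma_{\pm}''(b)$ the real part satisfies $\lambda>1$ while the scaled potential $V_h(x)=h^2V(hx)$ is small in $L^{\infty}$ (bounded by $Ch^2\|V\|_{\infty}$), so the quantity $(2\lambda-2V_h)^{1/2}$ is bounded above and below by positive constants uniformly in $w\in\Gamma_{\pm}''(b)$ and $h\in(0,h_0]$. Consequently, all the $a^{\pm1}$ factors that cluttered Proposition~\ref{prop:23112315} become harmless multiplicative constants, and one may replace the conjugate operator there by simply $B=\Re p_r=p_r-\tfrac{\mathrm i}{2}\Delta r$. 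The weight $\Theta=\bar\chi_0\theta$ here depends on the Euclidean radius $r$ rather than on the effective time, and the bounds of Lemma~\ref{lem:theta-inequality} apply verbatim with $\tau$ replaced by $r$.

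First I would compute $\Im(B\Theta(H_h-w))$ by expanding
\[
H_h-w=\tfrac12 p_r^*p_r+\tfrac12 L+V_h-w
\]
and treating each piece as in Proposition~\ref{prop:23112315}. The term $\tfrac12\Im(B\Theta p_r^*p_r)$ yields the main radial positive piece $\tfrac14 p_r^*\theta' p_r$ modulo admissible errors; the term $\tfrac12\Im(B\Theta L)$ yields the spherical positive piece $\tfrac12 r^{-1}\Theta L$ via the identities $(\nabla r)\ell=0$ of \eqref{eq:2308273}; the potential contribution $\Im(B\Theta V_h)$ is bounded by $Ch^{2+\epsilon}\langle x\rangle^{-1-\epsilon}\bigl(\Theta+p_r^*\Theta p_r\bigr)$ via Assumption~\ref{cond:23112420} together with the scaling, so for $h_0$ small enough it is absorbed into the two positive pieces by Cauchy--Schwarz; finally $-\Im(B\Theta w)=\tfrac12\lambda(\partial_r\Theta)\mp\mu\,\Re(B\Theta)$, the first summand contributing the welcome $\tfrac12\lambda\theta'\ge\tfrac12\theta'$.

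The main obstacle is the term $\mp\mu\,\Re(B\Theta)$, which I would handle by repeating verbatim the telescoping bootstrap of \eqref{eq:23111321}--\eqref{eq:2311132132}. At each step one writes $\mp\mu\,\Re(B\Theta)\ge\mp C\Re(\mathrm i(H_h-w))-C\mu^2\bar\chi_0\theta\,\langle x\rangle^{-k}+(\text{admissible})$, and then re-absorbs the $\mu^2$-term against a fresh commutator identity, gaining an extra factor of $\langle x\rangle^{-1}$; iterating finitely many times (the number depending only on $b$) drives the residual $\mu^2$-coefficient to integrable decay, at which point it is estimated directly. The outcome is an operator inequality
\[
\Im(B\Theta(H_h-w))\ge c\theta'+c\,p_r^*\theta'p_r+c\langle x\rangle^{-1}\theta L-CR^{-1}\chi_n+\Re(\gamma(H_h-w))-C(H_h-\bar w)\,r\,(H_h-w),
\]
with $\gamma$ uniformly bounded and $n$ chosen large enough depending only on $h_0$ and $b$.

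Finally, pairing this operator inequality against $\phi=R_h(w)\psi$ and using $(H_h-w)\phi=\psi$ converts the last two terms on the right-hand side into quantities of the form $C\|\phi\|_{\mathcal B^*(0)}\|\psi\|_{\mathcal B(0)}+C\|p_r\phi\|_{\mathcal B^*(0)}\|\psi\|_{\mathcal B(0)}$ via Cauchy--Schwarz against the weight $r$, exactly as in the proof of Theorem~\ref{thm:2308271915}. The left-hand side is controlled in the same manner, yielding the asserted bound. The delicate points are the $\mu$-bootstrap and the absorption of $V_h$ through smallness in $h_0$; once these are handled, the remaining manipulations are straightforward adaptations of Proposition~\ref{prop:23112315}.
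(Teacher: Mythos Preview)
Your approach is essentially that of the paper, which simply refers back to the commutator scheme of Proposition~\ref{prop:23112315} (equivalently \cite[Proposition~4.1]{IS}) with conjugate operator $A=\Re p_r$ and the $r$-based weight $\Theta=(1-\chi(r))\theta$, observing that the cutoff removes the origin singularity and that the constants in Assumption~\ref{cond:23112420} for $V_h$ vanish as $h\to 0$.

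Two small slips are worth fixing. First, the scaling gives $\partial_r V_{1,h}\le Ch^{2-\epsilon}\langle x\rangle^{-1-\epsilon}$ and $|V_{2,h}|\le Ch^{1-\epsilon}\langle x\rangle^{-1-\epsilon}$ (use $\langle hx\rangle\ge h\langle x\rangle$), not $h^{2+\epsilon}$; this is immaterial since any positive power of $h$ suffices for absorption. Second, pairing the term $-C(H_h-\bar w)\,r\,(H_h-w)$ against $\phi=R_h(w)\psi$ yields $C\langle\psi,r\psi\rangle\le C\|\psi\|_{\mathcal B(0)}^2$, not a $\mathcal B^*$--$\mathcal B$ cross term as you claim. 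Thus your argument actually produces the inequality with an additional $\|\psi\|_{\mathcal B(0)}^2$ on the right-hand side, exactly as in Proposition~\ref{prop:23112315}; this term is absent from the lemma as stated but is harmless for the deduction of Proposition~\ref{prop:23112415}.
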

\begin{proof}
Similarly to the proof of  \cite[Proposition~4.1]{IS} or Theorem~\ref{thm:2308271915}, 
it suffices to compute and bound from below 
a distorted commutator 
\[
\mathop{\mathrm{Im}}(A\Theta (H_h-w))
;\quad 
A=\mathop{\mathrm{Re}}p_r,\ \  \Theta=(1-\chi(r))\theta, 
\]
where $\chi$ is from \eqref{eq:14.1.7.23.24}. 
This can be done in almost the same manner as for \cite[Proposition~4.1]{IS},
which is indeed much simpler than Theorem~\ref{thm:2308271915}. 
The only differences from \cite[Proposition~4.1]{IS} are that 
our $r$ is singular at the origin, and that we have an $h$ dependence. 
However, the former is handled by the cutoff  $1-\chi(r)$. 
As for the latter, note that $V_h$ satisfies Assumption~\ref{cond:23112420}, 
see also \cite[Condition~1.3]{IS}, uniformly 
in $h\in (0,1]$, and moreover, that the corresponding coefficient $C$ vanishes as $h\to +0$. 
Hence, the assertion can be verified for any small $h\in(0,1]$.
Let us omit the further details. 
\end{proof}

\begin{proof}[Proof of Proposition~\ref{prop:23112415}]
Once we obtain Lemma~\ref{lem:23112315b}, 
Proposition~\ref{prop:23112415} follows from it by 
the same manner as for Theorem~\ref{thm:2308271915}, or \cite[Theorem~1.7]{IS}. 
Here we have an $h$ dependence, but a modification is straightforward. 
We omit the details. 
\end{proof}

\end{document}